\theoremstyle{thmstyleone}%
\newtheorem{theorem}{Theorem}
\newtheorem{lemma}[theorem]{Lemma}%
\newtheorem{corollary}[theorem]{Corollary}%
\theoremstyle{thmstyletwo}%
\theoremstyle{thmstylethree}%
\newtheorem{definition}{Definition}%
\NewDocumentCommand{\mathOrText}{m}
{%
	\ensuremath{#1}\xspace%
}
\DeclareDocumentCommand{\Pr}{m o}
{
	\mathOrText{ \mathds{P}\left[ #1 \IfNoValueF{#2}{\;\middle\vert\; #2} \right]}
}
\DeclareDocumentCommand{\E}{m o}
{
	\mathOrText{ \mathds{E}\left[ #1 \IfNoValueF{#2}{\;\middle\vert\; #2} \right]}
}
\DeclareDocumentCommand{\filtration}{o}
{
	\mathOrText{ \mathcal{F}\IfNoValueF{#1}{_{#1}}}
}
\def\N{\mathds{N}}
\def\O{\mathcal{O}}
\def\opt{\textsc{OPT}}
\def\P{\mathcal{P}}
\def\opt{\text{OPT}}
\def\Q{\mathcal{Q}}
\def\B{\mathcal{B}}
\def\ora{\overrightarrow}
\def\comp{\mathbb{CC}}
\def\calR{\mathfrak{R}}
\def\calV{\mathcal{V}}
\def\chrf{(C,H,\calR,f)}
\def\sepw{\texttt{sep}_W}
\def\argSepw{\texttt{arg\hspace{0.1cm}sep}_W}
\def\gLarge{G^{>W}}
\def\pl{p_\ell}
\def\pu{p}
\def\dbe{\normalfont{DBE}\xspace}
\def\cd{\normalfont{CD}\xspace}
\def\wmax{w_{\max}}
\def\ubc{y}
\def\lbhc{x}
\def\alg2kw{\textsc{AlgCOC}\xspace}
\def\2alg2kw{\textsc{AlgCOC-2}\xspace}
\def\algVI{\textsc{AlgVI}\xspace}
\def\algWVI{\textsc{AlgWVI}\xspace}
\def\algWCOC{\textsc{AlgWCOC}\xspace}
\begin{document}

\title[Combining Crown Structures for Vulnerability Measures]{Combining Crown Structures for Vulnerability Measures}


\author[1]{\fnm{Katrin} \sur{Casel}}\email{katrin.casel@hu-berlin.de}
\equalcont{These authors contributed equally to this work.}

\author[2]{\fnm{Tobias} \sur{Friedrich}}\email{Tobias.Friedrich@hpi.de}
\equalcont{These authors contributed equally to this work.}

\author[2]{\fnm{Aikaterini} \sur{Niklanovits}}\email{Aikaterini.Niklanovits@hpi.de}
\equalcont{These authors contributed equally to this work.}

\author[2]{\fnm{Kirill} \sur{Simonov}}\email{Kirill.Simonov@hpi.de}
\equalcont{These authors contributed equally to this work.}

\author*[2]{\fnm{Ziena} \sur{Zeif}}\email{Ziena.Zeif@hpi.de}
\equalcont{These authors contributed equally to this work.}

 
\affil[1]{\orgname{Humboldt-Universität zu Berlin},\country{Germany}}

\affil[2]{\orgname{Hasso Plattner Institute, University of Potsdam},  \country{Germany}}



\abstract{Over the past decades, various metrics have emerged in graph theory to grasp the complex nature of network vulnerability.
In this paper, we study two specific measures: (weighted) vertex integrity (wVI) and (weighted) component order connectivity (wCOC).
These measures not only evaluate the number of vertices that need to be removed to decompose a graph into fragments, but also take into account the size of the largest remaining component.
The main focus of our paper is on kernelization algorithms tailored to both measures.
We capitalize on the structural attributes inherent in different crown decompositions, strategically combining them to introduce novel kernelization algorithms that advance the current state of the field.
In particular, we extend the scope of the balanced crown decomposition provided by Casel et al.~\cite{DBLP:conf/esa/Casel0INZ21} and expand the applicability of crown decomposition techniques.

In summary, we improve the vertex kernel of VI from $p^3$ to $3p^2$, and of wVI from $p^3$ to $3(p^2 + p^{1.5} \pl)$, where $\pl < p$ represents the weight of the heaviest component after removing a solution.
For wCOC we improve the vertex kernel from $\O(k^2W + kW^2)$ to $3\mu(k + \sqrt{\mu}W)$, where $\mu = \max(k,W)$.
We also give a combinatorial algorithm that provides a $2kW$ vertex kernel in fixed-parameter tractable time when parameterized by $r$, where $r \leq k$ is the size of a maximum $(W+1)$-packing. 
We further show that the algorithm computing the $2kW$ vertex kernel for COC can be transformed into a polynomial algorithm for two special cases, namely when $W=1$, which corresponds to the well-known vertex cover problem, and for claw-free graphs.
In particular, we show a new way to obtain a $2k$ vertex kernel (or to obtain a 2-approximation) for the vertex cover problem by only using crown structures.}

\keywords{Crown Decomposition, Kernelization, Vertex Integrity, Component Order Connectivity} 


\maketitle

\section{Introduction}

In the study of graph theory different scales have emerged over the past decades to capture the complex nature of network vulnerability.
While the main focus is on the connectivity of vertices and edges, subtle aspects of vulnerability such as the number of resulting components, the size distribution of the remaining components, and the disparity between them are becoming increasingly interesting~\cite{DBLP:conf/esa/BentertHK23,DBLP:journals/tcs/GimaHKKO22,DBLP:conf/walcom/GimaHKM0O24,DBLP:journals/corr/abs-2402-09971,DBLP:journals/algorithmica/GimaO24,DBLP:conf/isaac/LampisM21}.
Our focus in this study is on two specific ways to measure vulnerability: (weighted) vertex integrity and (weighted) component order connectivity.
These measures not only evaluate the number of vertices that need to be removed to decompose a graph into fragments, but also take into account the size of the largest remaining component.
Incorporating these aspects provides a more comprehensive understanding of network resilience.

Informally, \emph{vertex integrity} (VI) is a model for the balance between removing few vertices and keeping small connected parts of a graph.
More formally, given a graph $G=(V,E)$ and a number $p \in \N$, the task for VI is to find a set of vertices $S \subseteq V$ such that $|S|$ plus the size of the largest component when removing $S$ from $G$ is at most $p$.
In the vertex weighted version (wVI), the goal is bounding the total weight of the removed vertices plus the weight of the heaviest component by $p$.
This problem was introduced by Barefoot et al.~\cite{barefoot1987vulnerability} as a way to measure vulnerability of communication networks.
Recently, it has drawn the interest of the parameterized complexity community due to its status as a natural parameter that renders numerous NP-hard problems amenable to fixed parameter tractability (FPT).
This means that for these problems, solutions can be computed within a time frame represented by $f(p) \cdot n^{\O(1)}$, where $f$ is a computable function \cite{DBLP:conf/isaac/LampisM21}.
It is interesting to see how vertex integrity relates to other well-known measures of network structure.
It imposes greater constraints compared to metrics such as treedepth, treewidth, or pathwidth, as the vertex integrity of a graph serves as an upper bound for these parameters.
However, it encompasses a wider range of scenarios compared to vertex cover, where a vertex cover of a graph is an upper bound for its vertex integrity.
This makes it a key in understanding how to efficiently solve problems in the world of network analysis.

The measure \emph{component order connectivity} (COC) can be seen as the refined version of VI.
Given a graph $G=(V,E)$ and two parameters $k,W \in \N$, the goal of COC is to remove $k$ vertices such that each connected component in the resulting graph has at most $W$ vertices --- also known in the literature as the \emph{$W$-separator problem} or \emph{$\alpha$-balanced separator problem}, where $\alpha  \in (0,1)$ and $W=\alpha|V|$.
In the vertex-weighted version (wCOC), the goal is to remove vertices of total weight at most $k$ such that the weight of the heaviest remaining component is at most $W$.
An equivalent view of this problem is to search for the minimum number of vertices required to cover or hit every connected subgraph of size $W+1$.
In particular, $W=1$ corresponds to covering all edges, showing that the COC is a natural generalization of the vertex cover problem.

The focus of the paper is on kernelization algorithms tailored for both weighted and unweighted versions of VI and COC when parameterized by $p$ and $k+W$, respectively.
Kernelization algorithms can be thought of as formalized preprocessing techniques aimed to reduce insances of optimization problems.
Of particular interest in this work are crown decompositions, which are generally used as established structures for safe instance reduction --- where ``safe'' means that any optimal solution to the reduced instance can efficiently be transformed into an optimal solution of the original one.
Essentially, a crown decomposition partitions the vertex set into three disjoint components: the crown, the head, and the body.
Here, the head acts as a separator between the crown and the body.
This structural arrangement becomes useful when specific relationships between the head and the crown are required.
Such relationships ultimately enable us to shrink instances by eliminating these designated parts from the graph.
The properties of this structural layout, coupled with its existence depending on the instance size, enable the development of efficient kernelization algorithms for different problem domains.
Notably, crown decompositions have also recently found utility in approximation algorithms for graph packing and partitioning problems~\cite{DBLP:conf/esa/Casel0INZ21}.
For further exploration of crown decompositions, including their variations and applications, we recommend the comprehensive survey paper by Jacob et al.~\cite{DBLP:journals/algorithms/JacobMR23}.

Our methods take advantage of the structural characteristics found in various crown decompositions, leading to the development of new kernelization algorithms that improve the state of the art.
In essence, this work expands upon the applications of the balanced crown decomposition introduced by Casel et al.~\cite{DBLP:conf/esa/Casel0INZ21}, specifically by integrating different crown decompositions into this framework.

\paragraph*{Related Work}
Recently, vertex integrity (VI) has received considerable attention as a structural graph parameter \cite{DBLP:conf/esa/BentertHK23,DBLP:journals/tcs/GimaHKKO22,DBLP:journals/algorithmica/GimaO24,DBLP:conf/isaac/LampisM21}. Some results in the literature are stated in terms of the \emph{fracture number} of the graph~\cite{DBLP:journals/algorithmica/GanianOR21}, which is a closely related parameter that takes the maximum of the modulator size and the largest remaining component; in particular, fracture number is always within a factor of two from vertex integrity. Gima et al.~\cite{DBLP:conf/walcom/GimaHKM0O24} conducted a systematic investigation into structural parameterizations of computing the VI and wVI which was further extended by Hanaka et al.~\cite{DBLP:journals/corr/abs-2402-09971}.
Additionally, there are notable results concerning special graph classes~\cite{DBLP:journals/dam/BaggaBGLP92,clark1987computational,DBLP:journals/algorithmica/DrangeDH16,DBLP:journals/dam/KratschKM97,DBLP:journals/ijcm/LiZZ08}.
In our context, regarding related work on VI and wVI, Fellows and Stueckle presented an algorithm that solves the problem in time $\O(p^{3p} n)$~\cite{fellows1989immersion}.
This was subsequently improved by Drange et al.~\cite{DBLP:journals/algorithmica/DrangeDH16}, even for the weighted case, to $\O(p^{p+1}n)$.
In the same paper, they presented the first vertex-kernel of size $p^3$ for both VI and wVI.

Considering, COC and wCOC, it is unlikely that kernelization algorithms for these problems can be achieved by considering $k$ or $W$ alone in polynomial time.
Indeed, $W=1$ corresponds to the NP-hard vertex cover problem, which shows that $W$ (alone) is not a suitable parameter.
For the parameter $k$, the problem is $W[1]$-hard even when restricted to split graphs~\cite{DBLP:journals/algorithmica/DrangeDH16}.
These lower bounds lead to the study of parameterization by $k+W$.
The best known algorithm with respect to these parameters finds a solution in time $n^{O(1)} \cdot 2^{\O(\log(W) \cdot k)}$~\cite{DBLP:journals/algorithmica/DrangeDH16}.
Unless the exponential time hypothesis fails, the authors prove that this running time is tight in the sense that there is no algorithm that solves the problem in time $n^{\O(1)} \cdot 2^{o(\log(W) \cdot k)}$.
The best known approximation algorithm has a multiplicative gap guarantee of $\O(\log(W))$ to the optimal solution with a running time of $n^{\O(1)} \cdot 2^{\O(W)}$~\cite{DBLP:journals/corr/Lee16c}.
In~\cite{DBLP:journals/corr/Lee16c}, the authors also showed that the superpolynomial dependence on $W$ may be needed to achieve a polylogarithmic approximation.
Using this algorithm as a subroutine, the vertex integrity can be approximated within
a factor of $\O(\log(\opt))$, where $\opt$ is the vertex integrity.

Regarding kernelization algorithms, there is a sequence of results which successively improve the vertex-kernel of COC.
The first results came from Chen et al.~\cite{DBLP:journals/tcs/ChenFSWY19} and Drange et al.~\cite{DBLP:journals/algorithmica/DrangeDH16}, who provided  kernels of size in $\O(k W^3)$ and~$\O(k^2W + kW^2)$, respectively.
The result of Drange et al.~also holds for wCOC and is the only result for this case.
This was improved simultaneously by Xiao~\cite{DBLP:journals/jcss/Xiao17a} as well as by Kumar and Lokshtanov~\cite{DBLP:conf/iwpec/KumarL16} to a $\O(kW^2)$ kernel.
These works also provide the first $\O(kW)$ kernels, but with different constants and running times.
Kumar and Lokshtanov~\cite{DBLP:conf/iwpec/KumarL16} present a $2kW$ kernel in a running time of $n^{\O(W)}$ by using linear
programming (LP) methods with an exponential number of constraints.
The runtime can be improved to $2^{\O(W)} \cdot n^{\O(1)}$ as already mentioned in the book of Fomin et al.~\cite{fomin2019kernelization} (Section 6.4.2).
Roughly speaking, the idea is to use the ellipsoid method with separation
oracles to solve the linear program, where the separation oracle
uses a method called color coding to find violated constraints that makes it polynomial in $W$.
Note that if $W$ is a constant this $2kW$ kernel is a polynomial time kernel improving on some previous results.
This includes for instance the improvement of the $5k$ kernel provided by Xiao and Kou~\cite{DBLP:conf/tamc/XiaoK17} to a $4k$ kernel for the well-studied $P_2$-covering problem, where a $P_2$ is a path with 2 edges.
The first linear kernel in both parameters in polynomial time, i.e.~an $\O(kW)$ vertex kernel, is presented by Xiao~\cite{DBLP:journals/jcss/Xiao17a}, who provides a $9kW$ vertex kernel.
Finally, this was improved  by Casel et al.~\cite{DBLP:conf/esa/Casel0INZ21} to a~$3kW$ vertex kernel, which also holds for a more general setting.
Namely, to find $k$ vertices in a vertex weighted graph such that after their removal each component weighs at most~$W$.
Note that the weights of the separator, i.e.~the chosen $k$ vertices, play no role in this problem compared to wCOC.
With the exception of the $\O(k^2W + kW^2)$ vertex kernel of Drange et al.~\cite{DBLP:journals/algorithmica/DrangeDH16}, all achieved vertex kernels essentially use crown structures.

\paragraph*{Our Contribution}
The provided running times are based on an input graph $G=(V,E)$. 
We improve the vertex kernel for VI from $p^3$ to $3p^2$ in time $\O\left(\log(p)|V|^4|E|\right)$.
For wVI, we improve it to $3(p^2 + p^{1.5} \pl)$ in time $\O\left(\log(p)|V|^4|E|\right)$, where $\pl \leq p$ represents the weight of the heaviest component after removing a solution.

To better explain the results of COC and wCOC, consider the problem of a maximum $\lambda$-packing for $\lambda \in \N$.
Given a (vertex weighted) graph $G$, this problem aims to maximize the number of disjoint connected subgraphs, each of size (or weight) at least $\lambda$.
It is worth noting that the size of a maximum $(W+1)$-packing serves as a lower bound on the size of an optimal solution of COC (or wCOC), since each element in the packing must contain at least one vertex of it — in terms of linear programming, it is the dual of COC.

For wCOC we improve the vertex kernel of $\O(k^2W + kW^2)$ to $3\mu(k + \sqrt{\mu}W)$ in time $\O\left(r^2 k |V||E|\right)$, where $\mu = \max(k,W)$ and $r \leq |V|$ is the size of a maximum $(W+1)$-packing.
For the unweighted version, we provide a $2kW$ vertex kernel in an FPT-runtime of $\O(r^3|V||E| \cdot r^{\min(3r,k)})$, where $r \leq k $ is the size of a maximum $(W+1)$-packing.
Comparing this result with the state of the art, disregarding the FPT-runtime aspect, we improve upon the best-known polynomial algorithm, achieving a kernel of size $3kW$~\cite{DBLP:conf/esa/Casel0INZ21}.
A $2kW$ vertex kernel is also presented in \cite{DBLP:conf/iwpec/KumarL16}, albeit with an exponential runtime using linear programming methods, which, as mentioned, can be enhanced to an FPT-runtime regarding parameter $W$.
In contrast, our result is entirely combinatorial and has an FPT-runtime in the parameter of a maximum $(W+1)$-packing $r \leq k$.
It should be noted that, strictly speaking, the $2kW$ vertex kernel of \cite{DBLP:conf/iwpec/KumarL16} and our work cannot be considered a kernel, given that the runtime dependency is exponential with respect to the parameters $W$ and $k$, respectively. However, for the sake of simplicity, we refer to it as a kernel, with an explicit mention of the runtime dependency.
As previously stated, note that it is unlikely that an FPT-runtime will be sufficient for solving COC (or wCOC) when considering either $W$ or $k$ alone.
We further show that the algorithm computing the $2kW$ vertex kernel for COC can be transformed into a polynomial algorithm for two special cases.
The first case arises when $W=1$, i.e., for the vertex cover problem.
Here, we provide a new method for obtaining a vertex kernel of $2k$ (or obtaining a $2$-approximation) using only crown decompositions.
The second special case is for the restriction of COC to claw-free graphs.
Notably, it remains open whether there is a polynomial-time algorithm for COC on claw-free graphs, or whether the problem is NP-hard on this class; we defer this question to future work.

Regarding these special cases, until 2017, a $2k$ vertex kernel for $W=1$ was known through crown decompositions, albeit computed using both crown decompositions and linear programming. 
Previously, only a $3k$ vertex kernel was known using crown decompositions alone.
In 2018, Li and Zhu~\cite{DBLP:journals/tcs/LiZ18} provided a $2k$ vertex kernel solely based on crown structures.
They refined the classical crown decomposition, which possessed an additional property allowing the remaining vertices of the graph to be decomposed into a matching and odd cycles after exhaustively applying the corresponding reduction rule.
In contrast, our algorithm identifies the reducible structures, which must exist if the size of the input graph exceeds $2k$.
For claw-free graphs, we are not aware of previous work regarding kernels for VI or COC on this class. However, the classical result of Minty~\cite{DBLP:journals/jct/Minty80} states that one can find vertex cover of smallest weight in a claw-free graph in polynomial time.

Unfortunately, we were unable to transform the FPT-runtime kernelization algorithm into a polynomial-time algorithm in general.
Nevertheless, we believe that our insights into the structural properties of crown decompositions pave the way to achieving this goal.

\section{Preliminaries}\label{sec::prelim}
In the upcoming paragraphs, we discuss the standard terminology related to graphs and parameterized complexity. Additionally, we introduce some crown structures that are utilized throughout this paper.

\paragraph*{\textbf{Graph Terminology}}
Let $G=(V,E)$ be a graph.
We use $V(G)$ and $E(G)$ to denote $V$ and $E$, respectively.
We define the \emph{size} of a subgraph $G' \subseteq G$ as the number of its vertices, where we denote the size of $G$ by $n$.
For $v \in V(G)$ we denote its neighborhood by $N(v):=\{u\in V(G):uv\in E(G)\}$.
Let $V' \subseteq V$ be a vertex subset.
We define $G[V']$ as the induced subgraph of $V'$, $G-V' := G[V \setminus V']$ and $N(V') := \left(\bigcup_{v \in V'} N(v)\right) \setminus V'$.
For a vertex $v$ we define $G+v$ as the graph $G$ with an additional isolated vertex $v$.
Similarly, for a graph $H$ we define $G+H$ as the graph $G$ with an additional isolated graph $H$.
For sets of vertex sets $\mathcal{V} \subseteq 2^V$ we abuse notation and use $V(\mathcal{V}) = \bigcup_{Q \in \mathcal{V}} Q$ and $N(\mathcal{V}) := N(V(\calV)) = \left(\bigcup_{Q \in \mathcal{V}} N(Q)\right) \setminus V(\mathcal{V})$.
We define $\comp(G) \subset 2^V$ as the connected components of $G$ as vertex sets.
For vertex subsets $H, C \subset V$, we define a function $f \colon \comp(C) \to H$. For any subset $H' \subseteq H$, we frequently use the inverse $f^{-1}(H')$, which returns sets of vertices. Typically, however, we are interested in the union of these vertex sets, denoted $V(f^{-1}(H'))$. For readability, we may omit the vertex operator when it is clear from the context that we refer to the union.
We use the same terminology for \emph{vertex weighted graphs}, denoted by $G=(V,E,w)$ where $w$ is a weight function $w \colon V \to \N$.
Let $G=(V,E,w)$ be a vertex weighted graph.
For $V' \subseteq V$, $G' \subseteq G$ (meaning that $G'$ is a subgraph of $G$) and $\calV \subset 2^V$ we define $w(V')$, $w(G')$ and $w(G')$ as $\sum_{v \in V'} w(v)$, $\sum_{v \in V(G')} w(v)$ and $\sum_{v \in V(\calV)} w(v)$, respectively.

\paragraph*{\textbf{Parameterized Terminology}}
We use the standard terminology for parameterized complexity, which is also used, for example, in \cite{downey2012parameterized,fomin2019kernelization}.
A \emph{parameterized problem} is a decision problem with respect to certain instance parameters.
Instances of a parameterized problem are usually given as tuple $(I,k)$, where $k$ is the parameter.

If  there exists an algorithm that decides each instance $(I,k)$ of a parameterized problem $\Pi$ in time $f(k) \cdot |I|^c$, where $f$ is a computable function and $c$ is a constant, then $\Pi$  is called \emph{fixed-parameter tractable}.
We say $(I,k)$ is a \emph{yes-instance} if the answer to the decision problem is positive, otherwise we say $(I,k)$ is a \emph{no-instance}.

Of particular interest in this work are kernelizations, which can be roughly described as formalized preprocessing.
More formally, a polynomial algorithm is called a \emph{kernelization} for a parameterized problem $\Pi$  if it maps any instance $(I,k)$ of $\Pi$  to an instance $(I',k')$ of $\Pi$  such that $(I',k')$ is a yes-instance if and only if $(I,k)$ is a yes-instance, $|I'| \leq g(k)$, and $k' \leq g'(k)$ for computable functions $g,g'$.

\paragraph*{Crown Decompositions}\label{sec::prelimCrownAndPack}

We present a more general variant of crown decomposition that also captures commonly used crown decompositions or expansions (strictly speaking the relevant parts of it), which we explain in a moment.
This novel decomposition can be easily derived from existing results of~\cite{DBLP:conf/esa/Casel0INZ21}, but to the best of our knowledge has never been used in this form before (see \cref{pic::MaxFlowNetwork2} (left) for an illustration).

\begin{definition}[Demanded balanced expansion and weighted crown decomposition] 
	\label{def::demBalDec}
	For a graph $G=(A \cup B,E,w)$,
	a partition $A_1 \cup A_2$ of $A$,
	a function $f \colon \comp(G[B]) \to A$,
	demands $D=\{d_a\}_{a \in A}$ with $d_a \in \N$ for each $a \in A$
	and $y \in \N$
	the tuple $(A_1,A_2,y,f,D)$ is a demanded balanced expansion \normalfont{(DBE)} if
	\begin{enumerate}
		\item
		$w(Q) \leq \ubc$ for each $Q \in \comp(G[B])$ 
		\item 
		$f(Q) \in N(Q)$ for each $Q \in \comp(G[B])$
		\item 
		$N(f^{-1}(A_1)) \subseteq A_1$
		\item
		\label{item::bDCD-ub}
		$w(a) + w\left(f^{-1}(a)\right)
		\begin{cases}
			 > d_a - \ubc + 1 \text{ for each } a \in A_1\\
			 \leq d_a + \ubc - 1 \text{ for each } a \in A_2
		\end{cases}$
	\end{enumerate}
	To simplify the notation we introduce two further special cases of a \dbe:
	\begin{itemize}
		\item If the demands are the same for each $a \in A$, e.g.~$d_a = x$ for every $a \in A$ with $x \in \N$, then we write only the value $x$ instead of a vector $D=\{d_a\}_{a \in A}$ in a \dbe-tuple, i.e.~$(A_1,A_2,y,f,D)=(A_1,A_2,y,f,x)$.
		\item
		For $q \in \N$ we call $(C,H,f)$ a ($q,\ubc$) crown decomposition (($q,\ubc$)-\cd) if it is a $(H, \varnothing,y,f,q+y-1)$ \dbe with $H=A$ and $C = B = f^{-1}(H)$.
		(This term simplifies the reference to the reducible structure of crown $C$ and head $H$.) 
	\end{itemize}

\end{definition}

\begin{figure}
	\begin{minipage}{0.49\textwidth}
		\begin{tikzpicture}[scale=0.8,y=0.80pt, x=0.80pt, inner sep=0pt, outer sep=0pt]
			\path[draw=blue, line width=0.55mm] (0,60)--(98,60);	
			\path[draw=blue, line width=0.55mm] (0,45)--(97,60);			
			\path[draw=blue, line width=0.55mm] (0,30)--(97,59);
			\path[draw=black, line width=0.25mm] (0,30)--(98,30);
			\path[draw=red, line width=0.55mm] (0,15)--(98,30);
			\path[draw=orange, line width=0.55mm] (0,0)--(98,-28);
			\path[draw=black, line width=0.25mm] (0,-30)--(98,58);
			\path[draw=green, line width=0.55mm] (0,-30)--(97,0);
			\path[draw=black, line width=0.25mm] (0,-15)--(98,30);
			\path[draw=orange, line width=0.55mm] (0,-15)--(97,-31);

			\path[fill=black!60!, line width=0.25mm] (0,60) circle (0.09cm);
			\path[fill=black!60!, line width=0.25mm] (0,45) circle (0.09cm);
			\path[fill=black!60!, line width=0.25mm] (0,30) circle (0.09cm);
			\path[fill=black!60!, line width=0.25mm] (0,15) circle (0.09cm);
			\path[fill=black!60!, line width=0.25mm] (0,0) circle (0.09cm);
			\path[fill=black!60!, line width=0.25mm] (0,-15) circle (0.09cm);			
			\path[fill=black!60!, line width=0.25mm] (0,-30) circle (0.09cm);
			\path[fill=black, line width=0.25mm] (100,60) circle (0.09cm);
			\path[fill=black, line width=0.25mm] (100,30) circle (0.09cm);
			\path[fill=black, line width=0.25mm] (100,0) circle (0.09cm);			
			\path[fill=black, line width=0.25mm] (100,-30) circle (0.09cm);
			
			\node at (0,75) {$B$};
			\node at (100,75) {$A$};
			\node at (115,45) {{\large $\Bigg \}$}};
			\node at (130,45) {$A_1$};
			\node at (115,-15) {{\large $\Bigg \}$}};
			\node at (130,-15) {$A_2$};
			\node at (50,-50) {};
		\end{tikzpicture} \hspace*{2.4cm}
	\end{minipage}
	\begin{minipage}{0.5\textwidth}
		\includegraphics[scale=0.2]{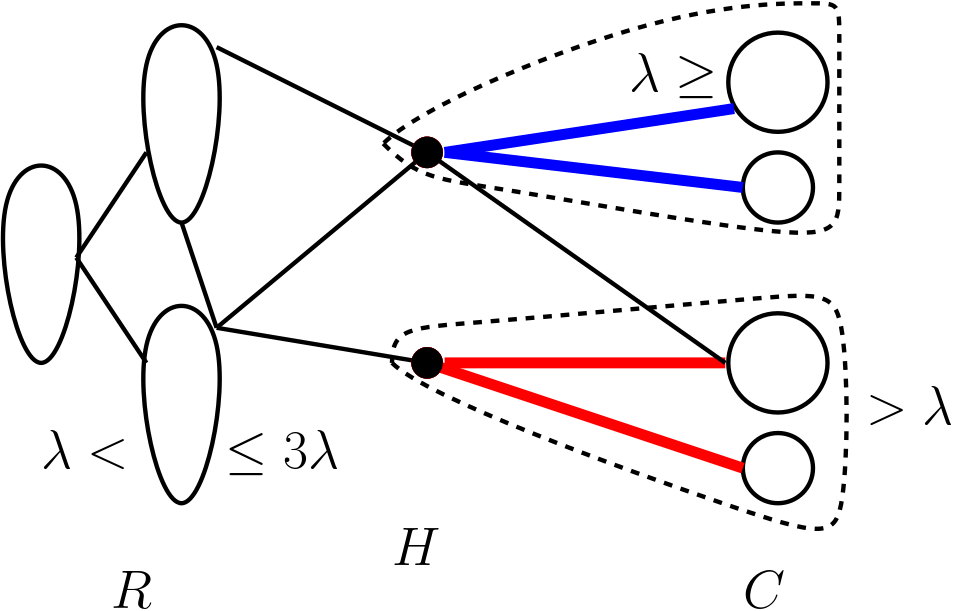}
	\end{minipage}

	\caption{\\
		\textbf{Left:} Let $A = \{a_1,a_2,a_3,a_4\}$ be ordered in a top down manner, $w(Q) = 1$ for every $Q \in \comp(G[B])$ and $w(a) = 1$ for every $a \in A$.
		Then, $(\{a_1,a_2\},\{a_3,a_4\},1,f,\{3,1,3,4\})$ is a \dbe, where the assignment $f$ are depicted with corresponding colored bold edges.\\
		\textbf{Right:} A $\lambda$-balanced crown decomposition, where the assignment $f$ are depicted with corresponding colored bold edges. The two dashed lines illustrate that $w(h) + w(f^{-1}(h)) > \lambda$ for every $h \in H$ while $w(Q) \leq \lambda$ for every $Q \in \comp(G[C])$.}
	\label{pic::MaxFlowNetwork2}
\end{figure}

Considering a graph instance $G=(V,E)$, e.g.~from VI or COC, Definition~\ref{def::demBalDec} usually describes only a subgraph where the reducible structure is sought, i.e.~$A,B$ are vertex subsets of $V$. 
It is crucial that $A$ already separates $B$ from $V \setminus B$.
Then, the vertex set $A_1 \subseteq A$ (cf. Definition~\ref{def::demBalDec}) typically represents the structure targeted for reduction and can be seen as the ``head'' of a crown decomposition.
The components within $\comp(G[B])$ assigned to $A_1$ (the separated part through $A_1$), denoted by $f^{-1}(A_1)$, form the ``crown''.
In the context of optimization problems such as the vertex cover problem, a successful reduction often involves the head being part of an optimal solution.
Subsequently, upon its removal, the crown --- separated by the head --- no longer needs to be considered.

Ideally, we wish to have $A_1=A$, but even when this is not the case the balanced part of a \dbe ensures that the elements mapped to $A_2$ are bounded, which finally allows us to bound the number of vertices (or the sum of the vertex weights) of $A_2 \cup V(f^{-1}(A_2))$.  


Consider a crown decomposition, which is a partition of the vertex set into body, head and crown, where the head separates the body from the crown.
The head and crown of the \emph{classical crown decomposition} corresponds to a $(1,1)$-CD, the \emph{$q$-expansion} for $q \in \N$ to a $(q,1)$-CD and the \emph{weighted crown decomposition} for $q \in \N$ to a $(q,\lbhc)$-CD.
The last known structure captured by Definition~\ref{def::demBalDec} is the \emph{balanced expansion} which corresponds for $\lbhc,\ubc \in \N$ to a $(A_1,A_2,y,f,x)$ \dbe.
The essential new structural property of a \dbe compared to a balanced expansion are the varying  demands for $A$.

Let $G=(A \cup B,E)$ be a graph.
For $A' \subseteq A$ we define $\B_{A'}$ as the components $Q \in \comp(G[B])$ with $N(Q) \subseteq A'$.
The following theorem, easily derived as modification of the results of~\cite{DBLP:conf/esa/Casel0INZ21}, gives the runtime to find a \dbe and the existence guarantee of $A_1$ depending on the size or weight in the graph. For the sake of completeness, we give the proof in Appendix~\ref{app:a}.

\begin{theorem}[Demanded balanced expansion]
	\label{thm::dbExapnsion}
	Let $G=\left(A \cup B, E, w \right)$ be a graph with no isolated components in $\comp(G[B])$, i.e.~every component of $\comp(G[B])$ contains at least one neighbor of $A$.
	Let $\ubc \geq \max_{Q \in \comp(G[B])} w(Q)$ and $D=\{d_a\}_{a \in A}$ demands with $d_a \in \N$ for each $a \in A$.
	A demanded balanced expansion ($A_1,A_2,y,f,D$) can be computed in $\mathcal{O}\left(|V|\,|E|\right)$ time.
	Furthermore, if there is an $A' \subseteq A$ with $w(A') + w(V(\B_{A'})) \geq \sum_{a \in A'} d_a$, then $A_1 \neq \varnothing$.	
\end{theorem}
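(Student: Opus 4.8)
The plan is to encode the assignment of the body components $\comp(G[B])$ to the vertices of $A$ as a flow problem and to read off $(A_1,A_2,y,f,D)$ from a minimum cut, generalising the balanced-expansion construction of Casel et al.~\cite{DBLP:conf/esa/Casel0INZ21} by allowing per-vertex sink capacities. Concretely, I would build a network with a source $s$, a sink $t$, an arc $s \to Q$ of capacity $w(Q)$ for every $Q \in \comp(G[B])$, an arc $Q \to a$ of infinite capacity whenever $a \in N(Q)$, and an arc $a \to t$ of capacity $\max(0, d_a - w(a))$ for every $a \in A$ (a vertex with $w(a) \geq d_a$ already satisfies the $A_1$-inequality on its own and can be placed into $A_1$ directly). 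Since this is exactly the network of~\cite{DBLP:conf/esa/Casel0INZ21} with the sole change that the uniform sink capacity is replaced by the individual value $d_a - w(a)$, the integral-flow computation, the rounding of the (forest-supported) fractional flow to an assignment $f$ with $f(Q)\in N(Q)$, and the claimed $\mathcal{O}(|V|\,|E|)$ running time all carry over; Property~1 is immediate from $y \geq \max_Q w(Q)$ and Property~2 holds by construction.

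Next I would set $A_1$ to be the intersection of $A$ with the source side $S$ of a minimum cut, $A_2 := A \setminus A_1$, and let $f^{-1}(A_1)$ be the body components lying on the source side. A finite cut crosses no infinite $Q \to a$ arc, so any source-side component has all of its $A$-neighbours on the source side; this gives $f^{-1}(A_1) \subseteq \B_{A_1}$ and hence the crown condition $N(f^{-1}(A_1)) \subseteq A_1$ (Property~3). For Property~4 I would exploit the saturation structure of the maximum flow: for $a \in A_1$ the arc $a \to t$ crosses the cut and is saturated, so the fractional component-load on $a$ equals $d_a - w(a)$, whereas for $a \in A_2$ it stays on the sink side and the load is at most $d_a - w(a)$. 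The delicate point — and the step I expect to be the main obstacle — is that turning the fractional loads into an integral $f$ moves the load of each $a$ by the weight of at most one reassigned component; since every component weighs at most $y$ and integral flows make each per-component contribution integral (so a split component reassigns at most $w(Q)-1 \leq y-1$), the load shifts by at most $y-1$ in either direction, which is what is needed to pass from the fractional (in)equalities to $> d_a - y + 1$ on $A_1$ and $\leq d_a + y - 1$ on $A_2$. Controlling this error simultaneously over all of $A$ (via an extreme, forest-supported flow) and recovering the \emph{strict} inequality on $A_1$ is the heart of the argument, and is where I would lean on~\cite{DBLP:conf/esa/Casel0INZ21}.

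Finally, for the existence guarantee I would argue entirely on the min-cut side. For a set $A_S \subseteq A$ placed on the source side, the cheapest compatible cut puts exactly the components of $\B_{A_S}$ there, giving capacity
\[
\mathrm{cut}(A_S) \;=\; w(B) - w\big(V(\B_{A_S})\big) + \sum_{a \in A_S} d_a - w(A_S).
\]
Because $\comp(G[B])$ has no isolated components, $\B_\varnothing = \varnothing$ and $\mathrm{cut}(\varnothing) = w(B)$, which is precisely the cut corresponding to $A_1 = \varnothing$. Given an $A'$ with $w(A') + w(V(\B_{A'})) \geq \sum_{a \in A'} d_a$, a direct substitution yields $\mathrm{cut}(A') - \mathrm{cut}(\varnothing) = \sum_{a \in A'} d_a - w(A') - w(V(\B_{A'})) \leq 0$, so the cut with source side containing $A'$ is no larger than the empty-source cut. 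If this inequality is strict, the minimum cut cannot be $A_1 = \varnothing$; in the equality case $A'$ itself realises a minimum cut, so choosing the inclusion-maximal source side among all minimum cuts (the union of their source sides, computable from the residual graph) forces $A_1 \supseteq A' \neq \varnothing$. Either way $A_1 \neq \varnothing$, which completes the plan.
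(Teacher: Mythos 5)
Your proposal is correct and follows essentially the same route as the paper: the paper's proof is likewise a modification of the flow network of Casel et al.~\cite{DBLP:conf/esa/Casel0INZ21} (Lemma~4 of the arXiv version~\cite{casel2020balanced}), in which the uniform sink capacity $q$ is replaced by per-vertex capacities derived from the demands $d_a$, and the fractional-to-integral rounding --- the forest-supported reassignment losing at most $y-1$ per vertex, together with the strict inequality on $A_1$ --- is inherited from that work, which is exactly the step you defer to the same source. Two small points of comparison. First, the paper gives the arcs $Q \to a$ capacity $w(Q)$ rather than infinity (immaterial, since the arc $s \to Q$ already caps the supply), and its appendix text assigns the sink arc of $a$ capacity $d_a$ while its figure shows $d_a - w(a)$; your choice $\max(0,\, d_a - w(a))$ with the preprocessing of vertices satisfying $w(a) \geq d_a$ matches the figure and is the more careful of the two variants. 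Second, for the existence guarantee the paper only asserts that the subset version is ``easily derivable'' from the original claim $w(A)+w(B) \geq q\,|A| \Rightarrow A_1 \neq \varnothing$; your explicit cut-capacity computation, the use of the no-isolated-components hypothesis to obtain $\mathrm{cut}(\varnothing) = w(B)$, and the inclusion-maximal minimum cut in the equality case supply detail the paper omits, so on this point your write-up is more complete than the paper's. One caution: the underlying construction in \cite{casel2020balanced} obtains $A_1$ via a sequence of maximum-flow computations rather than from a single minimum cut, and it is this iteration that secures the strict inequality you rightly single out as the crux --- the source side of a single (even inclusion-maximal) minimum cut does not obviously give strictness for every one of its $A$-vertices, for instance when your hypothesis holds with equality --- so your deferral to \cite{DBLP:conf/esa/Casel0INZ21} at precisely that step is both necessary and consistent with the paper's own level of rigor.
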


The next graph structure that we use for our kernelization algorithms is introduced by Casel et al.~\cite{DBLP:conf/esa/Casel0INZ21} and is a combination of a balanced connected partition and a weighted crown decomposition, which is called \emph{balanced crown decomposition}.
Formally it is defined as follows (see also Figure~\ref{pic::MaxFlowNetwork2} (right) for an illustration).  
\begin{definition}[$\lambda$-balanced crown decomposition]\label{def:lccd}
	A $\lambda$-balanced crown decomposition of a graph $G=(V,E,w)$ is a tuple  $\chrf$, where $\{H,C,R\}$ is a partition of~$V$, the set $\mathfrak{R}$ is a partition of $R$, and $f\colon \comp(G[C])\rightarrow H$, such that:
	
	\noindent
	$
	\left.\parbox{0.65\textwidth}{
		\begin{itemize}
			\item[1.] \text{$w(Q) \leq \lambda$ for each  $Q\in\comp(G[C])$,}
			\item[2.] \text{$f(Q) \in N(Q)$ for each $Q\in\comp(G[C])$,}
			\item[3.] \text{$N(C) \subseteq H$,}  
			\item[4.]  \text{$w(h) + w(f^{-1}(h)) > \lambda$ for each $h\in H$ and}	
	\end{itemize}}
	\right\}$ \hfill  \emph{($\lambda,\lambda$)-CD}
	\begin{itemize}
		\item[5.] $G[R']$ is connected and $\lambda < w(R')\leq  3\lambda$ for each $R' \in \calR$. 	
	\end{itemize}
\end{definition}
 
We use $\lambda$-BCD as an abbreviation for a $\lambda$-balanced crown decomposition. 
Looking at the original definition of a $\lambda$-BCD in~\cite{DBLP:conf/esa/Casel0INZ21} (Definition 6), we shift the $\lambda$ value by one, which allows us to change several inequalities between strict and non-strict for a clearer representation, while still keeping the definition the same.
Furthermore, $\lambda$ must be at least two in the original definition, but to simplify the understanding of the application to the problems considered in this paper, it makes more sense that a $\lambda$-BCD can also exist with $\lambda=1$.

The authors in~\cite{DBLP:conf/esa/Casel0INZ21} provide an algorithm that finds a $\lambda$-BCD in polynomial time, as given in the following.

\begin{theorem}[Balanced crown decomposition theorem, \cite{DBLP:conf/esa/Casel0INZ21}~Theorem 7]\label{theorem:lccd}
	Let $G = (V,E,w)$ be a graph and $\lambda \in \N$, 
	such that each connected component in $G$ has weight larger than $\lambda$.
	A $\lambda$-balanced crown decomposition $(C,H,\calR,f)$ of $G$ can be computed in $\mathcal{O}\left(r^2\,|V|\,|E|\right)$ time, where $r=|H|+|\calR| < |V|$ is at most the size of a maximum $(\lambda+1)$-packing.
\end{theorem}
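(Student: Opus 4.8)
The plan is to assemble the decomposition by interleaving two operations: extracting crowns with \cref{thm::dbExapnsion}, and carving out balanced connected pieces for $\calR$. The guiding observation is that, with the uniform demand $d_a = 2\lambda-1$ and the bound $\ubc = \lambda$, a demanded balanced expansion whose part $A_2$ is empty is \emph{exactly} a $(\lambda,\lambda)$-\cd: condition~\ref{item::bDCD-ub} for $A_1$ reads $w(a)+w(f^{-1}(a)) > d_a-\ubc+1 = \lambda$, which is property~4 of \cref{def:lccd}, while conditions~1--3 of \cref{def::demBalDec} translate verbatim into properties~1--3 of \cref{def:lccd}. Thus the crown part of the output will be produced purely by repeated calls to \cref{thm::dbExapnsion}.

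Concretely, I would maintain a working graph $G'$ on the undecided vertices, growing sets $H$ and $C$ and a family $\calR$ that are all empty at the start. In each round I first split $G'$ into connected \emph{body fragments} of weight at most $\lambda$ (a vertex of weight exceeding $\lambda$ is set aside immediately as a singleton head with empty crown; the remaining vertices admit such a fragmentation by greedily cutting a spanning forest), and I designate a bipartition $A \cup B$ of $V(G')$ in which $B$ is the union of these fragments and $A$ the candidate anchors, so that $\comp(G'[B])$ are exactly the fragments and each has weight at most $\ubc=\lambda$. Running \cref{thm::dbExapnsion} with demand $2\lambda-1$ returns $(A_1,A_2,\lambda,f,D)$. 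If $A_1\neq\varnothing$, I commit $A_1$ to $H$ and its assigned fragments $f^{-1}(A_1)$ to $C$; condition~3 of \cref{def::demBalDec} ($N(f^{-1}(A_1))\subseteq A_1$) guarantees that these fragments only touch the committed head, so they become genuine components of $G[C]$ with $N(C)\subseteq H$, and I recurse on $G'\setminus(A_1\cup f^{-1}(A_1))$.

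The interesting case is $A_1=\varnothing$. By the existence guarantee of \cref{thm::dbExapnsion}, this means every nonempty $A'\subseteq A$ is deficient, $w(A')+w(V(\B_{A'})) < (2\lambda-1)\,|A'|$; in particular every anchor together with the fragments attached only to it has weight below $2\lambda-1$. I would use this Hall-type condition to group the remaining fragments and anchors greedily into connected pieces: starting from a fragment and adding adjacent fragments (through their shared anchors) until the accumulated weight first exceeds $\lambda$, each finished piece has weight in $(\lambda,3\lambda]$ and is placed into $\calR$. The lower bound $>\lambda$ uses that the remaining components still have weight exceeding $\lambda$, and the upper bound $3\lambda$ is exactly what is needed to \emph{absorb} the small leftover fragments --- those of weight at most $\lambda$ that may be split off when a head was removed --- into a neighbouring piece of weight at most $2\lambda$.

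Finally, I would verify the two quantitative claims. Each committed head together with its crown, and each piece $R'\in\calR$, is a connected subgraph of weight exceeding $\lambda$, and all of these are pairwise vertex-disjoint; for $\lambda=W$ they therefore constitute a $(W+1)$-packing, which yields the stated bound on $r=|H|+|\calR|$. For the running time, each round performs one call to \cref{thm::dbExapnsion} at cost $\O(|V||E|)$ together with a clustering step touching the up-to-$r$ already committed pieces; since each round commits at least one head or one piece, there are $\O(r)$ rounds, for a total of $\O(r^2|V||E|)$. I expect the main obstacle to be the third paragraph: proving that once no crown can be extracted, the deficiency condition really does let one partition the remainder into connected pieces of weight in $(\lambda,3\lambda]$ while correctly swallowing the small fragments created by earlier head removals --- this absorption argument is precisely what forces the factor $3$ in property~5 and is the technical heart of combining the two structures.
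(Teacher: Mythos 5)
First, a point of calibration: the paper you were given never proves \cref{theorem:lccd} --- it is imported verbatim from Casel et al.~\cite{DBLP:conf/esa/Casel0INZ21}, where it is the central technical contribution, so your attempt has to be measured against that proof. Your bookkeeping is largely right: the translation of property~4 of \cref{def:lccd} into the uniform demand $d_a = 2\lambda-1$ with $\ubc=\lambda$ is exactly how a $(\lambda,\lambda)$-\cd sits inside \cref{def::demBalDec}, and the packing bound on $r=|H|+|\calR|$ follows as you say. The gap is the one you half-acknowledge in your last paragraph, and it is fatal to the one-pass structure of your algorithm: permanently committing $A_1\cup f^{-1}(A_1)$ and recursing is not sound. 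After such a commitment, the residual graph can contain components of weight at most $\lambda$ --- formed by anchors of $A_2$ together with their attached fragments --- whose entire outside neighbourhood lies in the already committed set $A_1$ (the \dbe guarantees $N(f^{-1}(A_1))\subseteq A_1$, but nothing prevents a light $A_2$-anchor from being wedged between $A_1$-vertices). Such a component is too light for $\calR$, violates the precondition of your recursion, and your absorption rule has literally nothing to absorb it into, since it is adjacent to no unfinished piece. Independently of that, your absorption step needs a neighbouring finished piece of weight at most $2\lambda$, but your own greedy can produce pieces of weight up to $3\lambda$ (weight $\le\lambda$ before the last step, plus an anchor and a fragment of weight up to $\lambda$ each), so the required slack need not exist; this is precisely the ``technical heart'' you flagged, and it is missing.

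It is worth seeing how the actual construction escapes both problems, because the escape is structural rather than a patch. Since \cref{def:lccd} imposes \emph{no upper bound} on $w(h)+w(f^{-1}(h))$, an orphaned light component adjacent to a committed head can always be folded into that head's crown (it stays a component of $G[C]$ of weight $\le\lambda$, and property~4 only improves) --- your proposal never exploits this asymmetry between the \dbe and the BCD. More importantly, the algorithm of \cite{DBLP:conf/esa/Casel0INZ21} does not commit irrevocably: it maintains the head set and the partition $\calR$ as mutable working structures, interleaving flow-based expansion computations with merge/split/reassignment operations that move vertices between $H$, $C$ and $\calR$, and a potential argument bounds the number of such rounds; this iteration count, not your $\O(r)$ rounds of permanent commitment, is where the stated $\O(r^2\,|V|\,|E|)$ running time comes from. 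So your ingredients (expansions with demand $2\lambda-1$, plus a spanning-tree argument partitioning heavy connected remainders into connected pieces of weight in $(\lambda,3\lambda]$) are the right ones, but the interleaving invariants and the absorption argument that make them compatible are absent, and with them the proof.
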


We end the preliminary section with a formal definition of the subgraph packing problem.
Given a graph $G=(V,E)$ and two parameters $r,\lambda \in \N$.
We say that $P_1, \dots, P_m \subseteq V$ is a $\lambda$-\emph{packing} if for all $i,j \in [m]$ with $i \neq j$ the induced subgraph $G[P_i]$ is connected, $|P_i| \geq \lambda$, and $P_i \cap P_j = \varnothing$.
The task is to find a $\lambda$-packing of size at least $r$.

\section{Improved Kernels for VI, wVI and wCOC}\label{sec::VIandwVI}

The vertex integrity of a graph models finding an optimal balance between removing vertices and keeping small connected parts of a graph.
As a reminder: In the formal definition, a graph $G$ and a parameter $p$ are given.
The task is to find a vertex set $S \subseteq V$ such that $|S| + \max_{Q \in \comp(G - S)} |Q|$ is at most $p$. 
In the weighted vertex integrity problem (wVI), i.e.~the vertices have weights, the aim is that $w(S) + \max_{Q \in \comp(G - S)} w(Q)$ is at most $p$.
We improve the vertex kernel of $p^3$ provided by Drange et al.~\cite{DBLP:journals/algorithmica/DrangeDH16} for  both variants VI and wVI.
To obtain such a kernel, they essentially established that vertices $v \in V$ with $w(v) + w(N(v)) > p$ belong to a solution if we have a yes-instance in hand.
A simple counting argument after removing those vertices provides a $p^3$ vertex kernel.
We improve this by employing crown decompositions, which offer valuable insights into the structural properties of vertex sets rather than focusing solely on individual vertices.
Directly applying this method to problems like COC, as seen in prior works such as \cite{DBLP:conf/esa/Casel0INZ21,DBLP:journals/tcs/ChenFSWY19,DBLP:conf/iwpec/KumarL16,DBLP:conf/tamc/XiaoK17}, is challenging because we lack prior knowledge about the size of remaining components after solution removal.
However, if we had at least a lower bound, we could theoretically safely reduce our instance accordingly.
To establish such a bound, we engage in an interplay between packings and separators, where the balanced crown decomposition (BCD) proves instrumental.
By identifying reducible structures in the input instance through embedding a \dbe into BCD after determining a suitable bound, we prove the following theorems regarding VI and wVI.

\begin{theorem}
	\label{theorem::kernelVerInt}
	The vertex integrity problem admits a vertex kernel of size $3p^2$ in time $\O\left(\log(p)|V|^4|E|\right)$.
\end{theorem}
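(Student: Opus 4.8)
The plan is to reduce VI to its refined cousin COC and then exploit the balanced crown decomposition (\cref{theorem:lccd}) together with the demanded balanced expansion of \cref{thm::dbExapnsion}. First I would observe that $(G,p)$ is a yes-instance of VI if and only if, writing $\lambda := \pl$ for the size of the largest component left by an optimal solution, there is a set $S$ with $|S| \le p - \lambda$ whose removal leaves all components of size at most $\lambda$ --- that is, a COC solution for $(p-\lambda,\lambda)$. Since $\lambda$ is unknown I would search over $\lambda \in \{0,\dots,p\}$; this is where the $\log p$ factor originates, and justifying a binary search (rather than a linear scan) rests on the monotonicity of the maximum $(\lambda+1)$-packing size in $\lambda$, which lets one detect at which scale a reducible structure can exist. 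If that monotonicity argument were to break, one could fall back to a linear scan at the cost of replacing $\log p$ by $p$.

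For a fixed scale $\lambda$ I would compute a $\lambda$-BCD $(C,H,\calR,f)$ (first stripping off the components of weight $\le\lambda$, which are already feasible). The decisive size bound comes from packings: every connected subgraph on more than $\lambda$ vertices must meet $S$, so any $(\lambda+1)$-packing has size at most $|S| \le p-\lambda$, and hence $r = |H| + |\calR| \le p - \lambda$. This gives $|H| \le p-\lambda \le p$ and, since $|R'|\le 3\lambda$ for each $R'\in\calR$, also $|R| \le 3\lambda|\calR| \le 3\lambda(p-\lambda) \le \tfrac34 p^2$. Thus $H\cup R$ already fits comfortably inside the target $3p^2$, and the entire difficulty is to shrink the crown $C$, which can be arbitrarily large.

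To reduce $C$ I would embed a \dbe into the BCD by setting $A=H$, $B=C$, $y=\lambda$, and uniform demands on the order of $\lambda$, chosen so that the head $A_1$ produced by \cref{thm::dbExapnsion} consists exactly of vertices $h$ with $w(h)+w(f^{-1}(h)) > \lambda$. Whenever $|V| > 3p^2$, the bounds above force $|C|$ to be large, so the aggregate condition $w(A') + w(V(\B_{A'})) \ge \sum_{a\in A'} d_a$ holds already for $A'=H$ (because $\B_H$ is all of $\comp(G[C])$, so the left side is $|H|+|C|$), and \cref{thm::dbExapnsion} returns a nonempty $A_1$. The safety of removing $A_1$ is the heart of the argument, proved by an exchange argument: for any solution $S$, every $h\in A_1\setminus S$ must satisfy $S\cap f^{-1}(h)\neq\varnothing$, since otherwise $\{h\}\cup f^{-1}(h)$ would be a connected, $S$-avoiding set of more than $\lambda$ vertices; as the sets $f^{-1}(h)$ are disjoint, $S$ spends at least $|A_1\setminus S|$ vertices inside the crown of $A_1$. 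Condition~3 of the \dbe, $N(f^{-1}(A_1))\subseteq A_1$, guarantees that these crown components attach only to $A_1$, so replacing $S\cap f^{-1}(A_1)$ by all of $A_1$ leaves every such component (each of size $\le\lambda$) isolated and does not increase $|S|$. Hence some optimal solution contains $A_1$, and I may delete $A_1\cup f^{-1}(A_1)$ and decrement $p$ by $|A_1|$, obtaining an equivalent smaller instance.

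Iterating this reduction until $|V|\le 3p^2$ --- and declaring a no-instance if no reduction is found while $|V|$ is still too large --- yields the kernel. Each reduction deletes at least one vertex, so there are $\O(|V|)$ of them; locating a reducible structure at the correct scale costs a binary search of $\O(\log p)$ BCD computations, each $\O(|V|^3|E|)$ by \cref{theorem:lccd}, giving the claimed $\O\!\left(\log(p)|V|^4|E|\right)$. I expect the main obstacle to be making the exchange argument fully rigorous and, in particular, aligning the demand bookkeeping of the \dbe with the guessed value $\lambda$ so that the extracted set $A_1$ is simultaneously guaranteed nonempty once $|V|>3p^2$ and provably forced into an optimal solution; by comparison the packing-based size estimates are routine. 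The same framework should then carry over to wVI, where the per-head counting must be done by weight rather than cardinality, which is precisely where the varying demands of \cref{def::demBalDec} become essential and where the extra $p^{1.5}\pl$ term in the weighted bound arises.
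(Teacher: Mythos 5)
Your overall architecture matches the paper's (strip components of size at most $\lambda$, compute a $\lambda$-BCD, test $|H|+|\calR|$ against $p$, binary-search for the crossing scale, embed a \dbe into $G[C\cup H]$, iterate), but the one step you yourself flag as the main obstacle --- the demand bookkeeping --- is resolved incorrectly, and with your choice the reduction is unsound. You set the demands of the \dbe ``on the order of $\lambda$'', so that $A_1$ consists of heads with $w(h)+w(f^{-1}(h))>\lambda$. That is exactly condition~4 of the BCD, which \emph{every} $h\in H$ already satisfies, so $A_1$ would be essentially all of $H$; worse, a head set of size only $>\lambda$ need not be hit by a VI solution. A solution $S$ of $(G,p)$ only guarantees components of size at most $p-|S|$, and since $\lambda\le\pl\le p-|S|$ can be strict, a connected set $\{h\}\cup f^{-1}(h)$ of size $\lambda+1$ may lie entirely inside one surviving component of $G-S$. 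Hence your exchange step ``every $h\in A_1\setminus S$ satisfies $S\cap f^{-1}(h)\ne\varnothing$'' fails, you cannot charge $|A_1|$ against $|S|$, and decrementing $p$ by $|A_1|$ can turn a yes-instance into a no-instance (the same conflation of VI solutions with COC-at-scale-$\lambda$ solutions also invalidates your sharper bound $|H|+|\calR|\le p-\lambda$; only $\le p$ is available, and that suffices). The paper's resolution (\cref{lemma::saveRedVerInt,lemma::CDexists}) is to take demands $d_h=p-1+\lambda$, i.e., extract a $(p,\lambda)$-CD: then each head set $\{h\}\cup f^{-1}(h)$ has size $>p$ and is hit by \emph{every} solution, while the crowns still have size $\le\lambda\le\pl$, which is what the converse direction needs. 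This demand of order $p$ (not $\lambda$) is also precisely what makes $3p^2$ the right threshold: nonemptiness of $A_1$ via \cref{thm::dbExapnsion} requires $|H|+|C|\ge(p-1+\lambda)|H|$, which $|V|\ge 3p^2$ forces through $|R|\le 3\lambda|\calR|$ and $|H|+|\calR|\le p$; with demands of order $\lambda$ neither the soundness nor the $3p^2$ accounting goes through.

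A secondary point: your justification of the $\log p$ factor via monotonicity of the maximum $(\lambda+1)$-packing size does not apply, because the algorithm tests $|H|+|\calR|$ of the \emph{computed} BCD, which is only a lower bound on the maximum packing and, as the paper explicitly notes, is not monotone in $\lambda$. No monotonicity is needed: since the $1$-BCD is large ($>p$, else reduce immediately) and the $p$-BCD is small ($\le p$, else declare no-instance by \cref{corollary::noInstVerInt}), a sign-change binary search maintaining one ``large'' and one ``small'' endpoint finds adjacent $\lambda,\lambda+1$ with the required behaviour --- this is the paper's argument. Your fallback linear scan would cost a factor of $p$ instead of $\log p$ and thus miss the stated runtime. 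Finally, note that $\lambda<\pl$ from \cref{lemma::lbVerInt} is asserted only for the instance induced on $C\cup H\cup V(\calR)$, so the algorithm must first delete the leftover components of size at most $\lambda$ (a safe step the paper performs explicitly and your sketch glosses over).
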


\begin{theorem}
	\label{theorem::kernelWeightVerInt}
	The weighted vertex integrity problem admits a vertex kernel of size $3(p^2 + p^{1.5} \pl)$ in time $\O\left(\log(p)|V|^4|E|\right)$, where $\pl$ is at most the size of the largest component after removing a solution.
\end{theorem}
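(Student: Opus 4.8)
\textbf{Proof proposal for Theorem~\ref{theorem::kernelWeightVerInt}.}

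The plan is to mirror the unweighted argument of Theorem~\ref{theorem::kernelVerInt} but track weights carefully, since the heaviest surviving component now has weight $\pl$ rather than unit contribution per vertex. First I would fix a candidate value for the solution weight and the heaviest-component weight: since we seek $w(S) + \max_Q w(Q) \leq p$, both the separator weight and $\pl$ lie in $\{1,\dots,p\}$, so I can afford to guess $\pl$ (or rather iterate over $\O(\log p)$ dyadic values of it, which is where the $\log(p)$ factor in the runtime originates). Having fixed $\pl$, every component remaining after removing an optimal $S$ has weight at most $\pl$, so I set the component-weight bound $\lambda = \pl$ and invoke the balanced crown decomposition of Theorem~\ref{theorem:lccd} with this $\lambda$, obtaining a tuple $(C,H,\calR,f)$. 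The set $\calR$ partitions its vertices into connected blocks $R'$ with $\pl < w(R') \leq 3\pl$; because each such block is connected and too heavy to survive intact, any solution of weight budget $p$ must hit every block in $\calR$, which bounds $|\calR| \leq p/1 \le p$ via the packing/separator interplay and in fact bounds the total weight charged to $\calR$.

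The heart of the argument is then to embed a demanded balanced expansion (\cref{def::demBalDec}) into the head–crown part of the BCD so as to identify a reducible head $A_1$. I would set $A = H$, $B = C$, keep the restriction of $f$, and choose demands $d_a$ tailored to the wVI budget: intuitively $d_a$ encodes how much weight vertex $a$ together with its assigned crown components may carry before $a$ is forced into every solution. Applying Theorem~\ref{thm::dbExapnsion} with $\ubc = \pl$ (legitimate since each crown component has weight at most $\lambda = \pl$) yields in $\O(|V||E|)$ time a demanded balanced expansion $(A_1,A_2,y,f,D)$, and the existence clause guarantees $A_1 \neq \varnothing$ whenever the weight inequality $w(A') + w(V(\B_{A'})) \geq \sum_{a\in A'} d_a$ holds for some $A' \subseteq A$. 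The safe reduction is then: put $A_1$ into the solution and delete $A_1$ together with its crown $f^{-1}(A_1)$, arguing via property~3 ($N(f^{-1}(A_1)) \subseteq A_1$) that the crown is separated from the rest and via the demand inequality of item~\ref{item::bDCD-ub} that replacing any cheaper partial solution by $A_1$ does not increase the total cost — this is the standard crown exchange argument, now carried out with weights.

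After exhaustively applying this reduction rule (so that no reducible $A_1$ remains, i.e.\ the weight condition of Theorem~\ref{thm::dbExapnsion} fails for every subset), I would bound the surviving instance size. The three parts contribute separately: the blocks of $\calR$ contribute $\O(p \cdot \pl)$ vertices since there are at most $\Theta(p/\pl)$ such heavy blocks each of weight $\le 3\pl$ and they must all be hit; the head $H$ has size bounded by the packing parameter $r = |H| + |\calR| \le p$; and the residual crown $C$ is controlled because, the reduction being exhausted, each head vertex's assigned weight falls below its demand, forcing $w(C) = \O(p\,\pl + p^{1.5}\pl$-type bounds$)$. Collecting these and converting weight bounds to vertex-count bounds (a vertex of weight $\geq 1$ in a yes-instance of budget $p$ means $\O(p)$ vertices per unit of permissible weight, but isolated light vertices must be handled by a preliminary reduction that merges or discards weight-$0$/low-degree vertices) yields the claimed $3(p^2 + p^{1.5}\pl)$ kernel. \emph{The main obstacle} I anticipate is getting the $p^{1.5}\pl$ term tight: this intermediate exponent strongly suggests a two-regime analysis splitting head vertices by whether their crown weight exceeds $\sqrt{p}\,\pl$, balancing the number of heavy-crown heads (few, by a weight budget argument) against the per-head crown weight of the light ones, and verifying that the exchange argument remains safe across both regimes simultaneously while the guess over $\pl$ stays consistent.
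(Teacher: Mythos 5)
Your skeleton matches the paper's (a $\lambda$-BCD with a demanded balanced expansion embedded in $G[C\cup H]$, a demand-based crown reduction, then a weight estimate when no reduction applies), but there is a genuine gap in how you handle the unknown quantity $\pl$. A kernelization cannot ``guess'' $\pl$ or iterate over dyadic candidates for it: the safety of the reduction (\cref{lemma::saveRedWeightedVerInt}, item~\ref{lemma::item1}) requires the crown-component bound $b$ to satisfy $b \leq \pl$, and if your guessed value exceeds the true $\pl$, the backward direction of the equivalence fails --- the crown components you delete may be heavier than any component an optimal solution actually leaves, so the reduction can turn a no-instance into a yes-instance. Since a wrong guess is unverifiable, you would not output a single equivalent instance, which is what a kernel must be. The paper never guesses; it \emph{certifies} lower bounds on $\pl$: a $\lambda$-BCD with $|H|+|\calR|>p$ proves $\lambda<\pl$ (\cref{lemma::lbVerInt}), and --- the new ingredient for the weighted case --- a head part $Z_1$ of weight exceeding $2p^{1.5}$ under the demands $d^Z_h = w(h)-1+(\sqrt{s}+1)\lambda$ also proves $\lambda<\pl$ (\cref{lemma::noInstOrLowerBound}, \cref{cor::wVI2}). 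The $\log(p)$ factor in the runtime comes from binary-searching for the flip point of these certified predicates (as in step~\ref{step::wVI2} of \algWVI), not from enumerating candidate values of $\pl$.

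Relatedly, the mechanism producing the $p^{1.5}\pl$ term --- which you correctly identify as the ``main obstacle'' but leave unresolved --- is precisely a \emph{second} \dbe that your proposal does not construct. The paper embeds two expansions into $G[C\cup H]$: one with demands $d^Y_h = p-2+\lambda(w(h)+1)$ whose nonempty head $Y_1$ yields the safe reduction, and one with demands $d^Z_h = w(h)-1+(\sqrt{s}+1)\lambda$ used purely for accounting. The dichotomy on $w(Z_1)$ then does all the work: if $w(Z_1)\leq 2p^{1.5}$ and $Y_1=\varnothing$, then $w(V)<3p(p+\sqrt{p}\lambda)$ (\cref{lemma::VIandWCOCEstimation}), and otherwise $\lambda<\pl$ is certified and the search continues. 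Your anticipated two-regime split is close in spirit, but the paper applies it to the head weights $w(h)$ (above versus below $\sqrt{\mu}$, inside the proof of \cref{lemma::noInstOrLowerBound}), not to crown weights. Two smaller inaccuracies: the number of blocks in $\calR$ is bounded by $p$ via the $(\lambda+1)$-packing argument (each block forces solution weight at least $1$), not by $\Theta(p/\pl)$, so $w(V(\calR))\leq 3p\pl$ for a different reason than you give; and no preliminary handling of weight-zero or low-degree vertices is needed, since $w\colon V\to\N$ already forces $|V|\leq w(V)$, making the weight bound a vertex bound directly.
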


Closely related to wVI is the weighted component order connectivity problem (wCOC).
Given  a vertex-weighted graph $G=(V,E,w)$ and two parameters $k,W \in \N$, the task is to find a vertex set $S \subseteq V$ such that $w(S) \leq k$ where each component weighs at most $W$.
The techniques employed to derive the kernel for wVI can be seamlessly applied to wCOC, thereby enhancing the current state of the art vertex kernel of $kW(k+W)+k$.
This kernel is also provided by Drange et al.~\cite{DBLP:journals/algorithmica/DrangeDH16} in a similar way as for wVI.

\begin{theorem}
	\label{theorem::wCOC}
	The weighted component order connectivity problem admits a vertex kernel of size $3\mu(k + \sqrt{\mu}W)$, where $\mu = \max(k,W)$.
	Furthermore, such a kernel can be computed in time $\O\left(r^2 k |V||E|\right)$, where $r$ is the size of a maximum $(W+1)$-packing.
\end{theorem}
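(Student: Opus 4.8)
The plan is to mirror the reduction developed for wVI, exploiting the fact that for wCOC the weight of the heaviest surviving component is capped by the input value $W$ itself, so no search over an auxiliary bound (the role played by $\pl$ in wVI) is needed. First I would clean the instance: any connected component of weight at most $W$ already meets the constraint and can be deleted without changing the answer, so afterwards every connected component has weight strictly larger than $W$, which is exactly the precondition of \cref{theorem:lccd}. I then compute a $W$-balanced crown decomposition $(C,H,\calR,f)$ in time $\bigO{r^2|V||E|}$, where $r=|H|+|\calR|$. The family $\calR$ together with the sets $\{h\}\cup V(f^{-1}(h))$ for $h\in H$ is a $(W+1)$-packing of size $r$: each such set induces a connected subgraph (since $f(Q)\in N(Q)$ for every assigned $Q$) of weight larger than $W$, and all of them are pairwise disjoint. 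Since any solution $S$ must contain at least one vertex of every packing element, $w(S)\ge r$, and hence $r\le k$ in a yes-instance; if the computed $r$ exceeds $k$ I output a constant-size no-instance. From $r\le k$ I immediately obtain $|H|\le k$, and because every $R'\in\calR$ has weight at most $3W$ with $|\calR|\le r\le k$, I get $|R|\le w(R)\le 3Wk$.

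The heart of the argument is reducing the crown $C$. I would set up a demanded balanced expansion of \cref{def::demBalDec} on the bipartition $(H,C)$ with $\ubc=W$ (valid since each $Q\in\comp(G[C])$ has weight at most $W$) and uniform demand $d_a=k+2W-1$, so that the returned head part $A_1$ consists precisely of vertices $h$ with $w(h)+w(f^{-1}(h))>d_a-\ubc+1=k+W$. The key safety statement I must establish is a \emph{forcing lemma}: every such $h$ lies in every solution. Indeed, if $h\notin S$, then every $Q\in f^{-1}(h)$ with $Q\cap S=\varnothing$ is joined to $h$ and hence lies in the component of $h$ in $G-S$, whose weight is at most $W$; consequently $S$ deletes crown vertices of total weight more than $w(h)+w(f^{-1}(h))-W>k$, contradicting $w(S)\le k$. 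Because \cref{def::demBalDec} additionally guarantees $N(f^{-1}(A_1))\subseteq A_1$, once all of $A_1$ is placed in the solution its crown $f^{-1}(A_1)$ becomes a union of isolated components of weight at most $W$ and needs no further solution vertices. This gives a safe reduction rule: delete $A_1\cup V(f^{-1}(A_1))$ and decrease $k$ by $w(A_1)$. I would apply this rule repeatedly, recomputing the decomposition, until the existence guarantee of \cref{thm::dbExapnsion} fails, i.e.\ until no $A'\subseteq H$ satisfies $w(A')+w(V(\B_{A'}))\ge\sum_{a\in A'}d_a$.

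It then remains to count and to bound the runtime. At termination, applying the failed existence condition to $A'=H$ and using $N(C)\subseteq H$ (so that $\B_H=\comp(G[C])$ and $w(V(\B_H))=w(C)$) yields $w(C)<|H|(k+2W-1)\le k(k+2W)$. Since vertex weights are positive integers, $|C|\le w(C)$, and combining this with $|H|\le k$ and $|R|\le 3Wk$ bounds the total number of vertices by $\bigO{k^2+kW}$, which lies within the claimed $3\mu(k+\sqrt{\mu}W)$ because $k^2\le\mu k$ and $kW\le\mu W\le\mu^{1.5}W$ for $\mu=\max(k,W)$. The running time is the number of reduction rounds (each deleting at least one head vertex) times the cost of one balanced crown decomposition plus one demanded balanced expansion, and accounting for the recomputations this yields the stated $\bigO{r^2k|V|^2|E|}$.

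The step I expect to be the main obstacle is making the iteration fully rigorous rather than the individual reductions. The forcing lemma must be argued against the \emph{current} budget after previous deletions, and I must verify that recomputing the $W$-BCD and the demanded balanced expansion on the shrunken instance preserves every structural invariant (in particular that $r\le k'$ continues to hold for the updated budget $k'$, and that components of weight at most $W$ created by a deletion are removed before the next round so that the BCD precondition is re-established). A second delicate point is the precise interplay provided by \cref{thm::dbExapnsion}: I need the returned $A_1$ to be simultaneously \emph{forced} (through the demand threshold) and \emph{closed} in the sense $N(f^{-1}(A_1))\subseteq A_1$ (so its crown can be discarded), and to confirm that these are exactly the two properties the demanded balanced expansion is designed to deliver, thereby licensing the deletion rule at every step.
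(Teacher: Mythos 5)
There is a genuine gap, and it sits exactly where you flagged your ``forcing lemma'': with the uniform demand $d_a = k + 2W - 1$, the threshold $w(h) + w(f^{-1}(h)) > k + W$ does \emph{not} force a solution to pay weight $w(h)$ inside $V_h = \{h\} \cup f^{-1}(h)$, because your argument conflates the weight that must be \emph{disconnected} from $h$ with the weight that must be \emph{deleted}. Deleting a single unit-weight crown vertex can disconnect an entire component of weight up to $W$ from $h$. Concretely, take $W = 10$, $k = 5$, $w(h) = 5$, and let $f^{-1}(h)$ consist of a path $Q_1$ of ten unit vertices attached to $h$ by one endpoint plus one extra unit component $Q_2$; then $w(V_h) = 16 > k + W$, so your rule puts $h$ into $A_1$ and decreases $k$ by $5$, yet a solution can spend weight $1$ in $V_h$ (cut the attachment vertex of $Q_1$, leaving a legal component of weight $9$) and use the remaining budget $4$ elsewhere. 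Adding a disjoint gadget that needs a separator of weight exactly $4$ (say four disjoint unit-weight stars $K_{1,10}$) turns a yes-instance into a no-instance after your reduction. This is precisely why the paper's demands are weight-dependent, $d_h = W - 2 + W\cdot(w(h)+1)$ (\cref{lemma::saveRedWeightedVerInt} with $a,b=W$): since one deletion removes at most $W$ weight from $h$'s component, the crown must weigh roughly $W \cdot w(h)$ before one may conclude that any solution pays at least $w(h)$ inside $V_h$ --- and even then the correct conclusion is ``cost at least $w(h)$ within $V_h$,'' not ``$h$ lies in every solution'' as you claim.

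Repairing the demands then breaks your counting, which is the second structural difference from the paper. With $d_h \approx W\, w(h)$, the contrapositive of \cref{thm::dbExapnsion} applied to $A' = H$ only yields $w(C) < \sum_{h \in H} \bigl(W\cdot w(h) + W + k\bigr)$, and since only $|H| \le k$ is guaranteed while $w(H)$ can be as large as $k\mu$, this bound is $\Theta(kW\mu)$ --- far above $3\mu(k + \sqrt{\mu}W)$. The paper closes exactly this hole with a \emph{second} \dbe with demands $d^Z_h = w(h) - 1 + (\sqrt{k}+1)W$: either the heavy heads satisfy $w(Z_1) > 2\mu^{1.5}$, in which case more than $\sqrt{k}$ disjoint sets $V_h$ each force cost more than $\sqrt{k}$, certifying a no-instance (\cref{lemma::noInstOrLowerBound}, \cref{corollary::NoInstWCOC}); or $w(Z_1) \le 2\mu^{1.5}$ and the combined accounting over $Z_1$ and $Z_2$ (\cref{lemma::VIandWCOCEstimation}, \cref{corollary::RedStructExWCOC}) bounds the total weight by $3\mu(k + \sqrt{\mu}W)$, which also justifies returning a no-instance when the first expansion produces an empty head. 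The $\sqrt{\mu}$ term in the kernel originates exactly here; that your scheme produced a bound of order $k^2 + kW$ with no $\sqrt{\mu}$ term at all is itself a warning sign that the demands were too weak to be sound. Your preprocessing, the packing argument giving $r \le k$ and $|H| \le k$, the bound $w(V(\calR)) \le 3Wk$, and the overall iteration scheme are all fine and match the paper.
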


Before we prove these theorems, we give some notations that we use in this section.
We define them for the weighted case, as the unweighted case can be viewed in the same way with unit weights.
We say that \emph{$S$ is a solution} if it satisfies $w(S) + \max_{Q \in \comp(G-S)} w(Q) \leq p$ for wVI or $w(S) \leq k$ and $\max_{Q \in \comp(G-S)} w(Q) \leq W$ for wCOC.
We denote an instance of wVI by $(G,p)$ and an instance of wCOC by $(G,k,W)$. 
We say that $(G,p)$ or $(G,k,W)$ is a \emph{yes-instance} if there is a solution, otherwise, we say that it is a \emph{no-instance}.
Let $\mathcal{S} \subseteq 2^{V(G)}$ be all solutions for $(G,p)$.
We define $\pl := \min_{S \in \mathcal{S}} \left(\max_{Q \in \comp(G-S)} w(Q)\right)$ which is the optimum lower bound  on the size of the connected components after the removal of any solution; where for no-instances we set $\pl=p$.
For all instances $(G,p)$ of wVI, we assume  $w(v) < p$ for every $v \in V$ and that $G$ contains a connected component of weight more than $p$.
Lastly, for a better readability, we will consistently assume that when computing a $\lambda$-BCD, we  disregard small components, meaning that if  $(C,H,\calR,f)$ is a $\lambda$-BCD for $G$, then $C\cup H\cup V(\calR)$ are only the vertices that are contained in some component of size (or weight) more than $\lambda$ of $G$. 

\subsection{Vertex Integrity}\label{sec::VI}

The rest of this section is dedicated to the proof of Theorem~\ref{theorem::kernelVerInt}.
 We first give the types of reductions we want to use, always with respect to some $c\leq \pl$, and then discuss how to efficiently compute an estimate of $c$ to apply these reduction rules.

\begin{lemma}
	\label{lemma::saveRedVerInt}
	Let $(G,p)$ be an instance of \normalfont{VI} where not all connected components are of size at most $p$ and let $(C,H,f)$ be a $(\pu,c)$-CD in $G$ for some  $c\leq \pl$, such that $N(C) \subseteq H$.
	Then, $(G,p)$ is a yes-instance if and only if $(G-(H \cup C)+K_c,p-|H|)$ is a yes-instance.
\end{lemma}
\begin{proof}
Let $G' = G-\{H \cup C\}+K_c$. 
First assume $(G,p)$ is a yes-instance, and let $S$ be an according solution  with $s = \max_{Q \in \comp(G - S)} |Q|$. Note that since not all components of $G$ have size at most $p$, it follows that $S\neq \emptyset$ and $s<p$.
	We claim that $S' = S \setminus (H \cup C)$ is a solution for $(G',p-|H|)$. It suffices to show that $\max_{Q \in \comp(G'-S')} |Q| \leq s$ and $|S'|\leq |S|-|H|$. More precisely, we can then conclude that $S'$ has size at most $|S'|\leq (p-|H|)-s$ and creates components of size at most $s$ when deleted from $G'$, which makes it a solution for $(G',p-|H|)$, meaning that $(G',p-|H|)$ is a yes-instance.
	
	For the claim $\max_{Q \in \comp(G'-S')} |Q| \leq s$, first note that $c\leq \pl$ by the choice of $c$, and that by the definition of $\pl$ we have $\pl\leq s$. Thus the $K_c$ component of $G'$ has size at most $s$. For the other components in $G'-S'$, suppose that $G-(H \cup C)-S'$ contains a connected component $Q'$ of size larger than $s$.
	Note that $V(Q') \subset V \setminus (H \cup C)$.
	Since we do not remove vertices from $S$ in $V \setminus (C \cup H)$ to build $S'$, we have that $Q'$ is also a component of $G-S$, contradicting $s= \max_{Q \in \comp(G-S)} |Q|$.	
		
For the claim $|S'| \leq |S| - |H|$ 	observe that  $\calV_H = \{\{h\} \cup V(f^{-1}(h))\}_{h \in H}$ forms a $\pu$-packing of size $|H|$ in $G[C \cup  H]$.
	Since $p > s$ and the elements in $\calV_H$ are pairwise disjoint, every element of $\calV_H$ contains at least one vertex of $S$. 
	Thus, we obtain $|S \cap (C \cup H)| \geq |H|$ and therefore $|S'| \leq |S| - |H|$.

For the other direction, assume  $(G',p-|H|)$ is a yes-instance and let  $S'$ be a solution for $(G',p-|H|)$, and let $\max_{Q \in \comp(G'-S')} |Q| =s'$. We claim that $S=(S'\cup H)\cap V$ is a solution for $(G,p)$. 

Components of $G-S$ are either the components of $G[C \cup H] - H$, since  $H$ separates $C$ from the rest of $G$, or components of $G'-S'$. 
By definition of the crown decomposition, all components of $G[C \cup H] - H$ are of size at most $c$, while all components of $G'-S'$ are of size at most $s'$ by the definition of $s'$. If $c\leq s'$ it follows that all components of $G-S$  have size at most $s'\leq p-|H|-|S'|\leq p-|S|$ (since $S'$ is a solution for $(G',p-|H|)$ creating components of size $s'$, meaning that $s'+|S'|\leq p-|H|$, and $|S|\leq |S'|+|H|$), and we are done. 

Assume that $s'< c$. This means that the $K_c$ component in $G'$ is shrunk to have size at most $s'$, meaning that $S'$ contains at least $c-s'$ vertices from $K_c$. This gives $|S|\leq |S'|+|H|-c+s'$.
If $G - S$ now contains a component larger than $s'$, this component originates from the crown and has a size of at most $c$. This ensures that $S$ is a solution for $(G, p)$ because 


	\begin{align*}
		|S|+c&\leq |S'|+|H|-c+s'+c\\
		&= |S'|+s'+|H| \\
		& \leq p-|H|+|H|\\
		&\leq p.	
	\end{align*}
\end{proof}

We will eventually use BCDs to find a good estimate for $c$ and to find a CD that can be removed with Lemma~\ref{lemma::saveRedVerInt}. Since we only search for $\lambda$-BCDs in components of size larger than $\lambda$, we still need another reduction rule for small components. Clearly, if an instance contains more than $p$ components of size at most $c$, deleting a smallest such component is a safe rule. With the same trick to add a $K_c$ like in Lemma~\ref{lemma::saveRedVerInt} we can go further and delete all other small components.

\begin{lemma}
	\label{lemma::small_components}
Let $(G,p)$ be an instance of \normalfont{VI}, $c<\pl$, and let $U$ be a connected component of $G$ of size at most $c$.  Then  $(G,p)$ is a yes-instance if and only if $(G-U+K_c,p)$ is a yes-instance.
\end{lemma}
\begin{proof}
If $(G, p)$ is a yes-instance, let $S$ be a corresponding solution, meaning that for $s = \max_{Q \in \comp(G - S)} |Q|$, we have $|S| + s \leq p$. By the definition of $\pl$, we know that $s \geq \pl$. Each component of $G - U + K_c - S$ is a component of $G - S$, and thus has size at most $s$, and the $K_c$, which has size $c \leq \pl \leq s$. Therefore, $S$ is also a solution for $(G - U + K_c, p)$.	

For the other direction, assume $(G-U+K_c,p)$ is a yes-instance and let $S$ be such that  $s=\max_{Q \in \comp(G-U+K_c-S)} |Q|$ satisfies $|S|+s\leq p$. We claim that $S'=S\cap V(G)$ is a solution for $(G,p)$. Since $U$ is a connected component of $G$, all components in $G-S'$, except for  $U$, are components of $G-U+K_c-S$, so they have size at most $p-|S|\leq p-|S'|$. Thus if  $|U|\leq s$ then it immediately follows that all components of $G-S'$ have size at most $p-|S'|$ meaning that $S'$ is a solution for $(G,p)$.

If $|U|> s$ then $U$ is a largest component in $G-S'$. Since $|U|\leq c$, it follows that $c>s$ which means that the $K_c$ component in $G-U+K_c$ must be shrunk by $S$. More precisely, $S$ contains at least $c-s$ of its vertices. This means $|S'|\leq |S|-c+s$ which gives $|U|+|S'|\leq c+|S|-c+s=|S|+s\leq p$. With $U$ being a largest component in $G-S'$ it again follows that $S'$ is a solution for $(G,p)$.
\end{proof}

Note that exhaustive application of  Lemma~\ref{lemma::small_components} creates an equivalent instance with only one component of size at most $c$ (isomorphic to $K_c$).

In order to use our two reduction rules, we must first determine a suitable lower bound for $\pl$. 
Additionally, we need to ensure the existence of a corresponding weighted crown decomposition when the input graph size exceeds a certain size, and that we can  find it efficiently.
This is where the BCD comes into play.
For $\lambda \in \N$ a $\lambda$-BCD $\chrf$ in $G$ can only be computed within the components of $G$ that have a size (or weight) greater than $\lambda$.
As a reminder, when we compute a $\lambda$-BCD, we disregard small components, meaning that if  $(C,H,\calR,f)$ is a $\lambda$-BCD for $G$, then $C\cup H\cup V(\calR)$ are only the vertices that are contained in some component of size more than $\lambda$ of $G$.

We specify the following lemma directly in the weighted version to also use it later.
The lemma provides a lower bound for $\pl$ from a $\lambda$-BCD $(C,H,\calR,f)$, where we essentially use that a $\lambda$-BCD is also a $(\lambda+1)$-packing of size $|H| + |\calR|$.

\begin{lemma}
	\label{lemma::lbVerInt}
	Let $(G,p)$ be an instance of \normalfont{wVI} and for $\lambda \in [p]$ let $(C,H,\calR,f)$ be a $\lambda$-\normalfont{BCD} in $G$.
	If $|H| + |\calR| > p$, then $\lambda < \pl$ for the instance $(G[C \cup H \cup V(\calR)],p)$.
\end{lemma}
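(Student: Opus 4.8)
The plan is to exploit the duality between $\lambda$-separators and $(\lambda+1)$-packings: from the $\lambda$-BCD I will extract a collection of $|H|+|\calR|$ pairwise disjoint connected subgraphs, each of weight more than $\lambda$, and then argue that a solution leaving only components of weight at most $\lambda$ would have to be too heavy to satisfy the parameter bound. Throughout I write $G' := G[C \cup H \cup V(\calR)]$ and let $\S$ be the solution set of the instance $(G',p)$, so that $\pl = \min_{S \in \S}\bigl(\max_{Q \in \comp(G'-S)} w(Q)\bigr)$.

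First I would build the packing. For each $h \in H$ set $P_h := \{h\} \cup V(f^{-1}(h))$; since $f(Q) = h \in N(Q)$ for every $Q \in f^{-1}(h)$, the vertex $h$ is adjacent to each such component, so $G'[P_h]$ is connected, and by condition~4 of \cref{def:lccd} its weight $w(h) + w(f^{-1}(h))$ exceeds $\lambda$. For each $R' \in \calR$ set $P_{R'} := R'$, which induces a connected subgraph of weight more than $\lambda$ by condition~5 (each vertex having weight at least one gives $w(R') \geq |R'| > \lambda$). These $|H| + |\calR|$ sets are pairwise disjoint: $H$, $C$ and $R$ are disjoint blocks of the partition, the preimages $f^{-1}(h)$ partition $\comp(G'[C])$ so the $P_h$ are mutually disjoint, and $\calR$ partitions $R$ so the $R'$ are mutually disjoint. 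Hence they constitute a $(\lambda+1)$-packing of $G'$ of size $|H| + |\calR|$.

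Next I would invoke the packing--separator inequality. Call $S \subseteq V(G')$ a $\lambda$-separator if every component of $G'-S$ has weight at most $\lambda$. Any connected subgraph of weight more than $\lambda$ must meet every $\lambda$-separator $S$: otherwise it avoids $S$, hence lies entirely within a single component of $G'-S$ of weight at most $\lambda$, contradicting its own weight. Applying this to the packing above and using disjointness, every $\lambda$-separator $S$ of $G'$ satisfies $|S| \geq |H| + |\calR|$.

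Finally I would close with the weight constraint. Take any $S \in \S$ and suppose for contradiction that $\max_{Q \in \comp(G'-S)} w(Q) \leq \lambda$; then $S$ is a $\lambda$-separator, so $|S| \geq |H| + |\calR| > p$, and since each vertex has weight at least one, $w(S) \geq |S| > p$, contradicting $w(S) \leq p$. Thus every solution leaves a component of weight more than $\lambda$, whence $\lambda < \pl$ (the statement is vacuously true when $\S = \varnothing$). The main point to handle carefully is the passage from the cardinality bound $|S| \geq |H| + |\calR|$ to the weight bound $w(S) \geq |S|$, which relies on all vertex weights being at least one, together with the connectivity and disjointness bookkeeping for the star sets $P_h$; the remainder is a direct unwinding of the BCD conditions.
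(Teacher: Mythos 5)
Your proof is correct and matches the paper's own argument: both construct the $(\lambda+1)$-packing $\calR \cup \{\{h\} \cup V(f^{-1}(h))\}_{h \in H}$ of size $|H|+|\calR| > p$ from the BCD, observe that any separator leaving only components of weight at most $\lambda$ must hit every packing element, and conclude via integer weights that such a separator has weight exceeding $p$. Your explicit handling of the cardinality-to-weight step ($w(S) \geq |S|$) and of the size-versus-weight reading of condition~5 is the same bookkeeping the paper relies on implicitly.
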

\begin{proof}
	Let $V'=C \cup H \cup \calR'$ and consider the graph $G[V']$ (subgraph of $G$ created by removing all components of  weight at most $\lambda$).
	The vertex sets $\calR \cup \{\{h\} \cup V(f^{-1}(h))\}_{h \in H}$ form a $(\lambda+1)$-packing of size $|\calR| + |H| > p$ in  $G[V']$.
	It follows that we need more than $p$ vertices for a vertex separator $S$ in $G[V']$, such that $\max_{Q \in \comp(G-S)} w(Q) \leq \lambda$. Thus for any $U \subset V$ with $|U| \leq p$ it follows that $\max_{Q \in \comp(G-U)} w(Q) > \lambda$.
	This implies $\lambda < \pl$ for the instance $(G[V'],p)$, since any solution has cardinality at most $p$.
\end{proof}

An additional advantage of a $(\lambda+1)$-BCD $(C,H,\calR,f)$ is that the balanced part $\calR$ can be upper bounded by $3(\lambda+1)|\calR| \leq 3\pl|\calR| < 3p|\calR|$, while a suitable $\lambda$ choice also upper bounds $|\calR|$ by $p$.
In particular, if $H=C=\varnothing$ and $|\calR| \leq p$, then we would already have an instance with a size (or weight) of at most $3p^2$.

Clearly, a yes-instance cannot have a lower bound for $\pl$ larger than $p$ as stated in the following corollary. 

\begin{corollary}
	\label{corollary::noInstVerInt}
	Let $(G,p)$ be an instance of \normalfont{wVI}.
	If for a $p$-\normalfont{BCD} $(C,H,\calR,f)$ it holds that $|H| + |\calR| > p$, then $(G,p)$ is a no-instance.
\end{corollary}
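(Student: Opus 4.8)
The plan is to derive the corollary as the special case $\lambda = p$ of \cref{lemma::lbVerInt}, supplemented by two elementary observations: that $\pl > p$ is incompatible with the reduced instance having a solution, and that any solution of $(G,p)$ restricts to a solution of the ``large-component'' subgraph $G[V']$, where I write $V' := C \cup H \cup V(\calR)$. Since I only need to propagate a \emph{no}-answer from $G[V']$ up to $G$, a single direction of this restriction argument suffices.

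First I would instantiate \cref{lemma::lbVerInt} with $\lambda = p \in [p]$. The hypothesis $|H| + |\calR| > p$ holds by assumption, so the lemma yields $p < \pl$ for the instance $(G[V'],p)$. Next I would argue that $\pl > p$ forces $(G[V'],p)$ to be a no-instance: if $S'$ were any solution for this instance, then $w(S') + \max_{Q \in \comp(G[V']-S')} w(Q) \leq p$, and since vertex weights are nonnegative this already gives $\max_{Q} w(Q) \leq p$. Thus every solution, should one exist, would leave a largest remaining component of weight at most $p$, which contradicts the fact that $\pl$, the minimum of this quantity over all solutions, exceeds $p$. Hence $(G[V'],p)$ has no solution at all.

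It then remains to transfer this to $G$ itself. By the convention stated after \cref{lemma::lbVerInt}, the set $V'$ is exactly the union of those connected components of $G$ whose weight exceeds $p$, so $G[V']$ consists of whole components of $G$ and $G - V'$ is a disjoint union of components of weight at most $p$. Suppose toward a contradiction that $(G,p)$ is a yes-instance witnessed by a solution $S$. Restricting to $V'$, the set $S \cap V'$ satisfies $w(S \cap V') \leq w(S)$, and every component of $G[V'] - (S \cap V')$ is contained in a component of $G - S$, so $\max_{Q \in \comp(G[V'] - (S \cap V'))} w(Q) \leq \max_{Q \in \comp(G-S)} w(Q)$. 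Adding these two inequalities shows that $S \cap V'$ is a solution for $(G[V'],p)$, contradicting the previous paragraph. Therefore $(G,p)$ is a no-instance.

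The only genuinely delicate points are bookkeeping ones rather than real obstacles. One is the meaning of $\pl$ when the solution set is empty; I sidestep this by phrasing the middle step contrapositively, deducing from the existence of a solution that $\pl \leq p$. The other is the direction of the restriction argument: I deliberately use only that a solution for $G$ induces one for $G[V']$, a direction that holds unconditionally because deleting the small components and restricting the separator can only decrease both $w(S)$ and the largest remaining component weight. The converse direction fails in general — an untouched small component may itself exhaust the budget $p - w(S)$ — but it is not needed for this no-instance conclusion.
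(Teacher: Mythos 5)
Your proof is correct and follows essentially the same route as the paper, which derives \cref{corollary::noInstVerInt} directly from \cref{lemma::lbVerInt} with $\lambda = p$ together with the remark that a yes-instance cannot have $\pl > p$. Your two supplementary steps --- phrasing the $\pl > p$ argument contrapositively to avoid the empty-solution-set issue, and transferring the no-answer from $(G[C \cup H \cup V(\calR)],p)$ to $(G,p)$ by restricting a hypothetical solution to the large components --- merely spell out details the paper leaves implicit (the latter via its remark that removing components of weight at most $\lambda$ is safe), and you correctly observe that only the easy direction of that equivalence is needed here.
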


The next lemma shows under which conditions we can find a $(p,\lambda)$-CD in $G$ with respect to a $\lambda$-BCD and the current graph size.

\begin{lemma}
	\label{lemma::CDexists}
	Let $(G,p)$ be an instance of \normalfont{VI} and for $\lambda \in [p]$ let $(C,H,\calR,f)$ be a $\lambda$-\normalfont{BCD}  in $G$ with $|H| + |\calR| \leq p$.
	If $|C| + |H| + |V(\calR)|  \geq p\max\{3\lambda,p+\lambda\}$, then $G'=G[C \cup H]$ contains a 
	$(p,\lambda)$-\normalfont{CD} in $G$.
	Furthermore, we can extract it from $G'$ in time $\O(|V(G')||E(G')|) \subseteq \O(|V(G)||E(G)|)$. 
\end{lemma}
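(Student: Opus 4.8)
The plan is to obtain the crown decomposition as the ``$A_1$-part'' of a demanded balanced expansion computed on the reduced graph. Concretely, I apply \cref{thm::dbExapnsion} to $G' = G[C \cup H]$, playing the role of $A$ with the $\lambda$-BCD head $H$ and the role of $B$ with the crown $C$, choosing $\ubc = \lambda$ and the uniform demand $d_a = p + \lambda - 1$ for every $a \in H$ (so that the tuple we aim for is a $(H,\varnothing,\lambda,f,p+\lambda-1)$-\dbe, which is exactly a $(p,\lambda)$-\cd in the sense of \cref{def::demBalDec}). The hypotheses of \cref{thm::dbExapnsion} hold for free from the BCD: by property~2 of \cref{def:lccd} every $Q \in \comp(G'[C]) = \comp(G[C])$ satisfies $w(Q) \le \lambda$, so $\ubc = \lambda \ge \max_Q w(Q)$; by property~3 each $Q$ has $f(Q) \in N(Q)\cap H$, so no component of $G'[C]$ is isolated; and since $N(C) \subseteq H$ (property~1), the external neighbourhood of $C$ stays inside $H = V(G')\setminus C$, i.e.\ $N_{G'}(Q) = N_G(Q)$ for every crown component. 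Hence \cref{thm::dbExapnsion} returns a \dbe $(A_1,A_2,\lambda,f',D)$ of $G'$ in time $\O(|V(G')|\,|E(G')|)$.

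Given such a \dbe, the restriction to $A_1$, namely $(f'^{-1}(A_1),A_1,f'|_{f'^{-1}(A_1)})$, is precisely a $(p,\lambda)$-\cd of $G$. Indeed, unwinding \cref{def::demBalDec} with $\ubc=\lambda$ and $d_a = p+\lambda-1$, property~4 for $a \in A_1$ becomes $w(a)+w(f'^{-1}(a)) > d_a - \ubc + 1 = p$, which is exactly the head condition of a $(p,\lambda)$-\cd; property~3 gives $N(f'^{-1}(A_1)) \subseteq A_1$, and because $N_{G'} = N_G$ on $C$ this separation is genuine in $G$; properties~1--2 are inherited. So the whole statement reduces to certifying that $A_1 \neq \varnothing$.

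For this I use the existence guarantee of \cref{thm::dbExapnsion} with the single candidate $A' = H$. As $N(C)\subseteq H$, every component of $G'[C]$ has its neighbourhood inside $H$, whence $\B_H = \comp(G'[C])$ and $V(\B_H) = C$; since we are in the unweighted setting $w \equiv 1$, the inequality to verify is
\[
  |H| + |C| \;=\; w(H)+w(V(\B_H)) \;\ge\; \sum_{a\in H} d_a \;=\; |H|\,(p+\lambda-1).
\]
To prove it I spend the two remaining BCD facts on the discarded part $R$: by property~5 each $R'\in\calR$ has $|R'|\le 3\lambda$, and $|\calR| \le p-|H|$ because $|H|+|\calR|\le p$, so $|V(\calR)| \le 3\lambda(p-|H|)$. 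Feeding this and the size hypothesis $|C|+|H|+|V(\calR)| \ge 3p^2$ into the inequality, writing $h=|H|$ and using $\lambda \le p$, a direct computation gives
\[
  |H|+|C| - |H|(p+\lambda-1) \;\ge\; 3p(p-\lambda) + h\,(1-p+2\lambda),
\]
which is nonnegative after a case split on the sign of $1-p+2\lambda$ (if it is nonnegative both summands are; otherwise substitute the extreme value $h=p$ and simplify to $p(2p-\lambda+1)\ge 0$). Thus $A_1 \neq \varnothing$ and the restriction of the previous paragraph is a genuine $(p,\lambda)$-\cd of $G$ sitting inside $G'$.

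The extraction cost is dominated by the single invocation of \cref{thm::dbExapnsion}, which yields the claimed $\O(|V(G')|\,|E(G')|) \subseteq \O(|V(G)|\,|E(G)|)$ bound, the restriction to $A_1$ being linear. The main obstacle is the third paragraph: one has to set the demand exactly to $p+\lambda-1$ so that the expansion threshold collapses to the crown requirement $>p$, and then control the contribution of the balanced remainder $R$, which is the only place where both the global size assumption $\ge 3p^2$ and the smallness $|H|+|\calR|\le p$ of the BCD are consumed; the case analysis on $1-p+2\lambda$ is precisely what pins down the constant in $3p^2$.
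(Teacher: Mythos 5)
Your proof is correct and follows essentially the same route as the paper's: both invoke \cref{thm::dbExapnsion} on $G'=G[C\cup H]$ with $A=H$, $B=C$, $\ubc=\lambda$ and uniform demands $p+\lambda-1$, and both certify $A_1\neq\varnothing$ by the same counting, playing $|V(\calR)|\le 3\lambda|\calR|$, $|H|+|\calR|\le p$ and $\lambda\le p$ against the $3p^2$ size hypothesis. The only difference is presentational: the paper verifies the existence condition $|H|+|C|\ge(p+\lambda-1)|H|$ by contradiction, whereas you establish it directly via a case split on the sign of $1-p+2\lambda$.
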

\begin{proof}
Consider Theorem~\ref{thm::dbExapnsion} on $G'=G[H\cup C]$ (so $H$ in the role of $A$ and $C$ in the role of $B$), and with $d_h=p-1+\lambda$ and $y=\lambda$. This theorem then gives us a DBE $(H_1,H_2,\lambda,f,p-1+\lambda)$, with the guarantee $H_1\neq \emptyset$ in particular if $|H|+|C|\geq \sum_{h\in H} d_h$ (case $A=A'$ in the Theorem). The part $(C_1,H_1,f)$ with $C_1=f^{-1}(H_1)$ then gives us the desired non-empty $(H_1,\emptyset,\lambda,f,p-1+\lambda)$ DBE, in other words a $(p,\lambda)$-CD in $G'$.
	Moreover, observe that then $N(C) \subseteq H$ in $G$ and $H_1 \subseteq H$, with $N(C_1) \subseteq H_1$ in $G'$ implies $N(C_1) \subseteq H_1$ in $G$ and hence, $(C_1,H_1,f)$ is also a $(p,\lambda)$-CD in $G$.  
	
Thus it remains to show $|H|+|C|\geq \sum_{h\in H} d_h= (p-1+\lambda)|H|$. Suppose $|C| + |H| < (p-1+\lambda)|H|$, then:

	\begin{align*}
		|R| + |C| + |H| &< 3\lambda|\calR| + (p-1+\lambda)|H|\\
		&\leq 3\lambda|\calR| + (p+\lambda)|H|\\
		&\leq \max\{3\lambda,p+\lambda\}(|\calR| + |H|)\\ 
		&\leq \max\{3\lambda,p+\lambda\}p,
	\end{align*}
	which is a contradiction to $|R| + |C| + |H| \geq \max\{3\lambda,p+\lambda\}p$.
	
	It follows that $|C| \geq (p-2+\lambda)|H|$ and thus $|H|+|C|\geq |H|+(p-2+\lambda)|H|= \sum_{h\in H} d_h$.  We can thus use Theorem~\ref{thm::dbExapnsion} to extract a $(p,\lambda)$-CD $(C_1,H_1,f)$ with $\emptyset \neq H_1 \subseteq H$ and $C_1 \subseteq C$ from $G'$ in time $\O(|V(G')||E(G')|)$. 
\end{proof}

If we now combine Lemma~\ref{lemma::lbVerInt} and  Lemma~\ref{lemma::CDexists}, we can finally use  Lemma~\ref{lemma::saveRedVerInt} by finding a $\lambda \in [p-1]$ such that a ($\lambda+1$)-BCD $(C,H,\calR,f)$ satisfies $|H| + |\calR| \leq p$, while a $\lambda$-BCD $(C',H',\calR',f')$ satisfies $|H'| + |\calR'| > p$, which only adds a factor of $\O(\log(p))$ to the computation cost.
Formulated more precisely: For $Q \in \comp(C)$ we have $|Q| \leq \lambda+1 \leq p_\ell$ as $\lambda < \pl$ by  Lemma~\ref{lemma::lbVerInt}.
If the size of~$G[C \cup H \cup V(\calR)]$ is large, then we can find a ($p,\lambda$)-CD for $G$ in $G[C \cup H]$ in polynomial time by  Lemma~\ref{lemma::CDexists}, which is a reducible structure by  Lemma~\ref{lemma::saveRedVerInt}.

With these facts in hand we can design a kernelization algorithm.
It takes as input an instance $(G,p)$ of VI and returns an equivalent instance with at most $3p^2$ vertices.

\paragraph*{Find reducible structures (\algVI)}
\begin{enumerate}
	\item
	\label{step::verIntNoInst} 
	Compute a $p$-BCD $(C_0,H_0,\calR_0,f_0)$.
	If $|H_0| + |\calR_0| > p$, return a trivial no-instance.   
	\item
	\label{step::verInt1BCD}
	Compute a $1$-BCD $(C_1,H_1,\calR_1,f_1)$. If $|H_1| + |\calR_1| \leq p$, set $c=1$
	\item
	\label{step::verIntBinarySearch}
	Otherwise, for $\lambda \in [p-1]$ use binary search to find a ($\lambda+1$)-BCD $(C_1,H_1,\calR_1,f_1)$ and a $\lambda$-BCD $(C_2,H_2,\calR_2,f_2)$, such that $|H_1| + |\calR_1| \leq p$ and $|H_2| + |\calR_2| > p$. Set $c=\lambda+1$.
	\item\label{step::crownred} If $|C_1| + |H_2| + |V(\calR_2)|  \geq p\max\{3c,p+c\}$, compute a $(p,c)$-CD $(H',C',f')$ in $G[C_1 \cup H_1]$ and return $(G-(H'\cup C')+K_c,p-|H'|)$.
	\item\label{step::smallred} Delete all components of size at most $c$ and add a copy of $K_c$ in $G$. 
\end{enumerate}

\begin{lemma}
For any instance $(G,p)$ with $p\geq 2$ and where not all components of $G$ have size at most $p$, exhaustive application of \algVI produces an equivalent instance  with at most $3p^2$ vertices.
\end{lemma}
\begin{proof}
We first prove that an application of \algVI is safe. 
The correctness of step~\ref{step::verIntNoInst} is implied by Corollary~\ref{corollary::noInstVerInt}.
Assuming $c$ satisfies $c\leq \pl$, the correctness of steps~\ref{step::smallred} and~\ref{step::crownred} follow directly from exhaustive application of Lemma~\ref{lemma::small_components} and Lemma~\ref{lemma::saveRedVerInt}, respectively.

For correctness, it remains to prove that $c$ satisfies $c\leq \pl$. Since the input $G$ is a graph with at least one component of size more than $p$, the choice $c=1$ in step~\ref{step::verInt1BCD} certainly satisfies $\pl\geq c$.
For  step~\ref{step::verIntBinarySearch}  we first have to show that the claimed value for $\lambda$ with a  ($\lambda+1$)-BCD $(C_1,H_1,\calR_1,f_1)$ and a $\lambda$-BCD $(C_2,H_2,\calR_2,f_2)$, such that $|H_1| + |\calR_1| \leq p$ and $|H_2| + |\calR_2| > p$ exists. For a $\lambda$-BCD computation in a binary search for the correct value for $\lambda$, we cannot ensure that $|H| + |\calR|$  increases with decreasing $\lambda$ values, because the sizes of $H$ and $\calR$ are not necessarily monotonic with respect to $\lambda$. (Note, for example, that the elements in $\calR$ have a size range from $\lambda+1$ to $3\lambda$.) For a successful binary search, however, it is sufficient to know that there is a $\lambda$ that satisfies the desired properties for step~\ref{step::verIntBinarySearch}. Since we only enter this step if the extreme cases ($p$-BCD and 1-BCD) in step~\ref{step::verIntNoInst} and~\ref{step::verInt1BCD} hold, there has to exist at least one such $\lambda$ value in-between. By Lemma~\ref{lemma::lbVerInt}, a $\lambda$-BCD $(C_2,H_2,\calR_2,f_2)$ with $|H_2| + |\calR_2| > p$ implies $\pl>\lambda$. Since $\pl$ is an integer, it follows that $\pl \geq \lambda+1=c$.

For the size of the resulting instance, first note that a trivial no-instance has constant size. Otherwise, after step~\ref{step::smallred} the only component of size at most $c$ in $G$ is one copy of $K_c$. For the larger components, step~\ref{step::crownred} reduces further, if the $(p,c)$-CD is non-empty. By Lemma~\ref{lemma::CDexists}, a non-empty $(p,c)$-CD can be found if $|C_1| + |H_2| + |V(\calR_2)|  \geq p\max\{3c,p+c\}$. Thus exhaustive application of \algVI terminates with a $c$-BCD $(C_1,H_1,\calR_1,f_1)$ that satisfies  $|C_1| + |H_2| + |V(\calR_2)|  \leq p\max\{3c,p+c\}$ with a $c\leq\pl$. The union $C_1\cup H_2\cup V(\calR_2)$ contains all vertices of the components in $G$ that have size larger than $c$, while after step~\ref{step::smallred} only one component of size at most $c$ remains. This gives a total of at most $p\max\{3c,p+c\}+c$  vertices in $G$. With $c\leq \pl <p$, if follows that the reduced graph contains at most $p(3(p-1))+p-1<3p^2$ vertices.

\end{proof}

Finally, we prove the specified running time of Theorem~\ref{theorem::kernelVerInt}, which completes the proof of Theorem~\ref{theorem::kernelVerInt}.
Note that for a yes-instance we have to call the algorithm {\algVI} at most $|V|$ times to guarantee the desired kernel.

\begin{lemma}
	\label{lemma::runtimeVI}
	Algorithm \algVI runs in time $\O(\log(p)|V|^3|E|)$.
\end{lemma}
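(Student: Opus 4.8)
The plan is to account for the cost of \algVI step by step, observing that the total is dominated by the binary search in step~\ref{step::verIntBinarySearch} while every other operation is cheaper. The algorithm uses only two primitives: (i) the computation of a $\lambda$-BCD, which by \cref{theorem:lccd} runs in $\O(r^2|V||E|)$ time where $r=|H|+|\calR|$, and (ii) the extraction of a $(p,\lambda)$-CD from an induced subgraph $G'=G[C\cup H]$, which by \cref{lemma::CDexists} runs in $\O(|V(G')||E(G')|)\subseteq\O(|V||E|)$ time. Since $r<|V|$ always holds, I would first bound the cost of a single $\lambda$-BCD computation by $\O(|V|^2\cdot|V||E|)=\O(|V|^3|E|)$, so that every individual BCD call costs $\O(|V|^3|E|)$ and every CD extraction is cheaper by a factor of $|V|^2$.

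With these primitives in hand, steps~\ref{step::verIntNoInst} and~\ref{step::verInt1BCD} each perform a constant number of BCD computations plus at most one CD extraction, and hence cost $\O(|V|^3|E|)$ in total. The bulk of the work is in step~\ref{step::verIntBinarySearch}. Here I would argue that the search for the threshold $\lambda$ can be realized with only $\O(\log p)$ BCD computations: defining $g(\lambda):=|H|+|\calR|$ for a $\lambda$-BCD, the algorithm reaches this step only when $g(p)\leq p$ (step~\ref{step::verIntNoInst} did not abort) and $g(1)>p$ (the condition in step~\ref{step::verInt1BCD} failed). I would then run a binary search maintaining the invariant $g(\mathrm{lo})>p\geq g(\mathrm{hi})$ on an interval $[\mathrm{lo},\mathrm{hi}]\subseteq[1,p]$, halving it at each step: this finds adjacent indices $\lambda=\mathrm{lo}$ and $\lambda+1=\mathrm{hi}$ with $g(\lambda)>p$ and $g(\lambda+1)\leq p$ after $\O(\log p)$ evaluations, each costing one BCD computation.

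Crucially, this invariant-based search does not require $g$ to be monotone in $\lambda$ (which, as noted after the algorithm, it need not be); it relies only on the boundary values $g(1)>p$ and $g(p)\leq p$ together with the intermediate bookkeeping of the invariant, and is guaranteed to terminate at a genuine crossover because such a crossover must exist once both boundary conditions hold. This is the one place that warrants care: a textbook binary search presupposes monotonicity, so I would be explicit that what we exploit is only the preservation of the two-sided invariant, not monotonicity, so that \emph{some} valid $\lambda$---which is all that \cref{lemma::saveRedVerInt} and \cref{lemma::lbVerInt} require---is always returned.

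Summing up, step~\ref{step::verIntBinarySearch} performs $\O(\log p)$ BCD computations at $\O(|V|^3|E|)$ each, plus one final CD extraction at $\O(|V||E|)$, giving $\O(\log(p)|V|^3|E|)$; adding the $\O(|V|^3|E|)$ contributed by the earlier steps leaves the total at $\O(\log(p)|V|^3|E|)$, as claimed. I expect the only genuinely delicate point to be the justification that a logarithmic number of BCD evaluations suffices despite the non-monotonicity of $g$; the rest is a direct substitution of the per-primitive bounds from \cref{theorem:lccd} and \cref{lemma::CDexists}.
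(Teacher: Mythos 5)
Your proof is correct and follows essentially the same route as the paper: bound each $\lambda$-BCD call by $\O(|V|^3|E|)$ via \cref{theorem:lccd} with $r<|V|$, each CD extraction by $\O(|V||E|)$ via \cref{thm::dbExapnsion}, and charge $\O(\log p)$ BCD evaluations to the search in step~\ref{step::verIntBinarySearch}. Your invariant-based crossover argument (maintaining $g(\mathrm{lo})>p\geq g(\mathrm{hi})$ without assuming monotonicity) is exactly the justification the paper gives informally in the discussion preceding the lemma, so you have merely made explicit a detail the paper's one-line proof leaves to that discussion.
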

 
\begin{proof}
	The calculation of a $\lambda$-BCD for $\lambda \in [p]$ runs in time $\O(|V|^3|E|)$ by Theorem~\ref{theorem:lccd}, since a ($\lambda+1$)-packing has size less than $|V|$. 
	The extraction of a $(p,\lambda)$-CD from it runs in time $\O(|V||E|)$ by Theorem~\ref{thm::dbExapnsion}.
	Each step consists of the calculation of these constructions with an additional time expenditure of $\O(\log(p))$ in step~\ref{step::verIntBinarySearch}.
	As a result, the algorithm \algVI runs in time $\O(\log(p) |V|^3|E|)$.
\end{proof}

\subsection{Weighted Vertex Integrity and Component Order Connectivity}\label{sec:wVIandwCOC}

In this section, we shift our focus to the weighted variants, namely wVI and wCOC.
While utilizing the packing size of the associated BCD offers a starting point for deriving a lower bound for VI, it proves insufficient for improvements in the weighted setting.
Therefore, we integrate the weight of the separator $H$ within a BCD $\chrf$ into our analysis. 
Additionally, we incorporate two distinct \dbe's into a BCD in case a reduction is not achievable within the respective setting.
This approach enables us to establish a tighter lower bound for $\pl$, estimate the remaining instance size more accurately to obtain the desired kernelization results (cf.~\cref{theorem::kernelWeightVerInt,theorem::wCOC}), or identify a no-instance.

We begin by introducing the two \dbe structures mentioned in conjunction with a BCD, which serve as preconditions for the upcoming lemmas.
Let $G=(V,E,w)$ be a vertex weighted graph, $a,s,\lambda \in \N$ and $\chrf$ a $\lambda$-BCD with $\lambda \in [a]$ and $|H| + |\calR| \leq s$, where $\max_{v \in V} w(v) \leq \max(a,s) =: \mu$.
Let $D^Y = \{d^Y_h\}_{h \in H}$ and $D^Z = \{d^Z_h\}_{h \in H}$ be demands, where $d^Y_h = a-2 + \lambda \cdot (w(h)+1)$ and $d^Z_h = w(h)-1 + (\sqrt{s}+1) \lambda$ for each $h \in H$.
Let $(Y_1,Y_2,\lambda,f_Y,D^Y)$ and $(Z_1,Z_2,\lambda,f_Z,D^Z)$ be \dbe's in $G[C \cup H]$ with $Y_1,Y_2$ as well as $Z_1,Z_2$ are partitions of $H$ and $f_Y^{-1}(Y_1),f_Y^{-1}(Y_2)$ as well as $f_Z^{-1}(Z_1),f_Z^{-1}(Z_2)$ are partitions of $\comp(G[C])$.

The concept of two \dbe's embedded in a BCD can be understood as follows: With an appropriate choice of parameters $a$, $s$, and $\lambda$, the substructures $Y_1$ and $f_Y^{-1}(Y_1)$ within the input graph of a wVI or wCOC instance form a reducible structure.
Note that if $Y_1 \neq \varnothing$ (or $Z_1 \neq \varnothing$), these sets act as separators, isolating $f_Y^{-1}(Y_1)$ (or $f_Z^{-1}(Z_1)$) from the rest of the graph, as $H$ separates $C$ from the rest. However, if $Y_1 = \varnothing$, attention shifts to $Z_1$. If the weight $w(Z_1)$ exceeds a certain threshold, we can analyze the heaviest remaining component after removing a separator with bounded weight.
If both $Y_1 = \varnothing$ and $w(Z_1)$ is bounded, we are able to bound the entire graph, or more precisely, all parts of the underlying $\lambda$-BCD $\chrf$.

Next, we present two crucial lemmas that serve as the foundation for designing the algorithms needed to prove \cref{theorem::kernelWeightVerInt,theorem::wCOC}. The first lemma aims to estimate the instance size.

\begin{lemma}
	\label{lemma::VIandWCOCEstimation}
	Let $w(Z_1) \leq 2 \mu^{1.5}$.
	If $Y_1 = \varnothing$, then $w(\calR) + w(C \cup H) < 3\mu(s + \sqrt{\mu}\lambda)$.
\end{lemma}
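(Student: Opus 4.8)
The plan is to bound the three pieces of $w(V)=w(C)+w(H)+w(V(\calR))$ separately, where $V=C\cup H\cup V(\calR)$ is the partition coming from the BCD (small components having been discarded). The term $w(V(\calR))$ is immediate from property~5 of the $\lambda$-BCD: each $R'\in\calR$ has $w(R')\le 3\lambda$, so $w(V(\calR))\le 3\lambda\,|\calR|$. For $w(C)$ I would use that the assignment $f_Z$ of the $Z$-expansion is defined on all of $\comp(G[C])$, which splits $C$ into the crown $f_Z^{-1}(Z_1)$ and the crown $f_Z^{-1}(Z_2)$, and bound each separately. For $w(H)=w(Z_1)+w(Z_2)$ I would use the hypothesis $w(Z_1)\le 2\mu^{1.5}$ together with the trivial estimate $w(Z_2)\le\mu\,|Z_2|$, since every vertex weighs at most $\mu$.

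The crux, and the step I expect to be the main obstacle, is bounding the $Z_1$-crown $w(f_Z^{-1}(Z_1))$, because condition~4 of the $Z$-expansion only \emph{lower}-bounds the head-plus-crown weight on $Z_1$. I would connect the two expansions through condition~3 of the $Z$-expansion, $N(f_Z^{-1}(Z_1))\subseteq Z_1$: each component assigned by $f_Z$ to $Z_1$ has all its neighbours in $Z_1$ and hence belongs to $\B_{Z_1}$ in the sense used for the $Y$-expansion, so $f_Z^{-1}(Z_1)\subseteq\B_{Z_1}$ and $w(V(\B_{Z_1}))\ge w(f_Z^{-1}(Z_1))$. Since $Y_1=\varnothing$, the contrapositive of the existence guarantee of \cref{thm::dbExapnsion} for the demands $D^Y$ says that no $A'\subseteq H$ is over-demanded; applying it to $A'=Z_1$ gives $w(Z_1)+w(f_Z^{-1}(Z_1))\le w(Z_1)+w(V(\B_{Z_1}))<\sum_{h\in Z_1}d^Y_h=|Z_1|(a+\lambda-2)+\lambda\,w(Z_1)$. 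Rearranging yields $w(f_Z^{-1}(Z_1))<|Z_1|(a+\lambda-2)+(\lambda-1)\,w(Z_1)$. This is precisely where $w(Z_1)\le 2\mu^{1.5}$ enters: combined with the $w(Z_1)$ coming from $w(H)$, the weight-dependent part contributes only $\lambda\,w(Z_1)\le 2\mu^{1.5}\lambda$.

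Finally I would dispatch $Z_2$ and assemble. Condition~4 of the $Z$-expansion with $d^Z_h=w(h)-1+(\sqrt{s}+1)\lambda$ gives $w(f_Z^{-1}(h))\le(\sqrt{s}+2)\lambda-2$ for $h\in Z_2$, hence $w(f_Z^{-1}(Z_2))\le|Z_2|\big((\sqrt{s}+2)\lambda-2\big)$. Summing all contributions and using $\lambda\le a\le\mu$, $s\le\mu$, and $|Z_1|+|Z_2|+|\calR|=|H|+|\calR|\le s$, I would sort the terms into two groups. The ``$\mu^{1.5}\lambda$-type'' terms are $\lambda\,w(Z_1)\le 2\mu^{1.5}\lambda$ and $|Z_2|\sqrt{s}\,\lambda\le s^{1.5}\lambda\le\mu^{1.5}\lambda$, summing to $3\mu^{1.5}\lambda$; the remaining terms $|Z_1|(a+\lambda-2)\le 2\mu|Z_1|$, $2|Z_2|\lambda\le 2\mu|Z_2|$, $w(Z_2)\le\mu|Z_2|$, and $3\lambda|\calR|\le 3\mu|\calR|$ sum to at most $3\mu(|Z_1|+|Z_2|+|\calR|)\le 3\mu s$. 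Adding the two groups gives $w(V)<3\mu s+3\mu^{1.5}\lambda=3\mu(s+\sqrt{\mu}\lambda)$, the strictness being inherited from the strict $Z_1$-crown bound.
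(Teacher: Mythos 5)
Your overall architecture and final arithmetic coincide with the paper's: the bound $w(V(\calR))\le 3\lambda|\calR|$ from property~5 of the BCD, the per-vertex bound $w(h)+w(f_Z^{-1}(h))\le d^Z_h+\lambda-1$ for $h\in Z_2$, the hypothesis $w(Z_1)\le 2\mu^{1.5}$ for the head, and the sorting of terms into a $3\mu^{1.5}\lambda$ group and a $3\mu(|Z_1|+|Z_2|+|\calR|)\le 3\mu s$ group; I checked the assembly and it closes correctly (your per-head coefficient $a+\lambda-2$ is even slightly tighter than the paper's).

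The one step that over-reaches is precisely the one you flagged as the crux: bounding the $Z_1$-crown by the \emph{contrapositive of the existence guarantee} of \cref{thm::dbExapnsion}. The lemma's setup only assumes that $(Y_1,Y_2,\lambda,f_Y,D^Y)$ is a \dbe in the sense of \cref{def::demBalDec}, whereas the ``furthermore'' clause of \cref{thm::dbExapnsion} is a property of the expansion \emph{computed} by that theorem's algorithm, not of every tuple satisfying the definition. Indeed, condition~4 grants each $A_2$-vertex a slack of $y-1$, so $Y_1=\varnothing$ is consistent with $w(Z_1)+w(V(\B_{Z_1}))\ge\sum_{h\in Z_1}d^Y_h$: for instance, a single head vertex $a$ with $w(a)=1$ and demand $d_a=2$, together with one neighbouring component of weight $2=y$ assigned to it, is a valid \dbe with $A_1=\varnothing$, yet $w(a)+w\left(V(\B_{\{a\}})\right)=3>d_a$. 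The paper's proof avoids this by arguing per vertex: $Y_1=\varnothing$ forces $Y_2=H\supseteq Z_1$, and since every $Q\in f_Z^{-1}(Z_1)$ has $N(Q)\subseteq Z_1$ (condition~3 of the $Z$-expansion), condition~2 of the $Y$-expansion forces $f_Y(Q)\in Z_1$, hence $f_Z^{-1}(Z_1)\subseteq f_Y^{-1}(Z_1)$ and $w(Z_1)+w(f_Z^{-1}(Z_1))\le\sum_{h\in Z_1}\left(d^Y_h+\lambda-1\right)$. Substituting this definitional bound repairs your proof with no change to the conclusion: the per-head coefficient grows from $a+\lambda-2$ to $a+2\lambda-3\le 3\mu$, which is still absorbed into the group $3\mu(|Z_1|+|Z_2|+|\calR|)\le 3\mu s$, giving $w(V)<3\mu(s+\sqrt{\mu}\lambda)$ exactly as in the paper. (Your argument as written is sound whenever the $Y$-expansion is the one produced by \cref{thm::dbExapnsion} --- which is how the lemma is invoked in the algorithms \algWCOC and \algWVI --- but the lemma itself is stated for arbitrary \dbe's, so the definitional route is the one that proves it.)
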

\begin{proof}
	Let $C_1 = V(f_Z^{-1}(Z_1))$ and $C_2 = V(f_Z^{-1}(Z_2))$.  
	We will use the demands $D^Y$ and $D^Z$ in the corresponding demanded balanced expansion to upper bound $w(Z_1 \cup C_1)$ and $w(Z_2 \cup C_2)$, respectively.  
	Note that $C = C_1 \cup C_2$ and $H = Z_1 \cup Z_2$.  
	Moreover, we use the properties of a $\lambda$-BCD to bound the weight of $R = \bigcup_{V' \in \calR} V'$ by $3 |\calR| \lambda$, since $w(V') \leq 3 \lambda$ for every $V' \in \calR$.
	
	First, observe that $f_Z^{-1}(Z_1) \subseteq f_Y^{-1}(Z_1)$ since the components $f_Z^{-1}(Z_1)$ are separated by $Z_1$ and must be assigned to $Z_1$ in $f_Y$.  
	Note that $Y_2 = H$ since $Y_1 = \varnothing$, which implies that $Z_1 \subseteq Y_2$.  
	These facts allow us to bound $w(Z_1 \cup C_1)$ by the demands $D_Y$.  
	As a result, we obtain

	\begin{align*}
		w(Z_1 \cup C_1) &= \sum_{h \in Z_1} \left(w(h) + w(f_Z^{-1}(h))\right)\\
		&\leq \sum_{h \in Z_1} \left(w(h) + w(f_Y^{-1}(h))\right)\\
		&\leq \sum_{h \in Z_1} \left(d^Y_h + \lambda -1\right)\\
		&\leq \sum_{h \in Z_1} \left(a-2 + \lambda \cdot (w(h)+1)+\lambda-1\right)\\
		&< \sum_{h \in Z_1} \left(a + 2\lambda + w(h) \cdot \lambda\right).
	\end{align*}
	Furthermore, using the demands $D_Z$ we can bound $w(Z_2 \cup C_2)$, since for each $h \in Z_2$ we have
	$$w(h) + w(f_Z^{-1}(h)) \leq d_h^Z + \lambda-1 \leq  w(h)-1 + (\sqrt{s}+1) \lambda + \lambda-1 < w(h) + 2\lambda + \sqrt{s} \lambda.$$ 
	Now a straight forward calculation provides the proof:
	\begin{align*}
	 w(\calR) + w(C \cup H) &\leq 3 |\calR| \lambda + w(C_1 \cup Z_1) + w(C_2 \cup Z_2)\\
		&< 3 |\calR| \lambda + \sum_{h \in Z_1} (a + 2\lambda + w(h) \cdot \lambda) + \sum_{h \in Z_2} (w(h) + 2\lambda + \sqrt{s} \lambda)\\
		&=3 |\calR| \lambda + |Z_1|(a+2\lambda) + |Z_2|(\sqrt{s}\lambda+2\lambda) + \lambda \cdot \sum_{h \in Z_1} w(h) + \sum_{h \in Z_2} w(h)\\
		&\leq 3 |\calR| \lambda + (|Z_1| + |Z_2|)2\lambda + |Z_1|a + |Z_2| \sqrt{s}\lambda + w(Z_1) \cdot \lambda  + w(Z_2)\\
		&< 3 |\calR| a + 2|H| a + |Z_1| a +  \mu^{1.5}\lambda + 2 \mu^{1.5} \lambda + |Z_2|\mu\\
		&\leq 3 |\calR| \mu + 2|H| \mu + (|Z_1| + |Z_2|)\mu + 3 \mu^{1.5} \lambda\\
		&\leq 3\mu(|\calR| + |H|) + 3 \mu^{1.5} \lambda\\ 
		&\leq 3\mu(s + \sqrt{\mu}\lambda)
	\end{align*}
\end{proof}

With respect to the weight of $Z_1$, the second lemma is relevant to \cref{lemma::VIandWCOCEstimation} and helps to establish a more precise lower bound for $\pl$, or to identify instances where no feasible solutions exist.

\begin{lemma}
	\label{lemma::noInstOrLowerBound}
	If $w(Z_1) > 2 \mu^{1.5}$, then there is no separator $S$ such that $w(S) \leq s$ and $\max_{Q \in \comp(G-S)} w(Q) \leq \lambda$.
\end{lemma}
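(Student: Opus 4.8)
The plan is to argue by contradiction: I assume some vertex set $S$ satisfies both $w(S) \leq s$ and $\max_{Q \in \comp(G-S)} w(Q) \leq \lambda$, and I derive $w(S) > s$. The entire argument localizes to the crown $f_Z^{-1}(Z_1)$ together with its head $Z_1$. First I would extract the quantitative content of the $Z$-expansion. Since $(Z_1,Z_2,\lambda,f_Z,D^Z)$ is a \dbe with upper bound $\lambda$ and $d^Z_h = w(h)-1+(\sqrt{s}+1)\lambda$, the defining inequality of \cref{def::demBalDec} (\cref{item::bDCD-ub}) applied to each $h \in Z_1$ yields $w(h) + w(f_Z^{-1}(h)) > d^Z_h - \lambda + 1 = w(h) + \sqrt{s}\,\lambda$, hence $w(f_Z^{-1}(h)) > \sqrt{s}\,\lambda$. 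As every $Q \in \comp(G[C])$ has $w(Q) \leq \lambda$ by the $\lambda$-BCD, the crown $f_Z^{-1}(h)$ consists of more than $\sqrt{s}$ components, each of weight at most $\lambda$.

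Next comes the separation step. The defining property $N(C) \subseteq H$ of the $\lambda$-BCD together with the condition $N(f_Z^{-1}(Z_1)) \subseteq Z_1$ from \cref{def::demBalDec} shows that in $G$ the set $Z_1$ separates $R_1 := Z_1 \cup f_Z^{-1}(Z_1)$ from the rest of the graph. Consequently each component of $G[R_1] - (S \cap R_1)$ is contained in some component of $G - S$ and therefore has weight at most $\lambda$; that is, $S_1 := S \cap R_1$ is itself a valid $\lambda$-separator of $G[R_1]$, and $w(S) \geq w(S_1)$. So it suffices to lower-bound $w(S_1)$.

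The core is then a charging argument over $Z_1$, split into $Z_1^{\mathrm{in}} = Z_1 \cap S_1$ and $Z_1^{\mathrm{out}} = Z_1 \setminus S_1$. For $h \in Z_1^{\mathrm{out}}$, any component $Q \in f_Z^{-1}(h)$ that $S_1$ does not hit stays connected to $h$ (the edge realizing $f_Z(Q)=h$ survives), so the constraint that $h$'s component weighs at most $\lambda$ forces the total weight of the un-hit components of $f_Z^{-1}(h)$ to be at most $\lambda - w(h) < \lambda$. Hence $S_1$ must hit components of total weight $> (\sqrt{s}-1)\lambda$, i.e.\ more than $\sqrt{s}-1$ distinct components, each contributing at least one vertex of weight $\geq 1$. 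Since the sets $f_Z^{-1}(h)$ are pairwise disjoint and disjoint from $Z_1$, summing over $h$ gives $w(S_1) \geq w(Z_1^{\mathrm{in}}) + (\sqrt{s}-1)\,|Z_1^{\mathrm{out}}|$ (with the case $Z_1^{\mathrm{out}}=\varnothing$ already yielding $w(S_1) \geq w(Z_1) > 2\mu^{1.5} > s$).

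Finally I would convert this into $w(S_1) > s$. Using $w(h) \leq \mu$ gives $|Z_1^{\mathrm{out}}| \geq w(Z_1^{\mathrm{out}})/\mu$, so $w(S_1) \geq w(Z_1^{\mathrm{in}}) + \tfrac{\sqrt{s}-1}{\mu}\,w(Z_1^{\mathrm{out}})$; because $(\sqrt{s}-1)/\mu \leq 1$, this expression is minimized over $w(Z_1^{\mathrm{in}}) + w(Z_1^{\mathrm{out}}) = w(Z_1) > 2\mu^{1.5}$ by placing all weight on the out-part, giving $w(S_1) > 2(\sqrt{s}-1)\sqrt{\mu}$. Since $\mu \geq s$, we get $2(\sqrt{s}-1)\sqrt{\mu} \geq 2\sqrt{s}(\sqrt{s}-1) = 2s - 2\sqrt{s} \geq s$ whenever $s \geq 4$; and the hypothesis $w(Z_1) > 2\mu^{1.5}$ combined with $w(Z_1) \leq |H|\mu \leq s\mu$ and $\mu \geq s$ forces $s > 4$, so the remaining small cases are vacuous. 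This contradicts $w(S) \leq s$. I expect the last step to be the main obstacle: the substance is not the combinatorics but recognizing that the factor $2$ in the threshold $2\mu^{1.5}$, the demand slope $\sqrt{s}$ in $D^Z$, and the bound $\mu \geq s$ are exactly calibrated so that $2(\sqrt{s}-1)\sqrt{\mu} \geq s$; the separation and charging steps are routine once the expansion bound $w(f_Z^{-1}(h)) > \sqrt{s}\,\lambda$ is established.
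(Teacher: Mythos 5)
Your proof is correct and rests on the same mechanism as the paper's: the demand $d^Z_h$ yields $w(\{h\} \cup f_Z^{-1}(h)) > w(h) + \sqrt{s}\,\lambda$ for each $h \in Z_1$, and on each of these pairwise disjoint gadgets any separator achieving maximum component weight $\lambda$ must either delete $h$ or hit more than $\sqrt{s}-1$ of the crown components, which is exactly the paper's dichotomy. The only divergence is the final count: the paper restricts attention to heads of weight exceeding $\sqrt{\mu}$ (showing there are more than $\sqrt{s}$ of them, each forcing cost at least $\sqrt{s}$, hence total cost above $\sqrt{s}\cdot\sqrt{s}=s$), whereas you split $Z_1$ by membership in $S$ and minimize $w(Z_1^{\mathrm{in}}) + \tfrac{\sqrt{s}-1}{\mu}\,w(Z_1^{\mathrm{out}})$, disposing of the residual cases $s \leq 4$ by correctly observing that $w(Z_1) \leq |H|\mu \leq s\mu$ together with $w(Z_1) > 2\mu^{1.5}$ forces $s > 2\sqrt{\mu} \geq 2\sqrt{s}$, a corner check the paper's heavy-head filter never needs.
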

\begin{proof}
	Let $Z_1^{>\sqrt{\mu}} := \{h \in Z_1 \mid w(h) > \sqrt{\mu}\}$ and $Z_1^{\leq \sqrt{\mu}} := Z_1 \setminus Z_1^{>\sqrt{\mu}}$.
	We have $\left|Z_1^{\leq \sqrt{\mu}}\right| \leq |H| \leq s \leq \mu$ and therefore $w\left(Z_1^{>\sqrt{\mu}}\right) > 2 \mu^{1.5} - w\left(Z_1^{\leq \sqrt{\mu}}\right) \geq 2 \mu^{1.5} - \mu \cdot \sqrt{\mu} = \mu^{1.5}$.
	It follows that $\left|Z_1^{>\sqrt{\mu}}\right| > \sqrt{\mu} \geq \sqrt{s}$ as $w(h) \leq \mu$ and $\mu \geq s$.
	
	Let $V_h = \{h\} \cup f_Z^{-1}(h)$ for each $h \in Z_1$.
	Fix an $h \in H$.
	We have $w(h) > \sqrt{\mu} \geq \sqrt{s}$ and $w(V_h) > w(h)-1 + (\sqrt{s}+1) \lambda - \lambda + 1 = w(h) + \sqrt{s} \lambda$, where $G[V_h]$ is connected.
	Furthermore, observe that if we do not remove $h$, we must remove at least $\sqrt{s}$ vertices to guarantee that $G[V_h]$ does not contain any component of weight larger than $\lambda$, since removing $h$ decomposes $G[V_h]$ into components with weight at most $\lambda$.
	In other words, for any $V' \subset V_h \setminus \{h\}$ with $|V'|<\sqrt{s}$ we have 
	$\max_{Q \in \comp(G[V_h] - V')} w(Q) \geq w(V_h) - \lambda |V'| > w(h) + \lambda \sqrt{s} - \lambda \cdot (\sqrt{s}-1) > \lambda$.
	That is, to ensure that $G[V_h]$ contains no component of weight larger than $\lambda$ we either need to remove $h$ from it, or remove at least $\sqrt{s}$ vertices of $V_h \setminus \{h\}$.
	Hence, both options cost us at least $\sqrt{s}$ in $G[V_h]$.
	Since the vertex sets $\{V_h\}_{h \in H}$ are pairwise disjoint and are more than $\sqrt{s}$, a separator $S$ ensuring that there is no component of weight larger than $\lambda$ cost us more than $\sqrt{s} \cdot \sqrt{s}=s$ in $G[Z_1 \cup f_Z^{-1}(Z_1)] \subseteq G$.
\end{proof}

With the key lemmas for the wVI and wCOC problems established, we now demonstrate their applications for each problem and design the corresponding algorithms to achieve the desired kernelizations.

\subsubsection*{Weighted Component Order Connectivity Problem}
We begin by considering wCOC and start with a reduction rule based on a specific \dbe.
We assume that each vertex weighs at most $W$, since any vertex with a larger weight must necessarily be part of the solution.
That is, it is a safe reduction rule for an instance $(G,k,W)$ of wCOC to remove such vertices and update $k$ accordingly.
Moreover, we assume that every connected component in an instance of wCOC weighs more than $W$. 
It is not difficult to see that components with lesser weight can be safely removed.

\begin{lemma}
	\label{lemma::saveRedwCOC}
	Let $(G,k,W)$ be an instance of the weighted component order connectivity problem, $(H,H',W,f,D)$ a \normalfont{\dbe} in $G$ with $D = \{d_h\}_{h \in H}$, where $d_h = W-2 + W \cdot (w(h)+1)$ for each $h \in H$ and for $C=V(f^{-1}(H))$ we have $N(C) \subseteq H$.
	Then, $(G,k,W)$ is a yes-instance if and only if $(G-(H \cup C),k-w(H),W)$ is a yes-instance.
\end{lemma}
\begin{proof}
	Let $S$ be a solution of $(G,k,W)$, $S' = S \setminus (H \cup C)$ and $G' = G- (H \cup C)$.
	We prove that $S'$ is a solution for $(G',k,W)$, implying that if $(G,k,W)$ is a yes-instance then $(G',k,W)$ is a yes-instance.	
	We need to show that $w(S') \leq w(S) - w(H)$ and $\max_{Q \in \comp(G'-S')} w(Q) \leq W$.
	For the former, let $V_h = \{h\} \cup f^{-1}(h)$ for each $h \in H$.
	Note that $w(V_h) > d_h - W + 1= W - 1 + W \cdot w(h)$, where each $G[V_h]$ is connected.
	Moreover, observe that if we do not remove $h$, we must remove at least $w(h)$ vertices to guarantee that $G[V_h]$ does not contain any component of weight larger than $W$, since removing $h$ decomposes $G[V_h]$ into components with weight at most $W$.
	In other words, for any $V' \subset V_h \setminus \{h\}$ with $|V'|<w(h)$ we have 
	$\max_{Q \in \comp(G[V_h] - V')} w(Q) \geq w(V_h) - W|V'| > W-1 + W \cdot w(h) - W \cdot (w(h)-1) \geq W$ as $W \geq 1$. 
	That is, to ensure that $G[V_h]$ contains no component of weight $W+1$ we either need to remove $h$ from it, or remove at least $w(h)$ vertices of $V_h \setminus \{h\}$.
	Since the vertex sets $\{V_h\}_{h \in H}$ are pairwise disjoint, we have $w(S \cap (C \cup H)) \geq \sum_{h \in H} w(h) = w(H)$ and therefore $w(S') \leq w(S) - w(H)$.
	For the latter, suppose that $G'-S'$ contains a connected component $G''$ of weight larger than $W$.
	Note that $V(G'') \subset V \setminus (H \cup C)$.
	Since we do not remove vertices from $S$ in $V \setminus (C \cup H)$, we have $G'' \subseteq G-S$, contradicting that $S$ is a solution for $(G,k,W)$.
	
	The other direction follows simply because the components of $G[C \cup H] - H$ are of weight at most $W$ while $H$ separates $C$ from the rest of the graph, i.e., $N(C) \subseteq H$.
	That is, adding $H$ to a solution $S''$ in $G'$, which costs us additional $w(H)$, ensures that the heaviest component of $G-(S'' \cup H)$ weighs at most $W$.
\end{proof}

Derived from \cref{lemma::VIandWCOCEstimation,lemma::noInstOrLowerBound} we specify the corresponding lemmas with the appropriate parameter selection as corollaries to design the kernelization algorithm.
We define the BCD and \dbe structures properly for wCOC, but to facilitate a direct comparison, the specific parameter choices regarding \cref{lemma::VIandWCOCEstimation,lemma::noInstOrLowerBound} would be $a = W$, $s = k$, and $\lambda = W$.
Note that $a=\lambda$ satisfies the precondition $\lambda \in [a]$ and $\max_{v \in V(G)} w(v) \leq W$ for an instance $(G,k,W)$ of wCOC the precondition $\max_{v \in V(G)} w(v) \leq \max(a,s)$ of \cref{lemma::VIandWCOCEstimation,lemma::noInstOrLowerBound}.

Let $(G,k,W)$ be an instance of wCOC, $\chrf$ a $W$-BCD in $G$ with $|H| + |\calR| \leq k$ and $\mu = \max(k,W)$.
Let $D^Y = \{d^Y_h\}_{h \in H}$ and $D^Z = \{d^Z_h\}_{h \in H}$ be demands, where $d^Y_h = W-2 + W \cdot (w(h)+1)$ and $d^Z_h = w(h)-1 + (\sqrt{k}+1) W$ for each $h \in H$.
Let $(Y_1,Y_2,W,f_Y,D^Y)$ and $(Z_1,Z_2,W,f_Z,D^Z)$ be \dbe's in $G[C \cup H]$ 
with $Y_1,Y_2$ as well as $Z_1,Z_2$ are partitions of $H$ and $f_Y^{-1}(Y_1),f_Y^{-1}(Y_2)$ as well as $f_Z^{-1}(Z_1),f_Z^{-1}(Z_2)$ are partitions of $\comp(G[C])$.
Note that $(Y_1,Y_2,W,f_Y,D^Y)$ coincide with the \dbe in \cref{lemma::saveRedwVI}.
As a reminder, we assumed that only vertices in components of $G$ with weight greater than $W$ are part of the $W$-BCD. Since we have already established that $G$ contains only components with weight greater than $W$, it follows that all vertices of $G$ are part of a $W$-BCD in $G$.

\begin{corollary}
	\label{corollary::RedStructExWCOC}
	Let $(G,k,W)$ be an instance of wCOC and let $w(Z_1) \leq 2 \mu^{1.5}$.
	If $w(V(G)) \geq 3\mu(k + \sqrt{\mu}W)$, then $Y_1 \neq \varnothing$.
\end{corollary}

\begin{corollary}
	\label{corollary::NoInstWCOC}
	Let $(G,k,W)$ be an instance of wCOC.
	If $w(Z_1) > 2 \mu^{1.5}$, then $(G,k,W)$ is a no-instance.
\end{corollary}

The algorithm takes as input an instance $(G, k, W)$ of wCOC, where $w(G) \geq 3\mu(k + \sqrt{\mu}W)$.
It either outputs a reduced instance or determines that $(G, k, W)$ is a no-instance by returning a trivial no-instance.
Note that the weight condition on $w(G)$ is not a loss of generality, as there is nothing to do if $w(G) < 3\mu(k + \sqrt{\mu}W)$.

\paragraph*{Find reducible structures (\algWCOC):}
	
\begin{enumerate}
	\item Compute a $W$-BCD $(C,H,\calR,f)$ in $G$. If $|H| + |\calR| > k$, then return a trivial no-instance.
	\item Let $d_h = W-2 + W \cdot (w(h)+1)$ for each $h \in H$ and $D = \{d_h\}_{h \in H}$.
	Compute a \dbe $(H_1,H_2,W,f,D)$ in $G[C \cup H]$ with $H_1,H_2 \subseteq H$ and $f^{-1}(H_1),f^{-1}(H_2) \subseteq C$.
	\item If $H_1 = \varnothing$, then return a trivial no-instance.
	Otherwise, return $(G - (H_1 \cup f^{-1}(H_1)),k-w(H_1),W)$.
\end{enumerate}

With the following lemma we prove \cref{theorem::wCOC}.
Note that we need to apply \algWCOC at most $k$ times. 

\begin{lemma}
	Algorithm \algWCOC works correctly and has a runtime of $\O(r^2|V||E|)$, where $r$ is the size of a maximum $(W+1)$-packing.
\end{lemma}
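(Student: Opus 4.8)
The plan is to verify the two claims of the lemma separately: correctness of the output and the stated running time. The structure of \algWCOC mirrors the unweighted case but outsources all the hard analysis to \cref{lemma::saveRedWeightedVerInt}~(\ref{lemma::item2}), \cref{corollary::RedStructExWCOC}, and \cref{corollary::NoInstWCOC}, so the proof should mainly be a matter of checking that the algorithm's branches align with the hypotheses of these results. First I would fix the parameters once and for all as $a=W$, $s=k$, and $\lambda=W$, so that $\mu=\max(k,W)$ matches the setting of both corollaries, and record that the input satisfies $w(G)\geq 3\mu(k+\sqrt{\mu}W)$ with every component weighing at least $W+1$ (so that the $W$-BCD is well-defined on all of $G$ by \cref{theorem:lccd}).

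For correctness I would walk through the three exit points. If $|H|+|\calR|>k$, then the packing $\calR\cup\{\{h\}\cup V(f^{-1}(h))\}_{h\in H}$ is a $(W+1)$-packing of size exceeding $k$, so by the duality noted in the introduction (each packing element contains a solution vertex) no separator of weight at most $k$ exists, justifying the no-instance output. Next, given the demands $d_h=W-2+W\cdot(w(h)+1)$, the computed \dbe $(H_1,H_2,W,f,D)$ is exactly a demanded balanced expansion of the shape required by \cref{lemma::saveRedWeightedVerInt}~(\ref{lemma::item2}) with $a=b=W$; hence if $H_1\neq\varnothing$ the returned instance $(G-\{H_1\cup f^{-1}(H_1)\},\,k-w(H_1),\,W)$ is equivalent to $(G,k,W)$, establishing the reduction branch. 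The remaining case is $H_1=\varnothing$: here I would invoke \cref{corollary::NoInstWCOC} to rule out $w(Z_1)>2\mu^{1.5}$ (which would already force a no-instance), and then apply \cref{corollary::RedStructExWCOC}, whose hypothesis $w(V)\geq 3\mu(k+\sqrt{\mu}W)$ holds by assumption, to conclude $Y_1\neq\varnothing$. The subtlety — and the step I expect to be the main obstacle — is reconciling the algorithm's single \dbe $(H_1,H_2,W,f,D)$ with the two auxiliary expansions $(Y_1,Y_2,\dots)$ and $(Z_1,Z_2,\dots)$ that the corollaries are phrased in terms of: one must argue that the demands $d_h$ used in the algorithm are chosen precisely so that $H_1\neq\varnothing$ coincides with $Y_1\neq\varnothing$ (the reducible-structure condition), and that $H_1=\varnothing$ together with the weight bound forces the contradiction the corollaries supply. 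I would make this explicit by matching $D$ against $D^Y$ with $a=\lambda=W$, so that the algorithm's reduction test is literally the $Y_1$-test of \cref{corollary::RedStructExWCOC}.

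For the running time I would bound each ingredient: computing the $W$-BCD costs $\O(r^2|V||E|)$ by \cref{theorem:lccd}, where $r=|H|+|\calR|$ is at most the size of a maximum $(W+1)$-packing; computing the \dbe costs $\O(|V||E|)$ by \cref{thm::dbExapnsion}; and the remaining bookkeeping (forming demands, testing $H_1=\varnothing$, extracting $f^{-1}(H_1)$) is subsumed. The BCD computation dominates, giving $\O(r^2|V||E|)$ per call, matching the claim. Finally, since each successful application removes at least one unit of weight-budget and the reduced instance has strictly smaller parameter $k-w(H_1)<k$ with $w(H_1)\geq 1$, I would note that at most $k$ invocations suffice to drive $w(G)$ below $3\mu(k+\sqrt{\mu}W)$, which together with the per-call bound yields the overall $\O(r^2 k|V|^2|E|)$ runtime of \cref{theorem::wCOC} (as remarked just before the lemma). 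This last accounting is routine and I would keep it brief, deferring the per-application weight-decrease argument to the equivalence already proved in the correctness part.
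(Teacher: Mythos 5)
Your proposal is correct and follows essentially the same route as the paper's own proof: the same three exit points (the $(W+1)$-packing bound when $|H|+|\calR|>k$, safety of the reduction via \cref{lemma::saveRedWeightedVerInt} when $H_1\neq\varnothing$, and the case analysis through \cref{corollary::RedStructExWCOC} and \cref{corollary::NoInstWCOC} when $H_1=\varnothing$), with the same runtime accounting dominated by the $W$-BCD computation from \cref{theorem:lccd}. Your explicit matching of the algorithm's demands $D$ against $D^Y$ with $a=\lambda=W$ (so that the $H_1$-test is literally the $Y_1$-test, and the $Z$-expansion need only exist, not be computed) makes precise a step the paper compresses into the parenthetical ``(in the corollary $Y_1\neq\varnothing$)'', but it is the same argument.
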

\begin{proof}
	If $|H| + |\calR| > k$, then the algorithm returns a trivial no-instance.
	This is correct as $\calR \cup \{\{h\} \cup V(f^{-1})(h)\}_{h \in H}$ forms a $(W+1)$-packing in $G$ of size larger than $k$, which in turn shows that we need to remove more than $k$ vertices to achieve that the remaining graph contains only components of weight $W$.
	
	If $H_1 =  \varnothing$, then the case in \cref{corollary::NoInstWCOC} must be true, as otherwise, by the weight of $G$ and \cref{corollary::RedStructExWCOC} we must have $H_1 \neq \varnothing$ (in the corollary $Y_1 \neq \varnothing$).
	This shows that the conclusion that $(G,k,W)$ is a no-instance for the case $H_1 =  \varnothing$ is also correct.
	
	Otherwise, the algorithm reduces the instance $(G,k,W)$ to $(G - (H_1 \cup f^{-1}(H_1)),k-w(H_1),W)$, where the reduction is safe by \cref{lemma::saveRedwCOC}.
	
	Regarding the running time, computing a BCD or a \dbe as well as all other steps take not more time than $\O(r^2|V||E|)$ (cf.~\cref{thm::dbExapnsion,theorem:lccd}).
\end{proof}

\subsubsection*{Weighted Vertex Integrity Problem}

We now turn our attention to wVI and prove \cref{theorem::kernelWeightVerInt}.
As a reminder, we assume for an instance $(G,p)$ of wVI that there is a component of weight greater than $p$ and that every vertex has weight lesser than $p$; otherwise, it can be immediately labeled as a trivial yes-instance or no-instance, respectively.
Again, we begin with a reduction rule based on a specific \dbe.
We point out that one direction of the reduction is very similar to the reduction in \cref{lemma::saveRedwCOC} for wCOC.
However, the other direction requires more careful consideration.

\begin{lemma}
	\label{lemma::saveRedwVI}
	Let $(G,p)$ be an instance of the weighted vertex integrity problem, $\lambda \in \N$, where $1 \leq \lambda \leq \pl$, and $(H,H',\lambda,f,D)$ a \normalfont{\dbe} in $G$ with $D = \{d_h\}_{h \in H}$, where $d_h = p-2 + \lambda \cdot (w(h)+1)$ for each $h \in H$ and for $C=V(f^{-1}(H))$ we have $N(C) \subseteq H$.
	Then, $(G,p)$ is a yes-instance if and only if $(G- (H \cup C) + v,p-w(H))$ is a yes-instance, where $v$ is a vertex of weight $\lambda$.
\end{lemma}
\begin{proof}
	Let $S$ be a solution of $(G,p)$ with $c = \max_{Q \in \comp(G - S)} w(Q)$, $S' = S \setminus (H \cup C)$ and $G' = G- (H \cup C) + v$.
	We prove that $S'$ is a solution for $(G',p-w(H))$ with $\max_{Q \in \comp(G'-S')} w(Q) \leq c$, 
	implying that if $(G,p)$ is a yes-instance then $(G',p-w(H))$ is a yes-instance.
	We need to show that $w(S') \leq w(S) - w(H)$ and $\max_{Q \in \comp(G'-S')} w(Q) \leq c$.
	For the former, let $V_h = \{h\} \cup f^{-1}(h)$ for each $h \in H$.
	Note that $w(V_h) > d_h - \lambda + 1= p - 1 + \lambda \cdot w(h)$, where each $G[V_h]$ is connected.
	Moreover, observe that if we do not remove $h$, we must remove at least $w(h)$ vertices to guarantee that $G[V_h]$ does not contain any component of weight larger than $p$, since removing $h$ decomposes $G[V_h]$ into components with weight at most $\lambda$.
	In other words, for any $V' \subset V_h \setminus \{h\}$ with $|V'|<w(h)$ we have 
	$\max_{Q \in \comp(G[V_h] - V')} w(Q) \geq w(V_h) - \lambda|V'| > p-1 + \lambda \cdot w(h) - \lambda \cdot (w(h)-1) \geq p$ as $\lambda \geq 1$. 
	That is, to ensure that $G[V_h]$ contains no component of weight $p+1>c$ we either need to remove $h$ from it, or remove at least $w(h)$ vertices of $V_h \setminus \{h\}$.
	Since the vertex sets $\{V_h\}_{h \in H}$ are pairwise disjoint, we have $w(S \cap (C \cup H)) \geq \sum_{h \in H} w(h) = w(H)$ and therefore $w(S') \leq w(S) - w(H)$.
	For the latter, suppose that $G'-S'$ contains a connected component $G''$ of weight larger than $c$.
	Note that $V(G'') \subset V \setminus (H \cup C)$.
	Moreover, note that the added vertex $v$ in $G'$ is isolated and of weight $\lambda \leq \pl \leq c$ and thus, $v \notin V(G'')$.
	Since we do not remove vertices from $S$ in $V \setminus (C \cup H)$, we have $G'' \subseteq G-S$, contradicting $c = \max_{Q \in \comp(G-S)} w(Q)$.
	
	For the other direction, we first show that if $(G', p - w(H))$ is a yes-instance, then there is always a solution $R$ such that $\max_{Q \in \comp(G' - R)} w(Q) \geq \lambda$.
	Once we have this, the other direction follows simply because the components of $G[C \cup H] - H$ have weight at most $\lambda$, while $H$ separates $C$ from the rest of the graph, i.e., $N(C) \subseteq H$.
	Adding $H$ to $S''$ incurs an additional cost of $w(H)$, which shows that $R \cup H$ is a solution for $(G, p)$.

	Let $R'$ be a solution of $(G', p - w(H))$ with $c' = \max_{Q \in \comp(G' - R')} w(Q)$.  
	If $c' \geq \lambda$, then there is nothing to do, and we are done.  
	Otherwise, $v \in R'$, since $v$ is isolated in $G'$ and has weight $\lambda$.  
	We remove $v$ from $R'$, and let the new set be $R''$.  
	Note that $(\{v\}, \varnothing)$ is now the heaviest connected component in $G' - R''$, and therefore  
	$\max_{Q \in \comp(G' - R'')} w(Q) = w(v) = \lambda$.
	Finally, we need to show that $R''$ is a feasible solution in $G'$, which is demonstrated by the following computation:  
	$$p - w(H) \geq w(R') + c' = w(R') - \lambda + \lambda + c' \geq w(R') - w(v) + \lambda = w(R'') + \lambda.$$
\end{proof}

To design the upcoming algorithm, we need to establish an additional reduction rule, ensuring that isolated components with a weight of at most $\pl$ can be safely removed.

\begin{lemma}
	\label{lemma::saveRedSmallComp}
	Let $(G,p)$ be an instance of wVI that contains at least one component of weight greater than $p$, where $p>1$.
	Let $c \leq \pl$ and let $U \subset V(G)$ be all vertices within connected components of weight at most $c$.
	\begin{enumerate}
		\item If $c>1$, then $(G,p)$ is a yes-instance if and only if $(G-U+v, p)$ is a yes-instance, where $w(v) = c$.
		\item If $c=1$, then $(G,p)$ is a yes-instance if and only if $(G-U, p)$ is a yes-instance.
	\end{enumerate}
\end{lemma}
\begin{proof}
	\text{}
	\begin{enumerate}
		\item
		Let $G' = G - U + v$.
		Let $S$ be a solution of $(G, p)$, and define $x = \max_{Q \in \comp(V(G) - S)} w(Q)$.
		For $S' = S \setminus U$, it follows that $x \geq \max_{Q \in \comp(V(G) - S')} w(Q)$, since the largest component of $\comp(G[U])$ has weight $c \leq \pl \leq x$.
		As a result, $S'$ is a solution for $(G', p)$ because $G - U = G' - \{v\}$ and $w(v) = c \leq x$.

		Let $R$ be a solution of $(G', p)$, let $y = \max_{Q \in \comp(G' - R)} w(Q)$, and set $R' = R \setminus \{v\}$.
		First, note that $\max_{Q \in \comp(G' - R')} w(Q) \leq y$ since $v$ is an isolated vertex in $G'$.
		If $y \geq c$, then $R'$ is a solution for $(G, p)$, since $G - U = G' - \{v\}$ and $\max_{Q \in \comp(G[U])} w(Q) \leq c \leq y$.
		Otherwise, $v \in R$ as $y < c$.
		Now, note that $(\{v\}, \varnothing)$ is the heaviest connected component in $G' - R'$.
		It follows that $c = w(v) = \max_{Q \in \comp(G' - R')} w(Q) \geq \max_{Q \in \comp(G - R')} w(Q)$,
		since $\max_{Q \in \comp(G - U - R')} w(Q) \leq y < c$ as $G - U = G' - \{v\}$ and $\max_{Q \in \comp(G[U])} w(Q) \leq c$.
		Finally, we show that $R'$ is a feasible solution for $G$, proving the lemma:
		$$ p \geq w(R) + y = w(R) + y + c - c \geq w(R) - c + c = w(R) - w(v) + c = w(R') + c $$
		
		\item
		Let $S$ be a solution of $(G, p)$ and let $G' = G - U$.  
		Since there is a component with weight greater than $p$, we have $\max_{Q \in \comp(G - S)} w(Q) \geq 1$.  
		It follows that $S' = S \setminus U$ is also a solution for $(G, p)$, since $w(u) = 1$ for the isolated vertices $u \in U$.  
		Finally, $S'$ is also a solution for $(G', p)$ as $G' \subseteq G$.

		Let $R$ be a solution of $(G', p)$.  
		The graph $G$ contains a component $T$ with weight greater than $p \geq 1$, so $V(T) \cap U = \varnothing$.  
		Since $T$ is also in $G'$, we have $\max_{Q \in V(G')} w(Q) \geq 1$ because not all vertices of $V(T)$ can be in $R$.  
		As a result, $R$ is also a solution for $(G, p)$, since $w(u) = 1$ for the isolated vertices $u \in U$.
	\end{enumerate}

\end{proof}

%
%
%

Derived from \cref{lemma::VIandWCOCEstimation,lemma::noInstOrLowerBound} we specify the corresponding lemmas with the appropriate parameter selection as corollaries to design the kernelization algorithm.
We define the BCD and \dbe structures properly for wVI, but to facilitate a direct comparison, the specific parameter choices regarding \cref{lemma::VIandWCOCEstimation} and \cref{lemma::noInstOrLowerBound} would be $a = s = p$, and $\lambda$ is not fixed, as we do not know $\pl$. However, as in the preconditions of \cref{lemma::VIandWCOCEstimation,lemma::noInstOrLowerBound}, it is assumed that $\lambda \in [a]$.
Note also that $\max_{v \in V(G)} w(v) \leq p$ for an instance $(G,p)$ of wVI satisfies the precondition $\max_{v \in V(G)} w(v) \leq \max(a,s)$.

Let $(G,p)$ be an instance of wVI and $\chrf$ a $\lambda$-BCD in $G$  with $|H| + |\calR| \leq p$.
Let $D^Y = \{d^Y_h\}_{h \in H}$ and $D^Z = \{d^Z_h\}_{h \in H}$ be demands, where $d^Y_h = p-2 + \lambda \cdot (w(h)+1)$ and $d^Z_h = w(h)-1 + (\sqrt{p}+1) \lambda$ for each $h \in H$.
Let $(Y_1,Y_2,\lambda,f_Y,D^Y)$ and $(Z_1,Z_2,\lambda,f_Z,D^Z)$ be \dbe's in $G[C \cup H]$ with $Y_1,Y_2$ as well as $Z_1,Z_2$ are partitions of $H$ and $f_Y^{-1}(Y_1),f_Y^{-1}(Y_2)$ as well as $f_Z^{-1}(Z_1),f_Z^{-1}(Z_2)$ are partitions of $\comp(G[C])$.
Note that $(Y_1, Y_2, \lambda, f_Y, D^Y)$ coincides with the \dbe in \cref{lemma::saveRedwVI} only if $\lambda \leq \pl$.

\begin{corollary}
	\label{cor::wVI1}
	Let $(G,p)$ be an instance of wVI and let $w(Z_1) \leq 2 p^{1.5}$.
	If $w(\calR) + w(C) + w(H) \geq 3p(p + \sqrt{p}\lambda)$, then $Y_1 \neq \varnothing$.
\end{corollary}

\begin{corollary}
	\label{cor::wVI2}
	Let $(G,p)$ be an instance of wVI and let $w(Z_1) > 2 p^{1.5}$.
	\begin{enumerate}
		\item If $\lambda < p$, then $\lambda < \pl$.
		\item If $\lambda \geq p$, then $(G,p)$ is a no-instance.
	\end{enumerate}
\end{corollary}

With these corollaries and \cref{lemma::lbVerInt}, we can design the following kernelization algorithm.
It takes as input an instance $(G, p)$ of wVI, where $|V(G)| > 3p(p + \sqrt{p})$ and there is a connected component in $G$ with weight greater than $p$.
It either determines that $(G, p)$ is a no-instance by returning a trivial no-instance, or that $(G, p)$ already satisfies the desired kernel, or it outputs a reduced instance.
Note that the target vertex kernel contains a factor $\pl < p$ that we do not know beforehand, which is why an instance may already satisfy the vertex kernel. Moreover, a vertex kernel of $3p(p + \sqrt{p})$ is the smallest size we can guarantee from \cref{theorem::kernelWeightVerInt}, as $\pl \geq 1$.
Finally, if there is no component with weight greater than $p$, there is nothing to do. These facts show that the weight assumptions on $G$ are not a loss of generality.

To enhance the readability of the algorithm, we define $\lambda$-crown structures for a $\lambda \in \mathbb{N}$ and demands $D$ to denote a $\lambda$-BCD $\chrf$ and a \dbe $(H_1, H_2, \lambda, f, D)$ in $G[C \cup H]$, where $H_1, H_2$ forms a partition of $H$ and $f^{-1}(H_1), f^{-1}(H_2)$ a partition of $\comp(G[C])$.
As a reminder, when we compute a $\lambda$-BCD, we disregard small components, meaning that if  $(C,H,\calR,f)$ is a $\lambda$-BCD for $G$, then $C\cup H\cup V(\calR)$ are only the vertices that are contained in some component of weight more than $\lambda$ of $G$.

\paragraph*{Find reducible structures (\algWVI):}
\begin{enumerate}
	\item
	\label{step::wVI1}
	Let $(C^1,H^1,\calR^1,f_1)$ and $(H^1_1,H^1_2,1,g_1,\{d^1_h\}_{h \in H^1})$ be $1$-crown-structures with $d^1_h = w(h) + \sqrt{p}$.
	\begin{enumerate}
		\item
		\label{step::wVI1_a}
		If $U = V(G)\setminus (C^1 \cup H^1 \cup V(\calR^1))$ is not empty, then return $(G-U,p)$.
		\item 
		\label{step::wVI1_b}
		If $w(H^1_1) \leq 2 p^{1.5}$ and $|H^1| + |\calR^1| \leq p$,
		then compute a \dbe $(H^2_1,H^2_2,1,g_2,\{d^2_h\}_{h \in H^1})$ with $d^2_h = p-2 + w(h)+1$ and
		return $(G - (H^2_1 \cup g_2^{-1}(H^2_1)),p-w(H^2_1))$.  
	\end{enumerate}
	\item
	\label{step::wVI2}
	Otherwise, find for $\lambda \in [p-1]$ and for each $h \in H$ with demands 
	\begin{itemize}
		\item $d^3_h = w(h)-1 + (\sqrt{p}+1) \lambda$,
		\item $d^4_h = w(h)-1 + (\sqrt{p}+1)(\lambda+1)$
	\end{itemize}
	a $\lambda$-crown-structure and a ($\lambda+1$)-crown-structure 
	\begin{itemize}
		\item $(C^3,H^3,\calR^3,f_3)$, $(H^3_1,H^3_2,\lambda,g_3,\{d^3_h\}_{h \in H})$,
		\item $(C^4,H^4,\calR^4,f_4)$, $(H^4_1,H^4_2,\lambda + 1,g_4,\{d^4_h\}_{h \in H})$,
	\end{itemize}
	such that
	\begin{itemize}
		\item $w(H^3_1) > 2p^{1.5}$ and/or $|H^3| + |\calR^3| > p$,
		\item $w(H^4_1) \leq 2p^{1.5}$ and $|H^4| + |\calR^4| \leq p$.
	\end{itemize}
	\begin{enumerate}
		\item 
		\label{step::wVI2_a}
		If such a $\lambda$ does not exist, then return a trivial no-instance.
		\item 
		\label{step::wVI2_b}
		If $U' = V(G)\setminus (C^4 \cup H^4 \cup V(\calR^4))$ has cardinality larger than one, then return $(G-V''+\{v\},p)$, where $v$ is a vertex of weight $\lambda+1$.
		\item
		\label{step::wVI2_c}
		If $w(V(\calR^4)) + w(C^4) + w(H^4) < 3p(p + \sqrt{p} (\lambda+1))$, then report that  $|V(G)| \leq 3p(p + \sqrt{p} \pl)$.
	\end{enumerate}
	\item
	\label{step::wVI3}
	Otherwise, compute a \dbe $(H^5_1,H^5_2,\lambda + 1,g_5,\{d^5_h\}_{h \in H^4})$ with $d^5_h = p-2 + (\lambda+1) \cdot (w(h)+1)$ in $G[C^4 \cup H^4]$ with $H^5_1,H^5_2 \subseteq H^4$ and $g_5^{-1}(H^5_1),g_5^{-1}(H^5_2) \subseteq C^4$ and
	return $(G - (H^2_1 \cup g_5^{-1}(H^5_1) + v),p-w(H^2_1))$, where $v$ is a vertex of weight $\lambda+1$.
\end{enumerate}

With the following lemma, we prove \cref{theorem::kernelWeightVerInt}.  
Note that we need to apply \algWVI at most $|V|$ times until it either reports that the input instance already has the desired kernel or, after a reduction, the remaining instance is of size at most $3p(p + \sqrt{p})$.  
This is because, if the input graph does not have the desired kernel size or is labeled as a no-instance, each application of \algWVI reduces the input instance by at least one vertex.

\begin{lemma}
	Algorithm \algWVI works correctly and has a runtime of $\O(\log(p)|V|^3|E|)$.
\end{lemma}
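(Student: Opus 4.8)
The plan is to prove the two assertions of the lemma separately: correctness of every branch of \algWVI, and the stated runtime. For correctness I would go branch by branch. The deletions of $V'$ in step~\ref{step::wVI1} and of $V''$ in step~\ref{step::wVI2} remove only whole connected components of weight at most the current BCD parameter ($1$, respectively $\lambda+1$); once we know that any boundary value $\lambda$ satisfies $\lambda+1\le\pl$ (established below), these are components of weight at most $\pl$, so their removal is safe exactly as in the note following \cref{lemma::lbVerInt}. The two genuine crown reductions — the $Y$-\dbe returned in step~\ref{step::wVI1} and the one in step~\ref{step::wVI3} — are instances of \cref{lemma::saveRedWeightedVerInt}, part~\ref{lemma::item1}, with $a=p$ and $b=1$, respectively $b=\lambda+1$; since $b\le\pl$ in both cases, the reductions are valid. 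The only additional point is that the returned head is nonempty, i.e.\ that we make progress: this is precisely \cref{cor::wVI1} (whose proof rests on the size estimate \cref{lemma::VIandWCOCEstimation}), whose hypotheses $\lambda\le\pl$, $w(Z_1)\le 2p^{1.5}$ and $w(G)\ge 3p(p+\sqrt{p}\lambda)$ are exactly the code's guards, the demands $d^2_h,d^5_h$ being chosen so that the returned heads play the role of $Y_1$ for $\lambda=1$, respectively $\lambda+1$. The no-instance verdicts follow from \cref{cor::wVI2} together with \cref{corollary::noInstVerInt}, and the ``already within the kernel'' report in step~\ref{step::wVI2} is correct because $\lambda+1\le\pl$ yields $3p(p+\sqrt{p}(\lambda+1))\le 3p(p+\sqrt{p}\pl)$.

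The crux is step~\ref{step::wVI2}, namely justifying the boundary value $\lambda$ and the implication $\lambda+1\le\pl$. I would read each candidate $\lambda$ through the disjunctive predicate ``$w(H^3_1)>2p^{1.5}$ or $|H^3|+|\calR^3|>p$'', and observe that whenever this predicate holds we obtain $\lambda<\pl$, hence $\lambda+1\le\pl$ — from the first part of \cref{cor::wVI2} for the weight signal, and from \cref{lemma::lbVerInt} for the packing signal. Thus any boundary $\lambda$ (predicate on at $\lambda$, off at $\lambda+1$) immediately certifies $\lambda+1\le\pl$, which is exactly what every downstream branch consumes. Existence of such a boundary, or else a no-instance, is an extremal/discrete intermediate-value argument in the same spirit as for \algVI: step~\ref{step::wVI1} returns unless $V'=\varnothing$ and the $\lambda=1$ instance of this predicate is on, so reaching step~\ref{step::wVI2} already exhibits an ``on'' signal at the low end; and if the predicate were still on at the top end $\lambda=p$, then by the second part of \cref{cor::wVI2} (or by \cref{corollary::noInstVerInt}) the instance is a no-instance, matching the code's no-instance return. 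Hence either the signal switches off somewhere in $[1,p]$, supplying the boundary, or it is already off at the top and the boundary is found directly. I expect this to be the main obstacle: the BCD sizes are not monotone in $\lambda$, so existence of the boundary must be argued from the two extremes rather than from monotonicity, and one must keep the two disjunctive signals and their thresholds (the constant in $2p^{1.5}$ and the demand definitions matching $d^Y_h,d^Z_h$) carefully aligned with \cref{cor::wVI1} and \cref{cor::wVI2} throughout.

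For the runtime, each $\lambda$-crown-structure consists of one $\lambda$-BCD, computable in $\O(|V|^3|E|)$ since the associated $(\lambda+1)$-packing has size $r<|V|$ by \cref{theorem:lccd}, together with a constant number of \dbe's on $G[C\cup H]$, each costing $\O(|V||E|)$ by \cref{thm::dbExapnsion}. Steps~\ref{step::wVI1} and~\ref{step::wVI3} perform $\O(1)$ such computations, while the binary search in step~\ref{step::wVI2} performs $\O(\log p)$ of them, so the total is $\O(\log(p)|V|^3|E|)$, as claimed. Combined with the fact that \algWVI is invoked at most $|V|$ times and each invocation either strictly shrinks the instance or outputs a verdict, this completes the proof of the lemma and hence of \cref{theorem::kernelWeightVerInt}.
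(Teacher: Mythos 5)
Your proposal is correct and takes essentially the same route as the paper's own proof: branch-by-branch safety via \cref{lemma::saveRedWeightedVerInt}, progress (a nonempty head) via \cref{cor::wVI1}, no-instance verdicts via \cref{corollary::noInstVerInt} and \cref{cor::wVI2}, the bound $\lambda+1\le\pl$ from \cref{lemma::lbVerInt} and \cref{cor::wVI2}, and the identical runtime accounting ($\O(|V|^3|E|)$ per $\lambda$-BCD by \cref{theorem:lccd}, $\O(|V||E|)$ per \dbe by \cref{thm::dbExapnsion}, with an $\O(\log p)$ factor from the search in step~\ref{step::wVI2}). You are in fact more explicit than the paper on the one point it compresses into a single sentence --- the two-extremes argument that either a boundary $\lambda$ exists or the top end certifies a no-instance --- and you rightly flag the only delicate spot, namely keeping the step~\ref{step::wVI1} threshold $p^{1.5}$ aligned with the step~\ref{step::wVI2} threshold $2p^{1.5}$ so that the low-end ``on'' signal is genuinely inherited.
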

\begin{proof}
	To prove that step~\ref{step::wVI1} works correctly, note that $\pl \geq 1$ for the instance $(G, p)$, since there is a connected component in $G$ with weight greater than $p$.  
	Step~\ref{step::wVI1_a} is correct by \cref{lemma::saveRedSmallComp}.  
	If in step~\ref{step::wVI1_b} we have $w(H^1_1) \leq 2p^{1.5}$ and $|H^1| + |\calR^1| \leq p$, then by \cref{cor::wVI1} and the fact that $w(G) > 3p(p + \sqrt{p})$, there must exist a corresponding \dbe\ with $H^2_1 \neq \varnothing$.  
	Additionally, the reduction executed is safe by \cref{lemma::saveRedwVI}, since $\pl \geq 1$.
	
	If in step~\ref{step::wVI2} the corresponding $\lambda$ does not exist, then by \cref{corollary::noInstVerInt} or \cref{cor::wVI2}, $(G, p)$ must be a no-instance, proving that step~\ref{step::wVI2_a} works correctly.
	Note that after step~\ref{step::wVI1} we have $w(H^1_1) > 2p^{1.5}$ and/or $|H^1| + |\calR^1| > p$.
	That is, there is at least for one $\lambda \in [p-1]$ a $\lambda$-BCD $\chrf$ with $w(H) > 2p^{1.5}$ and/or $|H| + |\calR| > p$ and if these properties also hold for a ($\lambda+1=p$)-BCD, then by \cref{corollary::noInstVerInt} or \cref{cor::wVI2} we have a no-instance as input.   
	Otherwise, a desired $\lambda$ exists and by \cref{lemma::lbVerInt} or \cref{cor::wVI2}, we have $\lambda + 1 \leq \pl$.  
	All vertices within connected components of weight at least $\lambda + 2$ are in $V\left(\calR^4\right) \cup C^4 \cup H^4$.  
	The remaining ones, i.e., the vertices within connected components of weight at most $\lambda + 1 \leq \pl$, are in $U'$.  
	Thus, by \cref{lemma::saveRedSmallComp}, step~\ref{step::wVI2_b} works correctly.  
	Finally, for step~\ref{step::wVI2}, we show that step~\ref{step::wVI2_c} also works correctly.  
	If $ w(V(\calR^4)) + w(C^4) + w(H^4) < 3p(p + \sqrt{p} (\lambda + 1))$, then $ |V(G)| \leq 3p(p + \sqrt{p} \pl)$, since $ V(\calR^4) \cup C^4 \cup H^4 \cup U' = V(G)$, where $ |U'| \leq 1$ by step~\ref{step::wVI2_b}.
	
	If we reach step~\ref{step::wVI3}, then $w(H^4_1) \leq 2p^{1.5}$, $|H^4| + |\calR^4| \leq p$ and $w(G) > 3p(p + \sqrt{p} (\lambda + 1))$.  
	Consequently, by \cref{cor::wVI1}, $H^5_1$ is non-empty, and the executed reduction is safe by \cref{lemma::saveRedwVI}, since $\lambda + 1 \leq \pl$.

	The calculation of a $\lambda$-BCD for $\lambda \in [p]$ runs in time $\O(|V|^3 |E|)$ by \cref{theorem:lccd}, as a $(\lambda + 1)$-packing has size less than $|V|$.  
	The extraction of a \dbe\ from it runs in time $\O(|V||E|)$ by \cref{thm::dbExapnsion}.  
	Each step consists of these constructions with an additional time cost of $\O(\log(p))$ in step~\ref{step::wVI2}.  
	As a result, the algorithm runs in time $\O(\log(p) |V|^3 |E|)$.
\end{proof}

\section{Kernels for Component Order Connectivity}\label{sec::COC}

In this section, we consider COC, a refined version of VI.
For VI, wVI, and wCOC, crown structures were integrated into the head and crown of the BCD.
In contrast, we now incorporate them into the body of the BCD.

Given a graph $G = (V, E)$ and parameters $k, W \in \mathbb{N}$, COC is the problem of removing at most $k$ vertices so that each resulting component has at most $W$ vertices.
We present a kernelization algorithm (\normalfont{\alg2kw}) that produces a $2kW$-vertex kernel in FPT-runtime, parameterized by the size of a maximum $(W+1)$-packing, as stated in the following theorem.

\begin{theorem}
	\label{theorem::2kWKern}
	A vertex kernel of size $2kW$ for the  component order connectivity problem  can be computed in time $\O(r^3|V||E| \cdot r^{\min(3r,k)})$, where $r \leq k $ is the size of a maximum $(W+1)$-packing.
\end{theorem}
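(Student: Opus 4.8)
```latex
\textbf{Proof proposal for \cref{theorem::2kWKern}.}

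The plan is to mirror the structure of the weighted algorithm from \cref{sec:wVIandwCOC} but push the constant down from $3kW$ to $2kW$ by exploiting the unweighted setting and the FPT parameter $r$, the size of a maximum $(W+1)$-packing. First I would compute a $W$-BCD $(C,H,\calR,f)$ of $G$ via \cref{theorem:lccd} in time $\O(r^2|V||E|)$; by that theorem we have $|H|+|\calR| \le r \le k$ (if $|H|+|\calR| > k$ the packing certifies more than $k$ vertices are needed, so we may reject). The vertices already accounted for are those in $H \cup C \cup V(\calR)$, and by property~5 of \cref{def:lccd} each $R' \in \calR$ contributes at most $3W$ vertices while contributing at least $1$ to the packing lower bound on $k$; similarly property~4 forces each $h \in H$ together with its crown preimage to exceed weight $W$. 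The goal is to show that whenever $|V| > 2kW$ we can always locate a reducible structure, namely a $(W,W)$-CD (equivalently a \dbe with the demand pattern of \cref{lemma::saveRedWeightedVerInt}\ref{lemma::item2} with $a,b=W$), whose removal by \cref{lemma::saveRedWeightedVerInt} safely shrinks the instance to $(G-\{H\cup C\}, k-|H|, W)$.

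The key steps, in order, are: (i) run the $W$-BCD and reject if the packing size exceeds $k$; (ii) bound the body and the remainder separately, using that $|V(\calR)| \le 3W|\calR|$ and that each head vertex dominates more than $W$ crown vertices, to argue that if no further crown-type reduction applies then the total vertex count is at most $2kW$; (iii) if the count still exceeds $2kW$, invoke \cref{thm::dbExapnsion} with the nonuniform demands to extract a nonempty $A_1$ inside $G[C\cup H]$, yielding a \dbe with $Y_1 \neq \varnothing$, hence a reducible structure; (iv) apply \cref{lemma::saveRedWeightedVerInt}\ref{lemma::item2} to reduce, and iterate. The $r^{\min(3r,k)}$ factor in the claimed running time strongly suggests that the delicate part of step~(iii) is not handled purely by the balanced-expansion existence guarantee but by an \emph{enumeration}: to certify the exact $2kW$ bound (rather than a looser $3kW$) one apparently guesses, over all $\binom{|H|+|\calR|}{\cdot}$-type choices bounded by $r^{\min(3r,k)}$, how an optimal solution interacts with the $H$-vertices and the $\calR$-parts, and for each guess either finds a saturating crown or a packing witness contradicting a yes-instance.

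The main obstacle I anticipate is precisely achieving the tight constant $2$ rather than $3$. The $3kW$ bound of Casel et al.~follows directly from the $3\lambda$ ceiling on remainder parts in property~5 of the BCD; to get to $2kW$ one must avoid paying the full $3W$ per remainder part. I expect the argument to charge the remainder parts more carefully against the packing lower bound: each $R' \in \calR$ has $|R'| > W$, so it already forces at least $\lceil |R'|/W \rceil$ packing elements' worth of solution budget, and combining this accounting over $H$ and $\calR$ simultaneously is what tightens the estimate. Because the naive charging is not monotone (remainder sizes range across $(W,3W]$ and head-crown weights are irregular), I believe the enumeration over the $r^{\min(3r,k)}$ configurations is the mechanism that lets us always find either a reducible crown or a no-instance certificate whenever $|V|>2kW$, and verifying that this enumeration is both correct (never missing a reducible structure) and complete within the stated runtime is the crux of the proof.
```
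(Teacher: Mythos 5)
Your proposal has the right instincts (the constant $2$ is the crux, and the $r^{\min(3r,k)}$ factor signals a branching procedure), but it is missing the central idea, and the replacement you sketch would fail. The source of the constant $2$ in the paper is not a refined charging of the BCD: it is the notion of a \emph{strictly reducible pair} $(A,B)$ of Kumar and Lokshtanov (\cref{def::RedPair}), together with the known existence guarantee (\cref{lemma::existenceReducible}) that any yes-instance with $|V| > 2kW$ and all components of size at least $W+1$ contains one. Your alternative accounting cannot close the gap between $3kW$ and $2kW$: a part $R' \in \calR$ with $W < |R'| \le 2W$ certifies only a single packing element, so charging $\lceil |R'|/W \rceil$ units of solution budget to it is unsound, and any head-and-crown reduction performed inside $G[C \cup H]$ via \cref{thm::dbExapnsion} reproduces exactly the $3kW$ analysis of Casel et al. The paper's point of departure is the opposite of your step (iii): as it states explicitly, the crown structure is now searched for in the \emph{body} of the BCD, not integrated into its head and crown.

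Concretely, the paper's algorithm \alg2kw is a bounded search tree that localizes the head $A$ of a minimal strictly reducible pair. It grows a candidate set $S$, branches over $H \cup S_{\calR'}$, where $S_{\calR'}$ collects the (unique, by \cref{lemma::UniqueSep}) $W$-separators of parts $R \in \calR$ with $|R| > 2W$ and $\sepw(R) = 1$, and at each node runs one \dbe computation with \emph{uniform} demand $2W-1$ on $A = S \cup H$ against $B = C \cup V(\Q)$ (not the nonuniform demands you propose), which extracts a reducible pair as soon as $S \supseteq A$ because $|B| > (2W-1)|A|$; the reduction is then safe by \cref{thm::saveReduction}, which plays the role of your appeal to \cref{lemma::saveRedWeightedVerInt}. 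Correctness of the branching hinges on the progress lemma \cref{lemma::aIsAvailable} --- whenever $S \subset A$ and $S \cap B = \varnothing$, the set $H \cup S_{\calR'}$ meets $A$ --- whose proof in turn requires $H \cap B = \varnothing$ (\cref{lemma::HeadEmptyIntersectionWithB}); and the recursion depth $\min(3t,k)$ is justified by the fact that the BCD yields a $3$-approximate $(W+1)$-packing, so $|A| \le r \le 3t$. None of these ingredients appear in your sketch: without \cref{lemma::existenceReducible} you cannot declare a no-instance when the search fails at size above $2kW$, and without the progress lemma your enumeration over ``configurations'' has no guarantee of ever hitting a reducible structure, so the claimed $2kW$ bound is not established by your argument.
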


A kernel of size $2kW$ is also presented in~\cite{DBLP:conf/iwpec/KumarL16}, however with an FPT-runtime in the parameter $W$ and using linear programming methods.
In contrast, our result has an FPT-runtime in the parameter of a maximum $(W+1)$-packing $r \leq k$ and  is fully combinatorial.
Although generalizing our FPT-runtime algorithm to polynomial time remains challenging, our insights into the structural properties of crown decompositions suggest promising directions for future work.
Moreover, we demonstrate how our algorithm (\normalfont{\2alg2kw}) becomes polynomial for two cases: when $W = 1$ (Vertex Cover) and for claw-free graphs, as stated in the theorems below, whose proofs can be found in Section~\ref{sec:kernels for special COC}.

\begin{theorem}
	\label{thm::VC}
	For $W=1$, i.e.~for the vertex cover problem, algorithm \normalfont{\2alg2kw} provides a $2k$ vertex kernel in polynomial time.
\end{theorem}

\begin{theorem}
	\label{thm::COC}
	The component order connectivity problem admits a $2kW$ vertex-kernel on claw-free graphs in polynomial time.
\end{theorem}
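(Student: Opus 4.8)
The plan is to reduce the claw-free case to an application of \2alg2kw by showing that claw-free graphs admit an especially simple $W$-BCD. The target is \cref{thm::COC}, and the key observation we want to establish is structural: for a claw-free graph $G$ in which every component has at least $W+1$ vertices, there exists a $W$-BCD $\chrf$ with $C = \varnothing$ and $H = \varnothing$, i.e.~the entire graph decomposes into the partition $\calR$ of balanced connected pieces, with at most one piece $R \in \calR$ having $|R| > 2W$. First I would invoke the vertex partitioning algorithm of Borndörfer et al.~\cite{DBLP:conf/approx/BorndorferCINSZ21} tailored to claw-free graphs: the intuition is that in a claw-free graph one cannot have a vertex $h$ whose neighborhood is spread across three or more small components forced together as a crown, so the weighted crown part of the BCD (the head $H$ and crown $C$ with property~4 of \cref{def:lccd}) collapses, leaving only the balanced connected partition $\calR$. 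I would verify that the output of their partitioning procedure, possibly after local merging of small pieces, satisfies the defining properties of a $\lambda$-BCD with $\lambda = W$, $H = C = \varnothing$, and that all but at most one $R \in \calR$ satisfy $|R| \le 2W$.

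With this structural guarantee in hand, the next step is to feed such graphs into \2alg2kw and confirm that step~\ref{step::noVertexOfB} can be carried out in polynomial time here, which is exactly the step that obstructs a general polynomial algorithm. The requirement of step~\ref{step::noVertexOfB} is that for a minimal strictly reducible pair $(A,B)$ in $G$ we can compute a $W$-BCD with $B \cap S_{\calR'} = \varnothing$. Since $H = C = \varnothing$ in our decomposition, the set $H \cup S_{\calR'}$ reduces to $S_{\calR'}$, where $\calR' = \{R \in \calR \mid |R| > 2W \text{ and } \sepw(R)=1\}$. Because at most one element of $\calR$ exceeds size $2W$, the set $\calR'$ has at most one element, so $S_{\calR'}$ is a single (unique, by \cref{lemma::UniqueSep}) separator vertex. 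I would then argue that this unique candidate cannot lie in $B$: combining \cref{lemma::aIsAvailable} with the structure above, any vertex that $S_{\calR'}$ contributes must be a genuine head candidate of $A$ and not a crown vertex, so $B \cap S_{\calR'} = \varnothing$ holds automatically in every iteration of the while-loop.

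Granting the two points above, correctness and the $2kW$ kernel size follow from the already-established correctness of \2alg2kw (\cref{sec::appendix-alg2Proof}), and specializing to the claw-free case gives the claimed bound; the runtime is polynomial because each $W$-BCD computation is polynomial via \cref{theorem:lccd} (the packing parameter $r$ is bounded, and here step~\ref{step::noVertexOfB} needs no exponential search), the $S_{\calV'}$ argument is a single minimum-$W$-separator computation on one distinguished piece, and the \dbe of step~\ref{step::saveRedAlg2} is computed in $\O(|V|\,|E|)$ by \cref{thm::dbExapnsion}. I would conclude by noting that the while-loop runs at most $|V|$ times, yielding an overall polynomial bound.

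The main obstacle I anticipate is establishing the structural decomposition claim, namely that claw-free graphs genuinely admit a $W$-BCD with $H = C = \varnothing$ and at most one oversized piece in $\calR$. This is where the claw-free hypothesis must be used in an essential way, and the delicate part is ruling out a nonempty head $H$: one must show that property~4 of \cref{def:lccd} (every head vertex $h$ satisfying $w(h)+w(f^{-1}(h)) > \lambda$) cannot be forced in a claw-free graph without the crown collapsing, which amounts to a careful local analysis of neighborhoods and an application of the partitioning algorithm's guarantees. Once this is pinned down, verifying $B \cap S_{\calR'} = \varnothing$ via \cref{lemma::aIsAvailable} and the uniqueness of the single oversized separator is comparatively routine.
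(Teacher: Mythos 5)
Your proposal takes essentially the same route as the paper: the paper likewise constructs for connected claw-free graphs a $W$-BCD with $C,H=\varnothing$ and at most one piece of $\calR$ exceeding size $2W$ (\cref{lemma::clawFreeBCD}, proved by DFS-tree peeling adapted from Borndörfer et al.\ rather than by ruling out forced heads, since every vertex of a DFS-tree in a claw-free graph has at most two children), and then proves (\cref{thm::ClawFreeGeneral}) that with $H=\varnothing$ and $|\calR'|\leq 1$, \cref{lemma::aIsAvailable} forces the unique vertex of $S_{\calR'}$ to lie in $A$, so $B\cap S_{\calR'}=\varnothing$ in every iteration and the correctness of \2alg2kw carries over. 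Your two steps correspond exactly to these two results, so the argument is correct and essentially identical to the paper's.
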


We begin with preliminaries on the reducible structures of COC and their properties, which are essential for our proofs.
Following this, we focus on the kernelization algorithms and provide proofs of its correctness.

\subsection{Reducible Structures of the Component Order Connectivity Problem and its Properties}

We call a (not necessarily minimal) set of vertices that separates a graph into components of size at most $W$ a \emph{$W$-separator}. 
We define $(G, k, W)$ as an instance of COC, where it is labeled as a yes-instance if there exists a $W$-separator of size $k$; otherwise, it is labeled as a no-instance.
To design the kernelization algorithm, we briefly review, in addition to the preliminary content in \cref{sec::prelimCrownAndPack}, some definitions and theorems from \cite{DBLP:conf/gecco/Baguley0N0PZ23,DBLP:conf/esa/Casel0INZ21,fomin2019kernelization,DBLP:conf/iwpec/KumarL16}.
Of particular interest are the strictly reducible pairs introduced by Kumar and Lokshtanov~\cite{DBLP:conf/iwpec/KumarL16}, who proved that such structures must exist in graphs with more than $2kW$ vertices.

\begin{definition}[(strictly) reducible pair]
	\label{def::RedPair}
	For a graph $G=(V,E)$ and a parameter $W \in \N$, a pair $(A,B)$ of vertex disjoint subsets of $V$ is a \emph{reducible pair} for \normalfont{COC} if the following conditions are satisfied:
	\begin{itemize}
		\item $N(B) \subseteq A$.
		\item
		The size of each $Q \in \comp(G[B])$ is at most $W$.
		\item There is an assignment function $g \colon \comp(G[B]) \times A \to \N_0$, such that
		\begin{itemize}
			\item for all $Q \in \comp(G[B])$ and $a \in A$, if $g(Q,a) \ne 0$, then $a \in N(Q)$
			\item for all $a \in A$ we have $\sum_{Q \in \comp(G[B])} g(Q,a) \geq 2W-1$
			\item for all $Q \in \comp(G[B])$ we have  $\sum_{a \in A} g(Q,a) \leq |Q|$.
		\end{itemize} 
	\end{itemize} 
	In addition, if there exists an $a \in A$ such that $\sum_{Q \in \comp(G[B])} g(Q,a) \geq 2 W$, then $(A,B)$ is a \emph{strictly reducible pair}.
\end{definition}

We say that $(A, B)$ is a minimal (strictly) reducible pair if there does not exist a (strictly) reducible pair $(A', B')$ with $A' \subset A$ and $B' \subseteq B$. Clearly, if a graph $G$ contains a (strictly) reducible pair, then it also contains a minimal one. Furthermore, for $A' \subseteq A$, we define $\comp(G[B])_{A'} \subseteq \comp(G[B])$ as the set of components $\{ Q \in \comp(G[B]) \mid N(Q) \cap A' \neq \varnothing \}$. Before discussing the connection between a (strictly) reducible pair and COC, we state a property of minimal strictly reducible pairs found by Baguley et al.~\cite{DBLP:conf/gecco/Baguley0N0PZ23}.

\begin{lemma}[\cite{DBLP:conf/gecco/Baguley0N0PZ23}, Lemma~4.16]
	\label{lemma::SizeNeoghborhood}
	Let $(A,B)$ be a minimal strictly reducible pair in $G$ with parameter $W$.
	Then, for every $A' \subseteq A$ we have $|V(\comp(G[B])_{A'})| > |A'|(2W-1)$.
\end{lemma}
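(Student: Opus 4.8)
The plan is to establish the weak bound $|V(\comp(G[B])_{A'})| \ge |A'|(2W-1)$ directly from the defining inequalities of a reducible pair (\cref{def::RedPair}) by double-counting the assignment function $g$, and then to upgrade it to a strict inequality by combining the strictness of the witness with the minimality of $(A,B)$. Throughout I would fix a nonempty $A' \subseteq A$ (the case $A' = \varnothing$ is excluded, since the claimed bound then reads $0 > 0$), and abbreviate $\mathcal{D} := \comp(G[B])_{A'}$ for the components of $G[B]$ meeting $A'$, and $A'' := A \setminus A'$.

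For the weak bound, recall that each $a \in A'$ satisfies $\sum_{Q} g(Q,a) \ge 2W-1$. Since $g(Q,a) \neq 0$ forces $a \in N(Q)$, and $a \in A'$ then yields $N(Q) \cap A' \neq \varnothing$, i.e.\ $Q \in \mathcal{D}$, only components of $\mathcal{D}$ contribute to the weight received by $A'$. Hence
\[
|A'|(2W-1) \le \sum_{a \in A'} \sum_{Q} g(Q,a) = \sum_{Q \in \mathcal{D}} \sum_{a \in A'} g(Q,a) \le \sum_{Q \in \mathcal{D}} \sum_{a \in A} g(Q,a) \le \sum_{Q \in \mathcal{D}} |Q| = |V(\comp(G[B])_{A'})|,
\]
where the reindexing drops the zero terms from components outside $\mathcal{D}$, and the last two steps use $\sum_{a\in A} g(Q,a) \le |Q|$. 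This already gives the non-strict inequality, so it remains only to exclude equality.

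For strictness, I would assume equality and read off what each tight step forces on $g$: (a) $\sum_{Q} g(Q,a) = 2W-1$ for every $a \in A'$, and (b) $g(Q,a) = 0$ for every $Q \in \mathcal{D}$ and $a \in A''$ (the slack in the second inequality must vanish). Since $(A,B)$ is \emph{strictly} reducible, some $a^\star \in A$ satisfies $\sum_{Q} g(Q,a^\star) \ge 2W$; by (a) we get $a^\star \notin A'$, hence $a^\star \in A''$. If $A' = A$ this is already a contradiction, as there is no such $a^\star$. Otherwise $A'' \neq \varnothing$, and I would form the pair $(A'', B'')$ with $B'' := B \setminus V(\mathcal{D})$ and argue it is strictly reducible: every component of $G[B'']$ is a component of $G[B]$ avoiding $A'$, so its neighborhood lies in $A \setminus A' = A''$, giving $N(B'') \subseteq A''$; the restriction of $g$ to $\comp(G[B'']) \times A''$ inherits the neighborhood condition and the per-component cap; and for each $a \in A''$, property (b) guarantees $\mathcal{D}$ contributed nothing, so $\sum_{Q \in \comp(G[B''])} g(Q,a) = \sum_{Q} g(Q,a) \ge 2W-1$, with value $\ge 2W$ at $a^\star$. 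Thus $(A'', B'')$ is a strictly reducible pair with $A'' \subsetneq A$ and $B'' \subseteq B$, contradicting the minimality of $(A,B)$; so equality is impossible and the bound is strict.

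The delicate point, and where I expect the real work, is exactly the preservation of the $2W-1$ lower bound on $A''$ after deleting $\mathcal{D}$: a priori a component meeting $A'$ could also route $g$-weight into $A''$, and removing it would push those vertices below the threshold. The equality analysis is what rules this out, via property (b) forcing components of $\mathcal{D}$ to feed only $A'$. I would also take care with the degenerate case $A' = A$ (settled directly by (a) against strictness) and verify that the minimality definition indeed applies to $(A'', B'')$ with $A'' \subsetneq A$ and $B'' \subseteq B$.
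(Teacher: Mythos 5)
Your proof is correct, and it is worth noting that the paper itself offers no proof to compare against: \cref{lemma::SizeNeoghborhood} is imported verbatim from Baguley et al.\ (their Lemma~4.16), so your argument is a genuine standalone derivation from \cref{def::RedPair}. Both halves check out. The weak bound is a clean double count: for $a \in A'$ only components of $\comp(G[B])_{A'}$ can carry nonzero $g(Q,a)$, and the per-component cap $\sum_{a \in A} g(Q,a) \leq |Q|$ closes the chain. The strictness upgrade is the substantive part, and you identify the right mechanism: tightness forces (a) every $a \in A'$ to receive exactly $2W-1$, which expels the strictness witness $a^\star$ into $A'' = A \setminus A'$ (and settles $A' = A$ outright), and (b) every $Q \in \comp(G[B])_{A'}$ to route nothing into $A''$, which is precisely what makes the restricted assignment on $(A'', B \setminus V(\comp(G[B])_{A'}))$ retain both the $2W-1$ floor at every vertex of $A''$ and the $\geq 2W$ value at $a^\star$. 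Since components of $G[B'']$ are whole components of $G[B]$ avoiding $N(\cdot) \cap A' \neq \varnothing$, you get $N(B'') \subseteq A''$ for free, so $(A'', B'')$ is strictly reducible with $A'' \subsetneq A$ and $B'' \subseteq B$, contradicting minimality exactly as the definition in the paper requires. Your caveat about $A' = \varnothing$ is also right: the statement as literally written fails there ($0 > 0$), so the intended quantification is over nonempty $A'$, which is the only case the paper ever invokes (e.g.\ in the proofs of \cref{lemma::HeadEmptyIntersectionWithB,lemma::aIsAvailable}).
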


Considering \cref{lemma::SizeNeoghborhood}, if there is a minimal strictly reducible pair $(A, B)$ in $G$ and we remove a proper subset $A'$ of $A$ from $G$, then $G - A'$ still contains a strictly reducible pair with head vertices $A \setminus A'$. Specifically, the specified vertex, guaranteed to receive a fractional assignment of at least $2W$ over $g$ from $\comp(G[B])$, can be chosen arbitrarily (cf.~\cref{def::RedPair}). In our algorithm, these properties enable a stepwise approach in localizing minimal strictly reducible pairs.

A reducible pair $(A, B)$ is, in fact, a $(W, W)$-CD, or equivalently, an $(A, \varnothing, W, f, 2W - 1)$ \dbe\ where $B = f^{-1}(A)$. Note that $|B| > |A|(2W - 1)$; hence, it can be identified in graphs resembling bipartite graphs with the required expansion properties (see \cref{def::demBalDec} and \cref{thm::dbExapnsion}). Here, the head and crown correspond to $A$ and $B$, respectively. 

Once a $(W, W)$-CD is identified, we can reduce the COC instance by including the head vertices in the solution. In this case, no additional vertices from the crown are required, since after removing the head vertices, the crown’s connected components are isolated from $G$ and each connected component has size at most $W$. 

In the following, we outline why this reduction is safe for COC. Let $P_1, \dots, P_m \subseteq V$ be a $(W+1)$-packing. For a $W$-separator $S \subseteq V$, it holds that $S \cap P_i \neq \varnothing$ for all $i \in [m]$. Therefore, the size of a $(W+1)$-packing provides a lower bound on the number of vertices required for a $W$-separator. Considering a $(W, W)$-CD $C, H, f$ in $G$, the function $f$ enables the construction of a $(W+1)$-packing of size $|H|$ in $G[H \cup C]$. This essentially provides a lower bound of $|H|$ vertices for a $W$-separator in $G[H \cup C]$.
On the other hand, $H$ serves as a $W$-separator of $G[A \cup B]$ while separating $C$ from the rest of the graph. These properties support the following theorem for $(W, W)$-CDs.

\begin{lemma}[\cite{DBLP:conf/esa/Casel0INZ21}~Lemma 49]
	\label{thm::saveReduction}
	Let $(G,k)$ be an instance of COC and $(A,B,f)$ a $(W,W)$-CD in $G$.
	Then, $(G,k)$ is a yes-instance if and only if $(G - (A \cup B), k - |A|)$ is a yes-instance.
\end{lemma}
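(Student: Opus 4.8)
The plan is to prove the two implications separately, with the implication ``the reduced instance is a yes-instance $\Rightarrow$ the original is'' being routine and the converse carrying the real content via a packing lower bound. Throughout I unfold the $(W,W)$-\cd using \cref{def::demBalDec}: writing $A$ for the head and $B = V(f^{-1}(A))$ for the crown, with $f\colon\comp(G[B])\to A$, the defining properties (with the demand $d_a = 2W-1$ and $y=W$, specialized to unit weights) give (i) $|Q|\le W$ for every $Q\in\comp(G[B])$, (ii) $N(B)\subseteq A$, so that $A$ separates $B$ from the rest of $G$, and (iii) $|\{a\}\cup V(f^{-1}(a))|\ge W+1$ for every $a\in A$, since condition~\ref{item::bDCD-ub} becomes $1 + |V(f^{-1}(a))| > W$.

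For the backward direction, assume $S'$ is a $W$-separator of $G-(A\cup B)$ with $|S'|\le k-|A|$, and set $S = S'\cup A$, so $|S|\le k$. To see $S$ is a $W$-separator of $G$, I note that $S\cap B=\varnothing$, so all of $B$ survives in $G-S$; by (ii) the only external edges of $B$ went to $A$, hence the surviving part of $B$ breaks into exactly the components of $\comp(G[B])$, each of size at most $W$ by (i). Every other component of $G-S$ lives in $(V\setminus(A\cup B))\setminus S'$ and is therefore a component of $(G-(A\cup B))-S'$, of size at most $W$ by the choice of $S'$. Thus $G-S$ has all components of size at most $W$.

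For the forward direction, assume $S$ is a $W$-separator of $G$ with $|S|\le k$, and take $S' = S\setminus(A\cup B)$. First, $(G-(A\cup B))-S'$ equals $G-(A\cup B\cup S)$, an induced subgraph of $G-S$, so all its components have size at most $W$ and $S'$ is a $W$-separator of $G-(A\cup B)$. The heart of the argument is the size bound, which I reduce to the claim $|S\cap(A\cup B)|\ge|A|$. To establish it, for each $a\in A$ I form $P_a = \{a\}\cup V(f^{-1}(a))$; these are pairwise disjoint because $f$ assigns each $Q\in\comp(G[B])$ to a unique head vertex, each $G[P_a]$ is connected because every $Q\in f^{-1}(a)$ is connected and attached to $a$ through $f(Q)=a\in N(Q)$, and $|P_a|\ge W+1$ by (iii). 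So $\{P_a\}_{a\in A}$ is a $(W+1)$-packing of size $|A|$ inside $A\cup B$. Since a $W$-separator must intersect every connected vertex set of size exceeding $W$ (otherwise that set would sit inside a single component of size $>W$), $S$ meets each $P_a$, and disjointness yields $|S\cap(A\cup B)|\ge\sum_{a\in A}|S\cap P_a|\ge|A|$. Hence $|S'| = |S|-|S\cap(A\cup B)|\le k-|A|$.

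I expect the main obstacle to be making the packing lower bound of the forward direction fully rigorous: one must verify that the sets $P_a$ are genuinely vertex-disjoint and induce connected subgraphs (which rests on $f$ being a well-defined function into $A$ together with the assignment condition $f(Q)\in N(Q)$), and then convert ``$S$ meets each $P_a$'' into the additive count $|S\cap(A\cup B)|\ge|A|$ using disjointness. The remaining steps --- identifying $(G-(A\cup B))-S'$ with $G-(A\cup B\cup S)$ and the component-size bookkeeping in the backward direction --- are routine.
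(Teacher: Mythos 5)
Your proof is correct and follows essentially the same route as the paper: the paper (which cites this lemma from Casel et al.\ and sketches its justification just before the statement) argues exactly via the $(W+1)$-packing $\{\{a\}\cup V(f^{-1}(a))\}_{a\in A}$ forcing any $W$-separator to spend at least $|A|$ vertices inside $A\cup B$, together with the observation that deleting $A$ isolates the crown into components of size at most $W$. Your unfolding of the $(W,W)$-CD as a $(A,\varnothing,W,f,2W-1)$ \dbe and the resulting bounds (i)--(iii) match the intended reading, so there is nothing to correct.
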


Finally, the relationship between the existence of a strictly reducible pair and the attainment of the desired kernelization of $2kW$ is demonstrated in the following lemma.

\begin{lemma}[\cite{fomin2019kernelization}, Lemma~6.14]
	\label{lemma::existenceReducible}
	Let $(G,k,W)$ be an instance of the $W$-separator problem, such that each component in $G$ has size at least $W+1$.
	If $|V| > 2kW$ and $(G,k,W)$ is a yes-instance, then there exists a strictly reducible pair $(A,B)$ in $G$. 
\end{lemma}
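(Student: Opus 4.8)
The plan is to extract the strictly reducible pair from a minimum solution via a single assignment/flow computation, choosing the head to be the subset of the solution with maximal ``surplus''.

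\textbf{Setting up the bipartite assignment.} Since $(G,k,W)$ is a yes-instance, I would fix a $W$-separator $S$ with $|S|\le k$; every component of $G-S$ then has size at most $W$. First I record the size bound that $|V|>2kW$ buys us: as $|S|\le k$,
\[
|V(G-S)| = |V|-|S| \;\ge\; 2kW+1-|S| \;\ge\; 2|S|W+1-|S| \;=\; |S|(2W-1)+1 .
\]
Next, because every component of $G$ has size at least $W+1>W$, no component $Q\in\comp(G-S)$ can be an entire component of $G$, so each such $Q$ has a neighbour in $S$, i.e.\ $N(Q)\neq\varnothing$. This lets me treat $S$ together with $\comp(G-S)$ (each $Q$ carrying ``supply'' $|Q|$) as the two sides of the head/crown assignment of \cref{def::RedPair} with target demand $2W-1$ per head vertex.

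\textbf{Choosing the head.} For $A'\subseteq S$ let $c(A')=\sum\{\,|Q| : Q\in\comp(G-S),\ N(Q)\subseteq A'\,\}$ be the total supply \emph{owned} by $A'$, and set $\phi(A')=c(A')-|A'|(2W-1)$. I would take $A$ to be a global maximiser of $\phi$. Since there are no isolated components, $\phi(\varnothing)=0$, whereas the size bound gives $\phi(S)=|V(G-S)|-|S|(2W-1)\ge 1$; hence the maximiser $A$ is non-empty and satisfies $\phi(A)\ge 1$. I let $B=\bigcup\{\,Q\in\comp(G-S) : N(Q)\subseteq A\,\}$, so that $N(B)\subseteq A$ and $\comp(G[B])$ is exactly the family of owned components, each of size at most $W$.

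\textbf{Extracting $g$ and strictness.} I would then build the standard assignment network (cf.~\cref{pic::MaxFlowNetwork}): a source $s$ with arcs $s\to Q$ of capacity $|Q|$ for each owned $Q$, arcs $Q\to a$ of infinite capacity for $a\in N(Q)$, and arcs $a\to t$ of capacity $2W-1$ for $a\in A$. A finite $s$--$t$ cut is described by the set $A_s\subseteq A$ left on the source side, with value $|A_s|(2W-1)+(c(A)-c(A_s))$; this is at least $|A|(2W-1)$ for \emph{every} $A_s$ precisely because $\phi(A)\ge\phi(A_s)$, which holds as $A$ maximises $\phi$. Thus the minimum cut equals $|A|(2W-1)$ and there is an integral flow saturating every demand; reading off the arcs $Q\to a$ yields $g\colon\comp(G[B])\times A\to\N_0$ with $\sum_{Q}g(Q,a)\ge 2W-1$ for all $a\in A$, with $\sum_{a}g(Q,a)\le|Q|$, and with $g(Q,a)\neq0\Rightarrow a\in N(Q)$. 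Finally, since the supply $c(A)\ge|A|(2W-1)+1$ strictly exceeds the routed flow, some owned $Q$ retains slack on $s\to Q$; pushing one extra unit along $Q\to a_0$ for a neighbour $a_0\in N(Q)\subseteq A$ raises that head's load to $2W$ while preserving all constraints, so $(A,B)$ is \emph{strictly} reducible.

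\textbf{Main obstacle.} The technical heart is the min-cut computation identifying ``$A$ maximises $\phi$'' with ``all demands are simultaneously saturable at level $2W-1$''; this is exactly the fractional content underlying \cref{thm::dbExapnsion}, whose guarantee $A_1\neq\varnothing$ rests on the same surplus condition, so an alternative would be to invoke \cref{thm::dbExapnsion} with demands $d_a=2W-1$ and read the assignment from its flow. The one point needing care beyond that is upgrading the saturated-but-equal solution to a \emph{strict} one, which the surplus $\phi(A)\ge1$ supplies through the single rerouting step above.
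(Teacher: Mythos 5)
Correct, and this is essentially the argument behind the paper's source for this statement: the paper does not prove the lemma itself but imports it from Fomin et al.\ (Lemma~6.14), whose proof likewise fixes a solution $S$, selects $A\subseteq S$ by a surplus-maximality condition, and saturates the per-head demand $2W-1$ via a max-flow/min-cut (weighted expansion) argument, with the strict head obtained from the leftover unit of supply exactly as you do. Your construction is the fractional/splittable form of the paper's own expansion machinery (\cref{thm::dbExapnsion} and its fractional variant in \cref{appendix::prelim}), so the proposal matches the intended proof in both structure and substance.
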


The fundamental structure utilized in the kernelization algorithm is again the BCD.
The established relationships to COC are presented in the following lemma.

\begin{lemma}[\cite{DBLP:conf/esa/Casel0INZ21}, Lemma 48 and 49]
	\label{lemma::saveReductionBCD}
	Let $(G,k,W)$ be an instance of COC, and $\chrf$ a $W$-BCD in $G$.
	$(G,k)$ is a yes-instance if and only if $(G - (H \cup C), k - |A|)$ is a yes-instance.
	Furthermore, if $|\calR| + |H| > k$, then $(G,k)$ is a no-instance.
\end{lemma}

\subsection{Kernelization Algorithm}

In essence, the kernelization algorithm focuses on analyzing pairs $(A,B)$ in a $W$-BCD.
We show that the head vertices $A$ have specific traits in a $W$-BCD, making it possible to locate them.
Structurally, the algorithm operates as a bounded search tree, hence it is presented recursively. 
The input is an instance $(G,k,W)$ of COC, where $|V(G)| > 2kW$ and each component of $G$ has at least $W+1$ vertices.
The size limits for the input graph are not arbitrary; if $|V(G)| \leq 2kW$, there is nothing to do, and removing components smaller than $W+1$ is a safe reduction.

For $V' \subseteq V$ we define $\sepw(V') \in \N$ as the cardinality of a minimum $W$-separator in $G[V']$, and $\argSepw(V') \subset V$ as an argument suitable to this cardinality.
For a graph $G' \subseteq G$ let $\gLarge(G')$ be the graph obtained by removing all components of size at most $W$ from $G$.
Lastly, for a $W$-BCD $\chrf$ we define $\calR':= \{R \in \calR \mid |R| > 2W \text{ and } \sepw(R)=1\}$ and $S_{\calR'} := \bigcup_{R \in \calR'} \argSepw(R)$
(uniqueness of $\argSepw(R)$ for $R \in \calR'$ is shown in \cref{lemma::UniqueSep}). 

\paragraph*{Find reducible structures (\alg2kw)}
\begin{enumerate}	
	\item
	Compute a $W$-BCD $\chrf$ in $G$ and initialize $t=|H|+|\calR|$ and $S = \varnothing$.
	
	\item 
	\label{step:recStart}
	
	Let $G' = \gLarge(G-S)$.
	If $G'$ is an empty graph return a trivial yes-instance.
	Otherwise, compute a $W$-BCD $\chrf$ in $G'$.
	\begin{enumerate}
		\item If $|H|+|\calR| > k$ return a trivial no-instance.
		\label{step::NoInst1}
		\item 
		\label{step::RedPairCeck}
		Let $\Q$ be the connected components of size at most $W$ in $G-S$.
		Let $A= S \cup H$ and $B= C \cup V(\Q)$.
		Compute a \dbe $(A_1,A_2,W,f,2W-1)$ in $G[A \cup B]$ using \cref{thm::dbExapnsion}.
		If $A_1 \neq \varnothing$, then terminate the algorithm and return $(G-(A_1 \cup f^{-1}(A_1)),k-|A_1|,W)$.
	\end{enumerate}
	
	\item If the depth of the recursion is more than $\min(3t,k)$, then break.
	\label{step::dephtOfRec}
	
	\item For each $v \in H \cup S_{\calR'}$:
	\label{step::forLoop}
	\begin{itemize}
		\item Add $v$ to $S$ and recurse from step~\ref{step:recStart}.
	\end{itemize}
	
	\item 
	\label{step::NoInst2}
	Return a trivial no-instance.
\end{enumerate}

We split the correctness proof of Algorithm \alg2kw into the following three lemmas:

\begin{lemma}
	\label{lemma::locateStrictPair}
	Let $(G=(V,E),k,W)$ be a yes-instance and let $(A,B)$ be a minimal strictly reducible pair in $G$.
	Then, Algorithm \alg2kw finds $(A,B)$ and has a runtime of $\O(r^2|V||E| \cdot r^{\min(3r,k)})$, where $r$ is the size of a maximum $(W+1)$-packing.
\end{lemma}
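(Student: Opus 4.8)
The plan is to prove correctness and the runtime bound separately. For correctness I would track the working set $S$ along one distinguished branch of the bounded search tree and show it climbs into $A$ one vertex at a time. Concretely, I would prove by induction on $i$ the invariant that some node at recursion depth $i$ carries a working set $S_i$ with $S_i\subseteq A$, $S_i\cap B=\varnothing$, and $|S_i|=i$. The base case $S_0=\varnothing$ is immediate. For the inductive step, assume $S_i$ is a proper subset of $A$; the call at $S_i$ computes a $W$-BCD $\chrf$ of $\gLarge(G-S_i)$, and since $S_i\cap B=\varnothing$ and $S_i\subsetneq A$, \cref{lemma::aIsAvailable} supplies a vertex $v\in(H\cup S_{\calR'})\cap A$. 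As step~\ref{step::forLoop} branches over every vertex of $H\cup S_{\calR'}$, one child sets $S_{i+1}=S_i\cup\{v\}\subseteq A$, which again misses $B$ and has size $i+1$. Iterating, the branch reaches a node with $S=A$.

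At that node I would argue step~\ref{step::RedPairCeck} succeeds. With $S=A$ the components of $B$, each of size at most $W$ and with $N(B)\subseteq A$, survive deletion of $S$ only as small components, so they are gathered into $\Q$ and hence into $B=C\cup V(\Q)$, while $A\subseteq S\cup H$. Invoking \cref{thm::dbExapnsion} with uniform demand $2W-1$ and the witness $A'=A$: by \cref{lemma::SizeNeoghborhood} every component of $B$ lies in $\B_{A}$ and $|V(\B_A)|\ge|V(\comp(G[B])_A)|>|A|(2W-1)$, so $|A|+|V(\B_A)|>2W|A|\ge(2W-1)|A|=\sum_{a\in A}d_a$ and the existence clause of \cref{thm::dbExapnsion} forces $A_1\neq\varnothing$. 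Then $(A_1,f^{-1}(A_1))$ is a $(W,W)$-CD, so the reduction to $(G-(A_1\cup f^{-1}(A_1)),k-|A_1|,W)$ is safe by \cref{thm::saveReduction}.

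The step I expect to be the main obstacle is ruling out premature termination at the depth cutoff of step~\ref{step::dephtOfRec}, i.e.\ showing $|A|\le\min(3t,k)$ so the distinguished branch can reach $S=A$. The bound $|A|\le k$ is the easy half: joining each $a\in A$ to its at least $2W-1$ assigned crown vertices yields a $(W+1)$-packing of size $|A|$, and every $W$-separator of a yes-instance (size at most $k$) must hit each packing element, so $k\ge|A|$. The harder half is $|A|\le 3t$, which I would obtain by relating $A$ to the maximal $(W+1)$-packing $\P=\calR\cup\{\{h\}\cup f^{-1}(h)\}_{h\in H}$ underlying the initial BCD, combining $|A\cup B|>2W|A|$ (again from \cref{lemma::SizeNeoghborhood}) with the size control $|R'|\le 3W$ on the parts of $\calR$ to charge the vertices of $A$ against packing elements; controlling the unbounded head-crown parts of $\P$ is the delicate point here.

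For the runtime, each node recomputes a $W$-BCD in $\O(r^2|V||E|)$ by \cref{theorem:lccd} and one demanded balanced expansion in $\O(|V||E|)$ by \cref{thm::dbExapnsion}, and assembling $\calR'$ and $S_{\calR'}$ is no more costly. The branching factor is $|H\cup S_{\calR'}|\le|H|+|\calR|\le r$ and, with $t\le r$, the depth is at most $\min(3t,k)\le 3r$, so the tree has $\O(r^{3r})$ nodes and the total runtime is $\O(r^2|V||E|\cdot r^{3r})$.
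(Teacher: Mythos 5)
Your overall architecture matches the paper's proof: induction along a distinguished branch using \cref{lemma::aIsAvailable} to add one vertex of $A$ per level while keeping $S\subseteq A$ and $S\cap B=\varnothing$, extraction at $S=A$ via \cref{thm::dbExapnsion} with uniform demands $2W-1$ (your witness computation $|A|+|V(\B_A)|>2W|A|\geq\sum_{a\in A}d_a$ is fine, and safeness follows as you say from \cref{thm::saveReduction}), and the identical runtime accounting (cost $\O(r^2|V||E|)$ per node by \cref{theorem:lccd,thm::dbExapnsion}, branching factor at most $r$, depth at most $3r$).

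However, there is a genuine gap exactly where you flag it: the bound $|A|\leq 3t$, without which the depth cutoff in step~\ref{step::dephtOfRec} could terminate the distinguished branch before it reaches $S=A$, and the sketch you offer in its place does not go through. Charging the vertices of $A$ directly against the parts of $\P=\calR\cup\{\{h\}\cup V(f^{-1}(h))\}_{h\in H}$ using $|R'|\leq 3W$ stalls on the head--crown parts $\{h\}\cup f^{-1}(h)$, whose size is unbounded from above, so nothing a priori prevents many vertices of $A$ from lying inside a single such part; this is precisely the ``delicate point'' you name, and it is not resolvable by the inequality $|A\cup B|>2W|A|$ alone. The paper closes this step differently, by invoking an external result: the packing $\P$ induced by a $W$-BCD is a $3$-approximation to the maximum $(W+1)$-packing (Casel et al.~\cite{DBLP:conf/esa/Casel0INZ21}, Theorem~10), so $r\leq 3t$; combined with $|A|\leq r$ — which you essentially already have, since a (strictly) reducible pair yields a $(W+1)$-packing of size $|A|$ in $G[A\cup B]$ (cf.~\cref{lemma::packingReduciblePair}) — this gives $|A|\leq r\leq 3t$ with no charging argument at all. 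So the missing idea is this approximation guarantee for the BCD packing; note also that results like \cref{lemma::maxPackAllAIndividual} cannot substitute for it, since they concern \emph{maximum} packings while $\P$ need not be maximum. The remainder of your proposal is correct and coincides with the paper's argument.
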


\begin{lemma}
	\label{lemma::saveRedInAlg1}
	In the Algorithm \alg2kw in step~\ref{step::RedPairCeck}, if the algorithm terminates, then $A_1$ is part of an optimal $W$-separator in $G$.
\end{lemma}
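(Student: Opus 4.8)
The plan is to show that whenever the algorithm terminates in step~\ref{step::RedPairCeck}, the set $A_1$ together with the components assigned to it forms a genuine $(W,W)$-CD \emph{in $G$}, so that the crown-reduction machinery behind \cref{thm::saveReduction} certifies that $A_1$ belongs to some optimal $W$-separator. Write $A = S \cup H$ and $B = C \cup V(\Q)$ as in the step, and recall that the computed tuple $(A_1,A_2,W,f,2W-1)$ is a \dbe of $G[A \cup B]$ with uniform demand $2W-1$ and parameter $y = W$. The first and decisive task is to establish the separation property $N_G(B) \subseteq A$, i.e.~that $A$ separates $B$ from the rest of the \emph{whole} graph, not merely inside the induced subgraph $G[A \cup B]$. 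For the part $V(\Q)$ this is immediate, since the elements of $\Q$ are connected components of $G-S$ and hence have all their $G$-neighbours in $S \subseteq A$. For the part $C$ I would combine three facts: (i) $N(C) \subseteq H$ by the defining property of the $W$-BCD of $G' = \gLarge(G-S)$; (ii) $C$ and $V(\Q)$ lie in different connected components of $G-S$ (large versus small), so there is no edge between them in $G$; and (iii) any remaining $G$-neighbour of $C$ can only lie in $S$. Together these give $N_G(C) \subseteq H \cup S = A$, and therefore $N_G(B) \subseteq A$.

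With separation in hand, I would lift the \dbe to a crown decomposition of $G$. By \cref{def::demBalDec}, a $(W,W)$-CD is exactly a $(A_1,\varnothing,W,f,2W-1)$ \dbe with head $A_1$ and crown $f^{-1}(A_1)$. Conditions~1 and~2 on $\comp(G[B])$ are inherited verbatim, and for $a \in A_1$ the demand condition reads $w(a) + w(f^{-1}(a)) > (2W-1) - W + 1 = W$, so each connected set $\{a\} \cup f^{-1}(a)$ has at least $W+1$ vertices. The only condition requiring care is $N_G(f^{-1}(A_1)) \subseteq A_1$: the \dbe only guarantees $N_{G[A\cup B]}(f^{-1}(A_1)) \subseteq A_1$, but since $f^{-1}(A_1) \subseteq B$ we have $N_G(f^{-1}(A_1)) \subseteq N_G(B) \subseteq A \subseteq A \cup B$, so the two neighbourhoods coincide and the inclusion transfers to $G$. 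Thus $(A_1, f^{-1}(A_1), f)$ is a $(W,W)$-CD in $G$ as soon as $A_1 \neq \varnothing$.

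Finally I would deduce optimality by a swapping argument. The sets $\{a\} \cup f^{-1}(a)$, $a \in A_1$, are pairwise disjoint connected subgraphs of size at least $W+1$, so they form a $(W+1)$-packing of size $|A_1|$ inside $G[A_1 \cup f^{-1}(A_1)]$; hence every $W$-separator $S^{\ast}$ of $G$ satisfies $|S^{\ast} \cap (A_1 \cup f^{-1}(A_1))| \geq |A_1|$. Given an optimal $S^{\ast}$, set $S' = (S^{\ast} \setminus (A_1 \cup f^{-1}(A_1))) \cup A_1$. Then $|S'| \leq |S^{\ast}|$, and $S'$ is still a $W$-separator: the vertices $A_1 \subseteq S'$ isolate $f^{-1}(A_1)$ into pieces of size at most $W$ (by condition~1 together with the separation property), while on $V \setminus (A_1 \cup f^{-1}(A_1))$ the set $S'$ agrees with $S^{\ast}$, so the induced subgraph there is a subgraph of $G - S^{\ast}$ and all its components have size at most $W$. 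Hence $S'$ is an optimal $W$-separator containing $A_1$; this is exactly the content underlying \cref{thm::saveReduction}, which may alternatively be invoked directly.

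I expect the separation step to be the main obstacle: the \dbe is only guaranteed to behave well \emph{inside} $G[A \cup B]$, and the whole argument collapses unless one verifies that the crown $f^{-1}(A_1)$ has no $G$-neighbour outside $A_1$. This hinges on the precise definitions of $G' = \gLarge(G-S)$ and of $\Q$ as the small components of $G-S$, and it is the one place where the particular way $A$ and $B$ are assembled in step~\ref{step::RedPairCeck} genuinely matters.
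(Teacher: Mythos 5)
Your proposal is correct and follows essentially the same route as the paper: both proofs verify that the components of $G[B]$ have size at most $W$ (no edges between $C$ and $V(\Q)$ since $S$ separates the small components) and that $A = S \cup H$ separates $B$ in all of $G$, so that the \dbe's separation of $B_1 = V(f^{-1}(A_1))$ inside $G[A \cup B]$ lifts to a genuine $(W,W)$-CD in $G$, whence \cref{thm::saveReduction} applies. Your explicit packing-and-swapping argument in the final paragraph is just an unrolled proof of \cref{thm::saveReduction}, which the paper cites directly, as you note yourself.
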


\begin{lemma}
	\label{lemma::conclusionIsCorrect}
	If Algorithm \alg2kw labels $(G,k,W)$ as no-instance or yes-instance, then this conclusion is correct.
\end{lemma}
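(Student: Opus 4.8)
The plan is to verify separately the correctness of each yes/no label that Algorithm \alg2kw can emit, namely a \emph{yes}-instance in step~\ref{step:recStart}, a \emph{no}-instance in step~\ref{step::NoInst1}, and a \emph{no}-instance in step~\ref{step::NoInst2}; the only other output, the reduced instance produced in step~\ref{step::RedPairCeck}, is not a yes/no label and is already handled by \cref{lemma::saveRedInAlg1}. Throughout I would use the standing precondition that $|V(G)| > 2kW$ and that every component of $G$ has at least $W+1$ vertices, together with the fact that the recursion never modifies $G$ (it only grows $S$), so these properties persist in every recursive call.

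First I would treat the yes-label. The algorithm returns \emph{yes} in step~\ref{step:recStart} exactly when $G' = \gLarge(G-S)$ is empty, i.e.\ when every connected component of $G-S$ has size at most $W$; hence $S$ is a $W$-separator of $G$. Since step~\ref{step::dephtOfRec} caps the recursion depth by $\min(3t,k) \le k$ and $S$ gains precisely one vertex per level, $|S| \le k$, so $S$ witnesses that $(G,k,W)$ is a yes-instance.

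Next I would treat the no-label of step~\ref{step::NoInst1}, which fires when the $W$-BCD $(C,H,\calR,f)$ of $G'=\gLarge(G-S)$ satisfies $|H|+|\calR| > k$. The key observation is that $\calR \cup \{\{h\}\cup V(f^{-1}(h))\}_{h\in H}$ is a $(W+1)$-packing of $G'$ of size $|H|+|\calR|$: its members are connected, pairwise disjoint, and of size larger than $W$ by properties~4 and~5 of \cref{def:lccd}. As every connected subgraph of size exceeding $W$ must contain a vertex of any $W$-separator, each $W$-separator of $G'$ has more than $k$ vertices. To transfer this bound to $G$, I would take an arbitrary $W$-separator $T$ of $G$ and note that $T\setminus S$ is a $W$-separator of $G-S$ (because $(G-S)-(T\setminus S)=G[V\setminus(S\cup T)]$ is an induced subgraph of $G-T$), and that $(T\setminus S)\cap V(G')$ is a $W$-separator of the induced subgraph $G'$ of $G-S$. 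Chaining $|T| \ge |T\setminus S| \ge |(T\setminus S)\cap V(G')| > k$ shows every $W$-separator of $G$ exceeds $k$, so $(G,k,W)$ is a no-instance. This argument is independent of how $S$ was built, which is exactly what makes a no-label emitted from a deep recursion node legitimate for the original instance.

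Finally, the main obstacle is the no-label of step~\ref{step::NoInst2}, which is reached only after a branch exhausts its for-loop (step~\ref{step::forLoop}) without the algorithm ever firing step~\ref{step::RedPairCeck}. Here I would argue by contraposition and lean on the localization guarantee. Suppose $(G,k,W)$ were a yes-instance; since $|V(G)| > 2kW$ and every component has at least $W+1$ vertices, \cref{lemma::existenceReducible} yields a minimal strictly reducible pair $(A,B)$ in $G$. By \cref{lemma::locateStrictPair} the algorithm then locates $(A,B)$: at the search node with $S=A$ the demanded balanced expansion computed in step~\ref{step::RedPairCeck} has $A_1 \neq \varnothing$, so the algorithm terminates with a reduced instance and never reaches step~\ref{step::NoInst2}, a contradiction. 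Hence reaching step~\ref{step::NoInst2} forces $(G,k,W)$ to be a no-instance. The delicate point, already carried by \cref{lemma::locateStrictPair} (through the progress statement \cref{lemma::aIsAvailable} and the depth budget $\min(3t,k)$), is that on a yes-instance the search actually descends all the way to $S=A$ without being aborted prematurely; in particular step~\ref{step::NoInst1} can never trigger on a yes-instance, which is consistent with the preceding paragraph. Combining the three cases establishes the lemma.
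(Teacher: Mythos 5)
Your proof is correct and follows essentially the same three-case structure as the paper's: the yes-label via the depth bound making $S$ a $W$-separator of size at most $k$, the no-label of step~\ref{step::NoInst1} via the $(W+1)$-packing of size $|H|+|\calR|>k$, and the no-label of step~\ref{step::NoInst2} by contraposition through \cref{lemma::existenceReducible} and \cref{lemma::locateStrictPair}. The one place you go beyond the paper is step~\ref{step::NoInst1}: the paper simply cites \cref{thm::saveReductionBCD}, which is stated for a $W$-BCD of $G$ itself, whereas the BCD there lives in $G'=\gLarge(G-S)$; your explicit transfer argument (a packing of $G'$ is a packing of $G$, equivalently restricting any $W$-separator $T$ of $G$ to $(T\setminus S)\cap V(G')$) closes this small gap that the paper leaves implicit.
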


If \cref{lemma::locateStrictPair} holds, then the instance can be reduced by algorithm \alg2kw, as guaranteed by \cref{thm::saveReduction} and \cref{lemma::existenceReducible}. As previously mentioned, a reducible pair $(A, B)$ is essentially a $(W, W)$-CD or equivalently an $(A, \varnothing, W, f, 2W - 1)$ \dbe. 

Note that since $(A, B)$ is a $(W, W)$-CD, it also forms a packing of size $|A|$ in $G$. Therefore, if algorithm \alg2kw functions correctly within the runtime specified by \cref{lemma::locateStrictPair}, this would provide the proof for \cref{theorem::2kWKern}, as the exhaustive application of algorithm \alg2kw is constrained by the size of a maximum $(W+1)$-packing.

We begin by showing that the potential reduction of the instance in step~\ref{step::RedPairCeck} is valid, thereby proving \cref{lemma::saveRedInAlg1}.

\begin{proof}[Proof of \cref{lemma::saveRedInAlg1}]
	Observe that for a \dbe $(A_1, A_2, W, f, 2W - 1)$ in $G$, the tuple $(A_1, V(f^{-1}(A_1)), f)$ forms a $(W, W)$-CD and thus constitutes a reducible structure, as established by \cref{thm::saveReduction}. 
	
	To demonstrate that the application of \cref{thm::dbExapnsion} is valid in step~\ref{step::RedPairCeck}, we must verify that the size constraints of the components in $B = C \cup V(\Q)$ are satisfied and that $A_1 \subseteq A = S \cup H$ separates $B_1 = V(f^{-1}(A_1)) \subseteq C \cup V(\Q)$ from the remainder of the graph.
	 
	First, note that $S$ separates $V(\Q)$ from the remainder of the graph, indicating that $C$ cannot have an edge to $V(\Q)$.
	Consequently, by the definitions of $\Q$ and $C$, the subgraph $G[B]$ contains only connected components of size at most $W$.
	Thus, the size constraints of the components in $B$ are satisfied for the application of \cref{thm::dbExapnsion}.

	Moreover, by the definitions of $H$ and $S$, the vertices $A = H \cup S$ separate $B$ from the rest of the graph. Specifically, since $A_1$ separates $B_1$ from $A \setminus A_1$ in $G[A \cup B]$ as stated by \cref{thm::dbExapnsion} (i.e., $N(B_1) \subseteq A_1$ in $G[A \cup B]$), the vertices $A_1$ also separate $B_1$ from $V(G) \setminus B_1$ in $G$.
\end{proof}

The key aspect of the algorithm is the localization of a minimal strictly reducible pair, if it exists in the graph, as stated in \cref{lemma::locateStrictPair}. Let $(A,B)$ be a minimal strictly reducible pair in $G$. As previously mentioned, Algorithm \alg2kw essentially functions as a bounded search tree with a working vertex set $S$, which consists of potential head vertices. If we can ensure that $S$ corresponds to the vertex set $A$ at any given point, then $(A,B)$ is localized; once this occurs, step~\ref{step::RedPairCeck} will guarantee its extraction.

By the definition of a strictly reducible pair, we have $|B| > (2W - 1)|A|$, while $A$ separates $B$ from the rest of the graph. Thus, \cref{thm::dbExapnsion}, with demands $d_a = 2W - 1$ for every $a \in A$, ensures the identification of $(A,B)$ as a reducible pair when $S = A$.

It is also noteworthy that a reducible pair other than $(A,B)$ may be identified in step~\ref{step::RedPairCeck}. However, as long as it fulfills the properties of such a pair (see \cref{def::RedPair}), the reduction remains valid (cf. \cref{lemma::saveRedInAlg1}).

The following lemma implies a depth-wise progression in the bounded search tree towards $S = A$. Let $(A,B)$ be a minimal strictly reducible pair in $G = (V,E)$ and let $S \subset V$. Let $\chrf$ be a $W$-BCD in $\gLarge(G - S)$.

\begin{lemma}
	\label{lemma::aIsAvailable}
	If $S \cap B = \varnothing$ and $S \nsubseteq A$, then $(H \cup S_{\calR'}) \cap A  \ne \varnothing$. 
\end{lemma}

If we reach step~\ref{step::forLoop}, we currently do not possess a reducible pair and can therefore assume that $S \neq A$. This indicates that \cref{lemma::aIsAvailable} provides the following relation to the algorithm: if $S \subset A$ (where $S$ may be empty), then we can find at least one vertex from $A$ in $H \cup S_{\calR'}$.
Since the algorithm expands $S$ in a branch by each vertex of $H \cup S_{\calR'}$, with at least one vertex originating from $A$, and repeats this process with the resulting vertex set, we will eventually arrive at the case $S = A$. 
Thus, if we can demonstrate that $S = A$ occurs in the worst case within the specified runtime of \cref{lemma::locateStrictPair}, then \cref{lemma::aIsAvailable} implies \cref{lemma::locateStrictPair}.

Towards proving \cref{lemma::aIsAvailable} we observe the following relation between minimal strictly reducible pairs and weighted crown decompositions.

\begin{lemma}
	\label{lemma::HeadEmptyIntersectionWithB}
	If $S \cap B = \varnothing$, then $H \cap B = \varnothing$. 
\end{lemma}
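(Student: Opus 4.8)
The plan is to argue by contradiction: assume some head vertex $h \in H \cap B$ and derive a contradiction with the defining property of a $W$-BCD that $|\{h\} \cup V(f^{-1}(h))| > W$ (property~4 of \cref{def:lccd} with $\lambda = W$). Write $Q_h$ for the connected component of $G[B]$ containing $h$; since $(A,B)$ is a reducible pair, $|Q_h| \le W$. Set $T := \{h\} \cup V(f^{-1}(h))$. Because $f(Q) \in N(Q)$ for each crown component $Q \in f^{-1}(h)$ and each such $Q$ is connected, $G'[T]$ is connected, and the $W$-BCD guarantees $|T| > W$. My goal is thus to show $T \subseteq B$: once this is established, $T$ is a connected subset of $B$ inside $G' \subseteq G$, hence contained in a single component of $G[B]$, namely $Q_h$, giving $W < |T| \le |Q_h| \le W$, a contradiction.

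Before that, I would dispatch the bookkeeping caused by $S$ and by the $\gLarge$ operation. Since $S \cap B = \varnothing$, no vertex of $B$ is deleted when passing to $G - S$, and removing small components in $\gLarge(G-S)$ only deletes vertices, never adds edges; in particular $N(B) \subseteq A$ is inherited, so in $G'$ the set $A \cap V(G')$ still separates $B \cap V(G')$ from $V(G') \setminus (A \cup B)$, i.e.\ there are no edges of $G'$ between $B$ and the ``rest'' $V \setminus (A \cup B)$.

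With this separation in hand, the reduction to $T \subseteq B$ sharpens to a single claim: no crown component $Q \in f^{-1}(h)$ contains a vertex of $A$. Indeed, if every such $Q$ avoids $A$, then starting from the edge joining $Q$ to $h \in B$ and using that $G'$ has no $B$--rest edges, connectivity of $Q$ forces $Q \subseteq B$, and hence $T \subseteq B$ as desired. So I would focus on ruling out $A \cap V(f^{-1}(h)) \ne \varnothing$. Suppose $a \in A$ lies in a crown component $Q$ owned by $h$. Then $|Q| \le W$ and, since $N_{G'}(C) \subseteq H$, every neighbour of $a$ outside $Q$ lies in $H$. On the other hand, \cref{lemma::SizeNeoghborhood} applied to $A' = \{a\}$ shows that the components of $G[B]$ adjacent to $a$ have total size greater than $2W - 1$, i.e.\ at least $2W$. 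As at most $W$ of these $B$-vertices can lie inside $Q$, the surviving remainder is forced into $H$, producing further vertices of $H \cap B$ together with their own mandatory crowns.

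The hard part will be turning this propagation into a quantitative contradiction. I expect to track the pairwise-disjoint ``head-plus-crown'' packets $\{h'\} \cup V(f^{-1}(h'))$ over all $h' \in H \cap B$, each of size $> W$, against the $A$-vertices that must appear inside them: every such packet must reach $A$ by the argument above, yet each $A$-vertex it reaches drags in, via \cref{lemma::SizeNeoghborhood}, strictly more than $2W - 1$ worth of $B$-mass that is again confined to crowns or to $H$. A pigeonhole/deficit count on how many heads can be supplied with pairwise-disjoint crowns of size exceeding $W$ --- given that their crowns must share the limited $A$-interface forced by $N(C) \subseteq H$ --- should overshoot the available crown budget and close the contradiction, exactly as in the small cases, where a claw $K_{1,s}$ admits no valid $W$-BCD placing a $B$-leaf into $H$. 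Making this counting airtight, including the interaction with the components dropped by $\gLarge$, is the main technical obstacle.
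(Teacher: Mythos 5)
Your setup is sound and matches the paper's opening moves: assume $h \in H \cap B$, use property~4 of the $W$-BCD to get $\lvert \{h\} \cup f^{-1}(h)\rvert > W$, note $\lvert Q_h\rvert \le W$, and observe that the crown of $h$ must therefore reach $A$ (your attempted reduction ``no crown component contains a vertex of $A$'' can in fact never hold --- since $N(B) \subseteq A$ and $B_h \setminus \{h\}$ has fewer than $W$ vertices, every crown $f^{-1}(h)$ with $h \in H \cap B$ is \emph{forced} to contain an $A$-vertex, which is exactly where you end up). But from that point on the proposal has a genuine gap: the ``pigeonhole/deficit count'' that is supposed to close the contradiction is never carried out, and you say so yourself. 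Worse, the propagation scheme you sketch (each $A$-vertex drags in more than $2W-1$ of $B$-mass, whose overflow is ``forced into $H$, producing further vertices of $H \cap B$ together with their own mandatory crowns'') has no visible termination or budget: each new head of $H \cap B$ spawns new crowns and new $A$-vertices, and nothing in the sketch bounds this cascade against a fixed resource. Applying \cref{lemma::SizeNeoghborhood} only to singletons $A' = \{a\}$ is too weak to make the count close.

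The paper avoids propagation entirely with a one-shot argument you are missing. Set $B' = H \cap B$ and $A' = A \cap \bigcup_{b \in B'} f^{-1}(b)$; disjointness of the crowns gives $\lvert A'\rvert \ge \lvert B'\rvert$. Now apply \cref{lemma::SizeNeoghborhood} \emph{collectively} to $A'$: the components of $G[B]$ adjacent to $A'$ carry more than $\lvert A'\rvert(2W-1)$ vertices, while the excluded components $\{B_b\}_{b \in B'}$ account for at most $\lvert B'\rvert W \le \lvert A'\rvert W$ of them, leaving more than $\lvert A'\rvert(W-1)$ vertices; pigeonhole then yields a single $a' \in A'$ whose adjacent $B$-components outside $\{B_b\}_{b \in B'}$ total at least $W$ vertices. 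The decisive observation --- absent from your sketch --- is that the connected set $\{a'\} \cup V\bigl(\comp(G[B])_{\{a'\}} \setminus \{B_b\}_{b \in B'}\bigr)$ avoids $H$ \emph{entirely}: any of its $B$-vertices lying in $H$ would belong to $B' $ and hence to some excluded $B_b$. Since it contains $a' \in C$, has size at least $W+1$, lies in $G'$ (its components are glued to $a' \in C$, so they survive $\gLarge$, answering your worry about dropped components), and $N(C) \subseteq H$, it must sit inside a single component of $G'[C]$, contradicting property~2 of \cref{def:lccd}. So the ingredients you assembled are the right ones, but without the collective application of \cref{lemma::SizeNeoghborhood}, the inequality $\lvert A'\rvert \ge \lvert B'\rvert$, and the avoids-$H$ observation, the proposal does not constitute a proof.
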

\begin{proof}
	
	Let $G' = \gLarge(G - S)$ and $C, H, \calR, f$ be a $W$-BCD in $G'$. The case where $A \subseteq S$ is trivial, since $A$ separates $B$ from the rest of the graph, while $\comp(G[B])$ consists solely of components with at most $W$ vertices. Thus, $H$ cannot contain vertices from $B$ because for every $h \in H$, the vertex set $V' = \{h\} \cup f^{-1}(h)$ has more than $W$ vertices, and $G[\{h\} \cup f^{-1}(h)]$ is connected. 
	
	Therefore we assume that $A \nsubseteq S$. Suppose, towards a contradiction, that $B' = H \cap B \neq \varnothing$.
	We will demonstrate that in this case, the constraints of the $W$-CD $C, H, f$ are violated.
	Each $b \in B'$ satisfies $|f^{-1}(b)| \geq W$, meaning that at least $W$ vertices are assigned to each $b$ over $f$, where $G[\{b\} \cup f^{-1}(b)]$ is connected and $\bigcap_{b \in B'} f^{-1}(b) = \varnothing$. 
	For each $b \in B'$, let $B_b$ denote the component of $\comp(G[B])$ that includes $b$, and define $A' = A \cap \bigcup_{b \in B'} f^{-1}(b) \subseteq C$. Since the components in $\comp(G[B])$ are of size at most $W$, the vertices $f^{-1}(b)$ intersect with at most $W-1$ vertices of $B_b$ for every $b \in B'$. Thus, each $f^{-1}(b)$ must contain at least one vertex $a \in A$, as $A$ separates $B$ from the rest of the graph. Consequently, we have $|A'| \geq |B'|$.
	
	Next, we demonstrate that $G[A' \cup V(\comp(G[B])_{A'})] - B'$ contains a connected subgraph of size at least $W + 1$. If this holds, then either $G'[C]$ has a component of size at least $W + 1$, or $C$ is not separated by $H$ in $G'$. Both scenarios contradict the fact that $(C,H,f)$ is a $(W,W)$-CD in $G'$. 
	
	The components $\comp(G[B])_{A'}$ reside in $G'$, as $a \in C \subset V(G')$ for every $a \in A'$, and $B \cap S = \varnothing$. Notice that $B_b \in \comp(G[B])_{A'}$ for every $b \in B'$, implying that at most $|B'|$ components of $\comp(G[B])_{A'}$ are separated by $B'$ in $G[A' \cup \comp(G[B])_{A'}] - B'$. 	
	By \cref{lemma::SizeNeoghborhood}, we have $|V(\comp(G[B])_{A'})| > |A'|(2W-1)$, where $|V(\comp(G[B])_{\{a\}})| > 2W-1$ for every $a \in A'$. Given that $|A'| \geq |B|$, due to the pigeonhole principle we conclude that there exists an $a' \in A'$ such that $|V(\comp(G[B])_{\{a'\}} \setminus \{B_b\}_{b \in B'})| > 2W - 1 - W = W - 1$. This follows from the inequality:
	$$ |V(\comp(G[B])_{A'}) \setminus \{B_b\}_{b \in B'}| > |A'|(2W - 1) - |B'|W \geq |A'|(2W - 1) - |A'|W = |A'|(W - 1)$$
	Note that $\{a'\} \cup V(\comp(G[B])_{\{a'\}} \setminus \{B_b\}_{b \in B'})$ induces a connected subgraph, as for every $Q \in \comp(G[B])_{\{a'\}}$, we have $a' \in N(Q)$. Consequently, $\{a'\} \cup V(\comp(G[B])_{\{a'\}} \setminus \{B_b\}_{b \in B'})$ has size at least $W + 1$ and induces a connected subgraph in $G[A' \cup \comp(G[B])_{A'}] - B' \subseteq G'$.

\end{proof}

The next lemma we use to prove \cref{lemma::aIsAvailable} is that for a $W$-BCD $\chrf$ the $W$-separator $\argSepw(R)$ is unique for each $R \in \calR'$.

\begin{lemma}
	\label{lemma::UniqueSep}
	Let $G = (V,E)$ be a graph and let $W$ be a positive integer.
	If $|V(G)| > 2W$ and $G$ contains a $W$-separator $\{s\}$, then $s$ is the only $W$-separator in $G$.   
\end{lemma}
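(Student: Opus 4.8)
The plan is to argue by contradiction. I assume that $G$ admits two distinct size-one $W$-separators $s$ and $t$ and derive $|V(G)| \le 2W$, contradicting the hypothesis $|V(G)| > 2W$. (Here a size-one $W$-separator is a single vertex whose deletion leaves only components of weight at most $W$; this is the relevant notion, since the lemma is invoked on $R \in \calR'$ where $\sepw(R)=1$.) I would first remark that we may assume $G$ is connected, which is exactly the situation in the application: the lemma is applied to $G[R]$ for $R \in \calR'$, and $G[R]$ is connected by property~5 of a $W$-BCD. Connectivity is genuinely needed, as a disconnected graph can easily carry two distinct size-one separators even with $|V(G)| > 2W$.

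First I would record that $G - s$ has at least two connected components. Indeed, each component of $G - s$ has at most $W$ vertices, while $G - s$ has $|V(G)| - 1 \ge 2W$ vertices, so a single component of weight at most $W$ cannot account for all of them.

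The key step is to split $G - s$ relative to $t$. Let $K_t$ be the component of $G - s$ containing $t$, and let $U$ be the union of all remaining components of $G - s$. Since $G$ is connected, every component of $G - s$ has a neighbour of $s$, so $G[U \cup \{s\}]$ is connected; moreover $t \notin U \cup \{s\}$. Hence $U \cup \{s\}$ lies inside a single component of $G - t$, which has at most $W$ vertices because $t$ is a $W$-separator, giving $|U| + 1 \le W$, that is $|U| \le W - 1$. Combining this with $|K_t| \le W$ (as $K_t$ is a component of $G - s$) and the partition $V(G) = \{s\} \,\dot\cup\, K_t \,\dot\cup\, U$, I obtain $|V(G)| = 1 + |K_t| + |U| \le 1 + W + (W-1) = 2W$, the desired contradiction.

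The main obstacle, and really the only nontrivial idea, is the bundling performed in the key step: recognizing that all components of $G - s$ other than the one containing $t$, together with $s$ itself, form a single connected set disjoint from $t$, and therefore must be absorbed into one small (weight at most $W$) component of $G - t$. Once this observation is in place, the proof reduces to the one-line count above. I would also verify that the degenerate case $U = \varnothing$ causes no trouble (it still yields $|U| \le W - 1$ since $W \ge 1$), but this needs no separate argument.
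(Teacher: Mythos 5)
Your proof is correct and is essentially the paper's own argument: the paper likewise bundles $s$ together with all components of $G-s$ other than the one containing the second separator into a single connected set, only it derives the contradiction from that set having size $|V|-|V'| > 2W - W = W$ while surviving the removal of the second separator, rather than from your equivalent count $|V| \leq 1 + W + (W-1) = 2W$. As a minor plus, you make explicit the connectivity assumption (valid in the application to $G[R]$ for $R \in \calR'$) that the paper uses only implicitly when asserting that every component of $G-s$ is adjacent to $s$.
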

\begin{proof}
	Let $\calV$ denote the components of $G - \{s\}$. Suppose, towards a contradiction, that there exists another $W$-separator $s' \neq s$ in $G$. This implies that $s'$ belongs to some $V' \in \calV$ where $|V'| \leq W$. Consequently, we have $|V \setminus V'| = |V| - |V'| > 2W - W = W$. Furthermore, by the definition of $s$, every vertex set in $\calV$ is connected to $s$, so $G[V \setminus V']$ is connected. This shows that $s' \in V'$ cannot be a $W$-separator in $G$.
\end{proof}

With \cref{lemma::HeadEmptyIntersectionWithB,lemma::UniqueSep} in hand, we are now prepared to prove \cref{lemma::aIsAvailable} and, using \cref{lemma::aIsAvailable}, we can finally establish \cref{lemma::locateStrictPair}.

\begin{proof}[Proof of \cref{lemma::aIsAvailable}]
	Let $A' = A \setminus S$ and $B' = V(\comp(G[B])_{A'})$.
	If $A' \cap H \neq \varnothing$, then we are done.
	Therefore, we assume that $A' \cap H = \varnothing$ and prove for this case that $A' \cap S_{\calR'} \neq \varnothing$.	
	By \cref{lemma::SizeNeoghborhood}, we have $|V(\comp(G[B])_{A'})| > |A'|(2W-1)$, where $|V(\comp(G[B])_{\{a\}})| > 2W-1$ for every $a \in A'$.
	Thus, with $S \cap B = \varnothing$, the vertices $A'$ and $V(\comp(G[B])_{A'})$ must be in $\gLarge(G-S)$, since for every $a \in A'$, the subgraph $G[\{a\} \cup V(\comp(G[B])_{\{a\}})]$ is connected and of size greater than $W$.
	By \cref{lemma::HeadEmptyIntersectionWithB}, from $S \cap B = \varnothing$ it follows that $H \cap B = \varnothing$, and thus, by the precondition $A' \cap H = \varnothing$, every vertex of $B'$ must lie in some $R \in \calR$.
	Observe that if vertices of $B'$ are in an $R \in \calR$, then vertices of $A'$ must also lie in $R$.
	The reason for this is that $A'$ separates $B'$ from the rest of the graph in $G'$, while each component $\comp(G[B'])$ has size at most $W$, where $G[R]$ is connected and has size greater than $W$.
	Moreover, observe that by $|B'| = |V(\comp(G[B])_{A'})| > |A'|(2W-1)$ we have $|B'|/|A'| > 2W-1$.
	Combining these two facts, a simple average argument shows that there exists an $R' \in \calR$ such that $|B' \cap R'|/|A' \cap R'| \geq 2W$.
	Given that $|R| \leq 3W$ for each $R \in \calR$, this particular $R' \in \calR$ can contain only one $a' \in A'$.
	Fix such an $R' \in \calR$ and the corresponding vertex $a' \in A'$.
	It remains to show that $R' \in \calR'$ and $\{a'\} = \argSepw(R')$.
	Thus, we need to show that $|R'| > 2W$ and that $\{a'\}$ is a unique $W$-separator in $G[R']$. 
	For the former, by $|B' \cap R'|/|A' \cap R'| \geq 2W/1 = 2W$, the vertex set $R'$ contains at least $2W$ vertices from $B'$ in addition to $a'$.
	Hence, $|R'| > 2W$.
	For the latter, by \cref{lemma::UniqueSep}, it suffices to show that $a'$ is a $W$-separator in $G[R']$.
	Observe that because $a'$ is the only vertex of $A'$ in $R'$, it separates $R' \cap B$ from $R' \setminus (B \cup \{a'\})$ in $G[R']$, where each component $\comp(G[R' \cap B])$ has size at most $W$.
	On the other hand, due to $|R' \setminus (B' \cup \{a'\})| \leq 3W - (2W+1) = W-1$, the sizes of the remaining components $\comp(G[R' \setminus (B' \cup \{a'\})])$ are at most $W-1$.
	It follows that $\{a'\}$ is a $W$-separator in $G[R']$.
\end{proof}

\begin{proof}[Proof of \cref{lemma::locateStrictPair}]
	As previously explained, \cref{lemma::aIsAvailable} ensures a depth-wise progression towards $S = A$ within the bounded search tree. Once this is achieved, step~\ref{step::RedPairCeck} will identify $(A, B)$ as a reducible pair. It remains to demonstrate that this occurs within the specified runtime.
	
	Without loss of generality, we assume that $(A, B)$ is the sole reducible pair in $G$. Let $\chrf$ denote a $W$-BCD in $G$ and set $t = |H| + |\calR|$. Note that $\P = \calR \cup \{\{h\} \cup V(f^{-1}(h))\}_{h \in H}$ forms a $(W+1)$-packing of size $|H| + |\calR|$, and therefore $t$ is at most the size of an optimal $(W+1)$-packing, say $r$, which represents the branching factor of the bounded search tree.
	
	To bound the tree depth, we use the approximation result for a $(W+1)$-packing achieved through a $W$-BCD, as shown by Casel et al.~\cite{DBLP:conf/esa/Casel0INZ21} (Theorem 10). They proved that $\P$ corresponds to a 3-approximation for the $(W+1)$-packing problem, i.e., $r \leq 3t$.
	
	Moreover, a reducible pair corresponds to a $(W, W)$-CD, which is also a $(W+1)$-packing of size $|A| \leq r \leq 3t$ in $G$. Consequently, we can identify a reducible pair, if it exists, after reaching a depth of at most $3t \leq 3r$ in the bounded search tree.
	
	To determine the overall runtime, note that each step of Algorithm \alg2kw can be performed in time $\O(r^2|V||E|)$ (cf.~\cref{thm::dbExapnsion,theorem:lccd}). In each iteration of the algorithm, we compute a $W$-BCD $\chrf$, where the branching factor of the resulting search tree is at most $|\calR| + |H| \leq t \leq r$. The recursion depth is limited by $\min(k, 3t) \leq 3r$ (cf.~step~\ref{step::dephtOfRec}).
	
	Since $(G, k, W)$ is a yes-instance, we can assume $|A| \leq 3t \leq 3r$, and therefore we reach the case $S = A$ within a runtime of $\O(r^2|V||E| \cdot r^{\min(3r, k)})$.
\end{proof}

Finally, we prove \cref{lemma::conclusionIsCorrect}, which concludes the proof of the correctness of the algorithm \alg2kw and, thereby, also proves \cref{theorem::2kWKern}.

\begin{proof}[Proof of \cref{lemma::conclusionIsCorrect}]
	If the algorithm ends with the conclusion that $(G,k,W)$ is a yes-instance in step~\ref{step:recStart}, then $G-S$ only contains components of size at most $W$. It follows that $S$ is a $W$-separator. Since the depth of the recursion is at most $k$ (cf.~step~\ref{step::dephtOfRec}), the set $S$ has a cardinality of at most $k$ and therefore $(G,k,W)$ is a yes-instance.
	
	If the algorithm terminates in step~\ref{step::NoInst1}, then by \cref{lemma::saveReductionBCD} the decision that $(G,k,W)$ is a no-instance is correct. Otherwise, the algorithm terminates in step~\ref{step::NoInst2}. Since $|V(G)| > 2kW$ and every component of $G$ has size at least $W+1$, according to \cref{lemma::existenceReducible} if $(G,k,W)$ is a yes-instance there must be a strictly reducible pair. However, \cref{lemma::locateStrictPair} guarantees that we would then also find it. It follows that $(G,k,W)$ must be a no-instance.
\end{proof}

\subsection{Polynomial Time Kernels for Special Cases of COC}\label{sec:kernels for special COC}

Towards proving \cref{thm::VC,thm::COC} we introduce another algorithm designed to compute a $2kW$ vertex kernel for COC.
This algorithm shares the same principles as \alg2kw but exhibits polynomial running times for specific cases. 

\paragraph*{Find reducible structures (\2alg2kw):}

\begin{enumerate}	
	\item
	Initialize $S = \varnothing$ and $G'=G$.
	
	\item While $G' \neq (\varnothing, \varnothing)$:
	\begin{enumerate}
		\item
		\label{step::noVertexOfB}		
		Compute a $W$-BCD $\chrf$ in $G'$, such that for a minimal strictly reducible pair $(A,B)$ in $G$ we have $B \cap S_{\calR'}= \varnothing$.
		
		\item
		\label{step::noInstAlg2}
		If $|H|+|\calR| > k$, or $\calR' = \varnothing$ and $H=\varnothing$, then terminate the while-loop.
		\item
		Add the vertices $S_{\calR'}$ and $H$ to $S$. 
		\item
		\label{step::saveRedAlg2}
		Let $\Q$ be the connected components of size at most $W$ in $G-S$.
		Let $A= S \cup H$, $B= C \cup V(\Q)$. 
		Compute a \dbe $(A_1,A_2,W,f,2W-1)$ in $G[A \cup B]$ by using \cref{thm::dbExapnsion}.
		If $A_1 \neq \varnothing$, then terminate the algorithm and return $(G-(A_1 \cup f^{-1}(A_1)),k-|A_1|,W)$.
		
		\item
		Update $G'$ by $\gLarge(G-S)$.
	\end{enumerate} 
	
	\item Return a trivial no-instance.
\end{enumerate}

The crucial difference to Algorithm \alg2kw is step~\ref{step::noVertexOfB}.
Unfortunately, we have not succeeded in finding a polynomial algorithm in general for this step.
However, it has polynomial runtimes when $W = 1$ (Vertex Cover) and for claw-free graphs.

For proving that Algorithm \2alg2kw works correctly, note that step~\ref{step::saveRedAlg2} in Algorithm \2alg2kw is identical to step~\ref{step::RedPairCeck} in \alg2kw.
If the algorithm terminates at this step, then the reduction is safe by \cref{lemma::saveRedInAlg1}, i.e., $A_1$ is part of an optimal solution.
The remaining points for the correctness proof are stated in the following two lemmas:

\begin{lemma}
	\label{lemma::locateStrictPairAlg2}
	Let $(G,k,W)$ be the input of Algorithm \2alg2kw and let $(A,B)$ be a minimal strictly reducible pair in $G$.
	Then, Algorithm \2alg2kw finds $(A,B)$ after at most $|A|$ iterations of the while-loop, unless another reducible pair is localized in step~\ref{step::saveRedAlg2} between these steps.
\end{lemma}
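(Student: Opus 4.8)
The plan is to adapt the localization argument behind \cref{lemma::locateStrictPair} for \alg2kw. The essential difference is that \2alg2kw does not branch: in every pass of the while-loop it commits the whole set $S_{\calR'}\cup H$ to $S$ at once. Hence the crux is to guarantee that $S$ never absorbs a vertex of $B$, because only under the invariant $S\cap B=\varnothing$ can \cref{lemma::aIsAvailable} be invoked repeatedly to push $S$ towards $A$. I would therefore track two quantities across the iterations: the invariant $S\cap B=\varnothing$ and the progress measure $|A\cap S|$. Write $S^{(0)}=\varnothing$ and let $S^{(j)}$ be the value of $S$ after the $j$-th time vertices are added in the loop.

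First I would prove $S^{(j)}\cap B=\varnothing$ by induction on $j$, the base case being immediate. For the step, assume $S^{(j-1)}\cap B=\varnothing$ and let $\chrf$ be the $W$-BCD of $\gLarge(G-S^{(j-1)})$ used in that iteration. By the defining property of step~\ref{step::noVertexOfB} we have $B\cap S_{\calR'}=\varnothing$, and since $S^{(j-1)}\cap B=\varnothing$, \cref{lemma::HeadEmptyIntersectionWithB} yields $H\cap B=\varnothing$. Thus the newly added vertices $S_{\calR'}\cup H$ avoid $B$, so $S^{(j)}\cap B=\varnothing$. This inductive step is precisely where step~\ref{step::noVertexOfB} is indispensable, and it is the heart of the lemma.

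Next I would establish progress. As long as $A\not\subseteq S^{(j-1)}$ and the algorithm has not yet returned in step~\ref{step::saveRedAlg2}, the invariant lets me apply \cref{lemma::aIsAvailable} to obtain $(H\cup S_{\calR'})\cap A\neq\varnothing$. Because $H$ and $S_{\calR'}$ are drawn from $\gLarge(G-S^{(j-1)})$ and are thus disjoint from $S^{(j-1)}$, adding them strictly increases $A\cap S$, so $|A\cap S^{(j)}|\geq|A\cap S^{(j-1)}|+1$. Since $|A\cap S|\leq|A|$, the inclusion $A\subseteq S$ is reached within $|A|$ iterations. I would also check that the loop is not aborted earlier while $A\not\subseteq S$: the condition ``$\calR'=\varnothing$ and $H=\varnothing$'' of step~\ref{step::noInstAlg2} would make $H\cup S_{\calR'}=\varnothing$, contradicting \cref{lemma::aIsAvailable}; and the condition $|H|+|\calR|>k$ exhibits, via $\calR\cup\{\{h\}\cup V(f^{-1}(h))\}_{h\in H}$, a $(W+1)$-packing of size exceeding $k$ in $G$, i.e.\ a genuine no-instance, which lies outside the scope of locating a reducible pair.

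Finally I would show that once $A\subseteq S$, step~\ref{step::saveRedAlg2} is forced to return. As $A\subseteq S$ separates $B$ while $B\cap S=\varnothing$, each $Q\in\comp(G[B])$ persists in $G-S$ as a component of size at most $W$, so $B$ is contained in the small components $\Q$ and hence in the body $C\cup V(\Q)$ on which step~\ref{step::saveRedAlg2} operates, while the reducible-pair head is contained in the local head (step~\ref{step::saveRedAlg2} takes its head to be $S\cup H=S$). Invoking \cref{thm::dbExapnsion} with uniform demand $2W-1$ and testing the set $A$, its existence guarantee requires $|A|+|V(\B_A)|\geq|A|(2W-1)$; since every component of $B$ has its neighbourhood inside $A$, we get $\B_A\supseteq\comp(G[B])$ and \cref{lemma::SizeNeoghborhood} gives $|V(\B_A)|\geq|B|=|V(\comp(G[B])_A)|>|A|(2W-1)$, so the guarantee applies and $A_1\neq\varnothing$. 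Therefore \2alg2kw terminates in step~\ref{step::saveRedAlg2} with a reducible pair by the $|A|$-th iteration, unless an earlier execution of that step already returned one. The main obstacle throughout is securing the invariant $S\cap B=\varnothing$: it rests entirely on the guarantee of step~\ref{step::noVertexOfB}, which is exactly the step known to run in polynomial time only in the special cases ($W=1$ and claw-free graphs) treated afterwards.
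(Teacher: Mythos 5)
Your proof is correct and follows essentially the same route as the paper's: the invariant $S\cap B=\varnothing$ maintained via step~\ref{step::noVertexOfB} together with \cref{lemma::HeadEmptyIntersectionWithB}, progress $|A\cap S|$ increasing by \cref{lemma::aIsAvailable}, and identification of $(A,B)$ through \cref{thm::dbExapnsion} once $A\subseteq S$. You merely flesh out details the paper leaves implicit (the explicit induction, the check of the early-termination conditions in step~\ref{step::noInstAlg2}, and the verification of the existence guarantee of \cref{thm::dbExapnsion} via \cref{lemma::SizeNeoghborhood}), which is a welcome but not substantively different elaboration.
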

\begin{lemma}
	\label{lemma::Alg2ConclusionCorrect}
	If Algorithm \2alg2kw labels $(G,k,W)$ as a no-instance, then this conclusion is correct.
\end{lemma}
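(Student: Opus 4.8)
The plan is to locate every point at which Algorithm \2alg2kw can output a no-instance and check each one. The sole such output is the final step~3, reached exactly when the while-loop terminates, so I would first enumerate the ways the loop can exit: in step~\ref{step::noInstAlg2} because $|H|+|\calR| > k$; in step~\ref{step::noInstAlg2} because $\calR' = \varnothing$ and $H = \varnothing$; or because the update sets $G' = (\varnothing,\varnothing)$. I would dispatch the first exit by a direct packing bound and the remaining two by a contrapositive argument resting on \cref{lemma::existenceReducible} and \cref{lemma::locateStrictPairAlg2}.

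For the exit with $|H|+|\calR| > k$, I would take the $W$-BCD $\chrf$ computed in that iteration on $G' = \gLarge(G-S)$ and form $\P = \calR \cup \{\{h\} \cup V(f^{-1}(h))\}_{h \in H}$. By the defining properties of a $W$-BCD each member of $\P$ induces a connected subgraph of size at least $W+1$, and the members are pairwise disjoint, so $\P$ is a $(W+1)$-packing of size $|H|+|\calR|$ in $G'$; since $G'$ is an induced subgraph of $G$, it is also one in $G$. Every $W$-separator must contain a vertex of each packing member, so the $W$-separator number of $G$ exceeds $k$ and $(G,k,W)$ is a no-instance. This is the packing half of \cref{thm::saveReductionBCD}, transported through the subgraph $G'$.

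For the two remaining exits, I would argue by contradiction. Suppose $(G,k,W)$ is a yes-instance. Because the input satisfies $|V(G)| > 2kW$ with every component of size at least $W+1$, \cref{lemma::existenceReducible} produces a strictly reducible pair, and hence a minimal one $(A,B)$, in $G$. Then \cref{lemma::locateStrictPairAlg2} forces the algorithm to localize $(A,B)$ (or some other reducible pair) at step~\ref{step::saveRedAlg2} within at most $|A|$ iterations, terminating with a reduction and never reaching step~3 --- contradicting that the loop exited through one of these two routes. Hence $(G,k,W)$ is a no-instance.

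The only delicate point is that the $\calR' = \varnothing,\, H = \varnothing$ exit must not trigger while a reducible pair is still present; this is precisely the content of \cref{lemma::locateStrictPairAlg2}, which rests on the progress guarantee of \cref{lemma::aIsAvailable} (its conclusion $(H \cup S_{\calR'}) \cap A \neq \varnothing$ forces $H \cup S_{\calR'} \neq \varnothing$ and thereby rules out that exit). Since I may assume \cref{lemma::locateStrictPairAlg2}, the remaining work is bookkeeping, and I expect the transfer of the packing from $G'$ to $G$ and the exhaustive listing of loop-exit routes to be the only places demanding care.
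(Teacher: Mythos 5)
Your proof is correct and follows essentially the same route as the paper's: the exit with $|H|+|\calR| > k$ is handled by the $(W+1)$-packing lower bound (the paper simply cites \cref{thm::saveReductionBCD}, while you re-derive it and explicitly note the transfer of the packing from $G' = \gLarge(G-S)$ to $G$), and the remaining exits are handled by contradiction via \cref{lemma::existenceReducible} and \cref{lemma::locateStrictPairAlg2}. Your explicit enumeration of the loop-exit routes and the remark on the induced-subgraph transfer are slightly more careful than the paper's wording, but they introduce no new idea.
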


Regarding \cref{lemma::locateStrictPairAlg2}, note that if we identify a reducible pair different from $(A,B)$ in step~\ref{step::saveRedAlg2}, this still progresses towards a $2kW$ vertex kernel, as the reduction is safe in any case.

\begin{proof}[Proof of \cref{lemma::locateStrictPairAlg2}]
	Without loss of generality, we assume that $(A,B)$ is the only reducible pair. We prove that $A \subseteq S$ and $S \cap B = \varnothing$ after at most $|A|$ iterations of the while-loop. By \cref{thm::dbExapnsion}, step~\ref{step::saveRedAlg2} would then identify $(A,B)$.
	
	Since at each step in the while-loop we have $B \cap S_{\calR'} = \varnothing$, we obtain by \cref{lemma::HeadEmptyIntersectionWithB} that $H \cap B = \varnothing$, which in turn ensures that $S$ never contains a vertex of $B$. If $A \nsubseteq S$ at any iteration of the while-loop, then by \cref{lemma::aIsAvailable} we have $\left(S_{\calR'} \cup H\right) \cap A \neq \varnothing$, which implies that at least one vertex of $A$ will be added to $S$. Consequently, we have $A \subseteq S$ and $S \cap B = \varnothing$ after at most $|A|$ iterations of the while-loop.
\end{proof}

\begin{proof}[Proof of \cref{lemma::Alg2ConclusionCorrect}]
	There are two possible ways that Algorithm \2alg2kw labels $(G,k,W)$ as a no-instance. The first case is when $|H| + |\calR|$ exceeds $k$ in an iteration (cf.~step~\ref{step::noInstAlg2}), which is a correct conclusion by \cref{lemma::saveReductionBCD}. The second case occurs if the algorithm does not find a reducible pair, either because $S$ cannot be enlarged (cf.~step~\ref{step::noInstAlg2}) or the while-loop terminates when $G'$ becomes empty. Since $|V(G)| > 2kW$ and every component of $G$ has size at least $W+1$, \cref{lemma::existenceReducible} implies that, if $(G,k,W)$ is a yes-instance, there must be a strictly reducible pair $(A,B)$. However, \cref{lemma::locateStrictPairAlg2} ensures that we would identify such a pair within at most $|A|$ iterations of the while-loop. Therefore, $(G,k,W)$ must be a no-instance.
\end{proof}

\subsubsection*{Vertex Cover}

To introduce a novel approach for computing a $2k$-vertex kernel for the vertex cover problem, we present a result that establishes a $2kW$-vertex kernel for the COC problem, applicable for any integer $W$. However, the associated algorithm achieves polynomial runtime only when $W=1$. This limitation arises because computing a maximum $(W+1)$-packing is only polynomially solvable for $W=1$, where it corresponds to finding a maximum matching.
In this section we prove the following theorem.

\begin{theorem}
	\label{thm::2kVC}
	In the Algorithm \2alg2kw, if in step~\ref{step::noVertexOfB} the value $|H| + |\calR|$ is the size of a $(W+1)$-packing in $G'$ in each iteration of the while-loop, then the algorithm works correctly.
\end{theorem}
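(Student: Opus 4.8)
The plan is to reduce the whole statement to a single invariant of the while-loop and then to deduce that invariant from the maximum-packing hypothesis. Recall that the correctness of \2alg2kw has already been established in \cref{lemma::locateStrictPairAlg2,lemma::Alg2ConclusionCorrect}, but only \emph{under the assumption} that the requirement of \cref{step::noVertexOfB}, namely $B \cap S_{\calR'} = \varnothing$ for a minimal strictly reducible pair $(A,B)$ in $G$, holds in every iteration (this is exactly where \cref{lemma::HeadEmptyIntersectionWithB} is invoked to obtain $H \cap B = \varnothing$, and then \cref{lemma::aIsAvailable} drives the progress $S \to A$). Hence it suffices to prove the following: if in the current iteration the packing $\P := \calR \cup \{\{h\} \cup V(f^{-1}(h))\}_{h \in H}$ attached to the computed $W$-BCD $\chrf$ of $G' = \gLarge(G-S)$ is a \emph{maximum} $(W+1)$-packing of $G'$ — which is precisely what the hypothesis $|H|+|\calR| = $ (maximum packing size) means, since $\P$ is always a $(W+1)$-packing of that size — then $B \cap S_{\calR'} = \varnothing$. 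I would argue this by contradiction, assuming some $b = \argSepw(R) \in B$ with $R \in \calR'$, and deriving two incompatible facts.

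The first fact is local: such an $R$ must contain at least two vertices of $A$. I would use that $b \in B$, that $N(B) \subseteq A$, and that every component of $G[B]$ has at most $W$ vertices. By definition of $\calR'$ we have $|R| > 2W$, $G[R]$ is connected, and $\{b\}$ is a $W$-separator of $G[R]$, so $G[R]-b$ splits into pieces of total size at least $2W$, each of size at most $W$. Suppose $R$ met $A$ in at most one vertex $a$. Since the only edges leaving $B$ end in $A$ and distinct components of $G[B]$ are joined only through $A$, and since $b$'s neighbours in $R$ lie in $A$ together with $b$'s own $G[B]$-component $Q_b$ (with $|Q_b| \le W$), deleting $b$ would leave a single piece containing $a$ and everything still reachable from it; only the part of $Q_b$ cut off by $b$ (of size at most $W-1$) is detached, so that piece has size at least $|R| - W > W$, contradicting that $b$ is a $W$-separator. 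Hence $|R \cap A| \ge 2$.

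The second fact is global: every element of a maximum $(W+1)$-packing $\P$ of $G'$ meets $A$ in at most one vertex (equivalently $\hat A := A \setminus S$, as $\P \subseteq V(G')$ avoids $S$). First I would observe that $\hat A$ separates $B$ from the rest of $G'$ and that $G[B]$-components are small, so any packing element meeting $B$ contains a vertex of $\hat A$; consequently the elements of $\P$ avoiding $\hat A$ lie entirely outside $A \cup B$. Next, viewing the reducible pair $(A,B)$ as a $(W,W)$-\cd, equivalently a \dbe, through its assignment $f$ (cf.~\cref{def::RedPair,thm::dbExapnsion}), I would extract $|\hat A|$ pairwise-disjoint connected sets $T_a \subseteq \{a\} \cup V(f^{-1}(a))$ of size exactly $W+1$, one per head $a \in \hat A$, all inside $\hat A \cup B$. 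If some element of $\P$ contained two heads, then the at most $|\hat A|-1$ head-meeting elements could be replaced by these $|\hat A|$ single-head sets while keeping all elements of $\P$ that are disjoint from $\hat A \cup B$, yielding a strictly larger $(W+1)$-packing and contradicting maximality. Thus no element of $\P$ carries two $A$-vertices. Since $R \in \calR \subseteq \P$, the two facts contradict each other, so $B \cap S_{\calR'} = \varnothing$; the correctness of \2alg2kw then follows from \cref{lemma::locateStrictPairAlg2,lemma::Alg2ConclusionCorrect}.

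The main obstacle I expect is the global fact: making the exchange argument rigorous requires realizing the assignment of \cref{def::RedPair} as genuinely vertex-disjoint connected $(W+1)$-sets attached to the correct heads (which is exactly what \cref{thm::dbExapnsion} supplies), and carefully tracking the distinction between $A$ and $\hat A = A \setminus S$, as well as between $G$ and $G' = \gLarge(G-S)$, so that the reducible structure survives restriction to $G'$ and the crowns of the discarded heads do not interfere. Once \cref{thm::2kVC} is in place, \cref{thm::VC} follows, because for $W=1$ a maximum $(W+1)$-packing is a maximum matching and hence computable in polynomial time, which is what makes \cref{step::noVertexOfB} polynomial in that case.
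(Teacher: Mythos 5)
Your proposal is correct and follows essentially the same route as the paper: your ``local fact'' is exactly \cref{lemma::bSepContains2A}, your ``global fact'' is exactly \cref{lemma::maxPackAllAIndividual} (proved by the same exchange argument, except that you re-derive the size-$|A|$ packing in $G[A \cup B]$ from the \dbe{}/crown structure where the paper cites \cref{lemma::packingReduciblePair} from Kumar and Lokshtanov), and the contradiction via $R \in \calR \subseteq \P$ is the paper's concluding step. Your explicit care with $A$ versus $\hat A = A \setminus S$ and $G$ versus $G' = \gLarge(G-S)$ is if anything more thorough than the paper's own treatment, which glosses over this bookkeeping.
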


To understand how we can apply the algorithm \2alg2kw (which essentially ensures that $B \cap S_{\calR'} = \varnothing$ in each iteration of the while-loop), we need to extend the algorithm to compute a $W$-BCD $\chrf$.
First, observe that $\P = \calR \cup { {h} \cup V(f^{-1})(h)}_{h \in H}$ forms a $(W+1)$-packing of size $|H| + |\calR|$. In the algorithm for computing a $W$-BCD, we define two corresponding sets, $\calR'$ and $H'$, that always maintain a $(W+1)$-packing of size $|H'| + |\calR'|$.
These sets, in particular, measure progress in the sense that the size of the corresponding packing can only increase. A crucial aspect here is that the algorithm can start with any arbitrary maximal $(W+1)$-packing. If this initial packing is a maximum $(W+1)$-packing, then upon termination, $\P$ will also correspond to a maximum $(W+1)$-packing.
In the original algorithm, $\calR'$ is initialized as the connected components of the input graph, with $H' = \varnothing$, but it is also possible to initialize $\calR'$ as a maximal $(W+1)$-packing instead (see~\cite{DBLP:conf/esa/Casel0INZ21}, proof sketch of Theorem~7).

Our first step is to establish a useful relation between a reducible pair and a maximum packing.

\begin{lemma}
	\label{lemma::maxPackAllAIndividual}
	Let $G$ be a graph, $W \in \N$ a parameter and $(A,B)$ a reducible pair in $G$ according to $W$.
	Then, any maximum $(W+1)$-packing $\P$ satisfies $|P \cap A| \leq 1$ for every $P \in \P$.
\end{lemma}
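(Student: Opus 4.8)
The plan is to argue by contradiction: assume $\P$ is a maximum $(W+1)$-packing containing a set $P^\ast$ with $|P^\ast \cap A| \geq 2$, and then build a strictly larger packing, contradicting maximality. The guiding intuition is that the reducible pair supplies enough crown to give every head vertex its own packing set, so a maximum packing cannot afford to waste a head by bundling two of them into one set.

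First I would split $\P$ into the family $\P_1$ of sets disjoint from $A \cup B$ and the family $\P_2$ of sets meeting $A \cup B$. The crucial observation is that every set in $\P_2$ must contain at least one vertex of $A$: since $N(B) \subseteq A$ and every $Q \in \comp(G[B])$ has size at most $W$, any connected vertex set that touches $B$ but avoids $A$ is confined to a single component of $G[B]$ and hence has at most $W < W+1$ vertices --- too small to be a packing set. Consequently, writing $h = |A \cap V(\P)| \le |A|$, each set of $\P_2$ contributes at least one head to the count $\sum_{P \in \P_2} |P \cap A| = h$, so $|\P_2| \le h \le |A|$.

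Next I would invoke the expansion guaranteed by the reducible pair. As noted in the text, $(A,B)$ is a $(W,W)$-CD, i.e.\ an $(A,\varnothing,W,f,2W-1)$ \dbe with $B = f^{-1}(A)$, and through $f$ the sets $\{a\} \cup V(f^{-1}(a))$ for $a \in A$ form a $(W+1)$-packing $\P_0$ of size $|A|$ with $V(\P_0) \subseteq A \cup B$ and exactly one head per set (each $f^{-1}(a)$ is connected to $a$ and has at least $W$ vertices by the expansion condition, valid for $W \ge 1$, and distinct heads receive disjoint crown components). Since $V(\P_0) \subseteq A \cup B$ while the sets of $\P_1$ avoid $A \cup B$, the family $\P^{\mathrm{new}} = \P_1 \cup \P_0$ is again a valid $(W+1)$-packing, of size $|\P_1| + |A|$.

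Finally I would close the contradiction. If some $P^\ast \in \P$ has $|P^\ast \cap A| \ge 2$, then $P^\ast \in \P_2$ and $h = \sum_{P \in \P_2}|P\cap A| \ge |\P_2| + 1$, so $|\P_2| \le h - 1 \le |A| - 1$ and therefore $|\P| = |\P_1| + |\P_2| < |\P_1| + |A| = |\P^{\mathrm{new}}|$, contradicting the maximality of $\P$. The main obstacle --- and the genuinely load-bearing step --- is the head-counting claim that every crown-using packing set needs a head (which yields $|\P_2| \le h$), combined with the observation that the re-expansion $\P_0$ lives inside $A \cup B$ and hence never collides with $\P_1$; once these are in place, the swap of $\P_2$ for $\P_0$ strictly increases the packing whenever a head is wasted. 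I would also verify the boundary case $W=1$ and confirm that the connectivity of each $\{a\} \cup V(f^{-1}(a))$ follows from $f(Q) \in N(Q)$, both of which are routine.
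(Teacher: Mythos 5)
Your proof is correct and follows essentially the same route as the paper's: partition the maximum packing into sets avoiding $A \cup B$ and the rest, observe that any $(W+1)$-sized connected set meeting $B$ must meet $A$ (so a set with two head vertices wastes one, giving at most $|A|-1$ such sets), and swap them for the size-$|A|$ packing inside $G[A \cup B]$. The only cosmetic difference is that you justify that internal packing directly via the $(A,\varnothing,W,f,2W-1)$ \dbe view of a reducible pair, whereas the paper cites the Kumar--Lokshtanov lemma asserting its existence.
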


To prove \cref{lemma::maxPackAllAIndividual} we use the following lemma regarding reducible pairs. 

\begin{lemma}[\cite{DBLP:conf/iwpec/KumarL16}, Lemma 17]
	\label{lemma::packingReduciblePair}
	Let $(A,B)$ be a reducible pair in $G$.
	Then, there is a ($W+1$)-packing of size $|A|$ in $G[A \cup B]$.
\end{lemma}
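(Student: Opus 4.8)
The plan is to build, for each anchor $a\in A$, a \emph{crown} consisting of $a$ together with a bundle of whole connected components of $G[B]$ that are adjacent to $a$, chosen so that the bundle has total size at least $W$. Then $\{a\}\cup(\text{bundle})$ is connected (every chosen component meets $N(a)$, since $f(Q)\in N(Q)$-type adjacency holds) and has size at least $W+1$, and if the bundles for distinct anchors are vertex-disjoint then $\{\{a\}\cup(\text{bundle})\}_{a\in A}$ is a $(W+1)$-packing of size exactly $|A|$ living inside $G[A\cup B]$. So the whole statement reduces to showing that one can assign to every $a\in A$ a disjoint bundle of adjacent components of total size at least $W$.

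I would obtain this assignment from the fractional certificate $g$ by a Hall/flow argument. The reducible-pair inequalities $\sum_{Q}g(Q,a)\ge 2W-1$ and $\sum_{a}g(Q,a)\le|Q|$, together with $g(Q,a)>0\Rightarrow a\in N(Q)$, yield for every $A'\subseteq A$ that $\sum_{Q:\,N(Q)\cap A'\neq\varnothing}|Q|\ \ge\ \sum_{a\in A'}\sum_{Q}g(Q,a)\ \ge\ (2W-1)|A'|\ \ge\ W|A'|$. Thus the components adjacent to any set of anchors carry enough total size to meet the demand $W$ of each of them, which is exactly the deficiency condition needed for feasibility. Concretely I would set up a transportation network with a source arc of capacity $W$ into each $a$, arcs of infinite capacity from $a$ to each adjacent component, and a sink arc of capacity $|Q|$ out of each component $Q$; the cut estimate above certifies a flow saturating all source arcs, and by integrality of the capacities an integral such flow exists.

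The step I expect to be the main obstacle is turning this flow into genuinely disjoint \emph{connected} crowns: an integral flow may split a single component $Q$ among several anchors, and an arbitrary vertex subset of a connected $Q$ need not stay connected to the anchor it is charged to. The remedy is to work at the granularity of whole components, adding adjacent components to an anchor one at a time until its running total first reaches $W$; since every component has size at most $W$, each finished bundle lands in $[W,2W-1)$, and this is precisely why the slack $2W-1$ rather than merely $W$ is assumed: even though each served anchor may waste up to $W-1$ units of adjacent mass, the bound $\sum_{Q:\,N(Q)\cap A'\neq\varnothing}|Q|\ge(2W-1)|A'|$ guarantees enough remains to serve every anchor of $A'$ simultaneously. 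Equivalently, and more in the spirit of this paper, the desired whole-component assignment can be read directly off a demanded balanced expansion of $G[A\cup B]$ with demands $d_a=2W-1$ obtained from \cref{thm::dbExapnsion}, the reducible-pair inequalities being exactly the subset condition that drives every anchor into the head $A_1$.
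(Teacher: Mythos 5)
The paper never proves this lemma---it is imported verbatim from Kumar and Lokshtanov---so your attempt has to stand on its own, and its overall plan is the right one: service each anchor $a\in A$ with a disjoint bundle of whole adjacent components of total size at least $W$, with feasibility extracted from $g$ via the Hall-type bound $\sum_{Q\colon N(Q)\cap A'\neq\varnothing}|Q|\geq(2W-1)|A'|$ for all $A'\subseteq A$ (that derivation is correct). But the step you yourself flag as the main obstacle is exactly where the argument breaks, and neither of your two mechanisms closes it. The greedy accounting is unsound because the mass adjacent to a set $A'$ can be consumed by anchors \emph{outside} $A'$: take $W=2$, anchors $a_1,a_2$, components $Q_1$ (size $2$, adjacent to both), $Q_2$ (size $2$, adjacent to $a_1$ only), $Q_3$ (size $1$, adjacent to $a_1$), $Q_4$ (size $1$, adjacent to $a_2$). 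Your Hall bound holds, and this is even a genuine reducible pair ($g$ routes $Q_1$ to $a_2$ and $Q_2,Q_3$ to $a_1$), yet a greedy run serving $a_1$ first that grabs $Q_1$ strands $a_2$ with adjacent mass $1<W$; so ``enough remains to serve every anchor of $A'$'' is false without an exchange or augmentation argument. Your fallback overstates \cref{thm::dbExapnsion}: it guarantees only $A_1\neq\varnothing$, not $A_1=A$, and nothing in \cref{def::demBalDec} ``drives every anchor into the head''---e.g.\ with two anchors, four size-$W$ components all adjacent to both, $d_a=2W-1$ and $f$ splitting them two--two, the tuple with $A_1=\varnothing$ satisfies all four conditions once $W\geq 3$, even though full service is plainly possible.

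The repair is short, and close to what you wrote: \emph{iterate} the expansion. Apply \cref{thm::dbExapnsion} with unit weights, $\ubc=W$ and $d_a=2W-1$; since $N(B)\subseteq A$, taking $A'=A$ gives $\B_A=\comp(G[B])$ and $|A|+|B|\geq |A|+(2W-1)|A|$, so $A_1\neq\varnothing$, and every $a\in A_1$ satisfies $1+|f^{-1}(a)|>d_a-\ubc+1=W$, i.e.\ $\{a\}\cup f^{-1}(a)$ is a connected set of size at least $W+1$ (connected because $f(Q)\in N(Q)$ for each assigned component). Now delete $A_1\cup f^{-1}(A_1)$: property~3 of the \dbe, $N(f^{-1}(A_1))\subseteq A_1$, guarantees that every component carrying positive $g$-mass toward $A_2$ survives, so the restriction of $g$ witnesses the reducible-pair inequalities for $A_2$ and the remaining components, and induction on $|A|$ produces $|A|$ pairwise disjoint connected sets of size at least $W+1$ inside $G[A\cup B]$. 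This peel-and-recurse version recovers the cited lemma using only the paper's own expansion machinery; your single-shot version, as written, does not.
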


\begin{proof}[Proof of \cref{lemma::maxPackAllAIndividual}]
	Suppose that $\P$ is a maximum $(W+1)$-packing in which there exists at least one $P \in \P$ with $|P \cap A| > 1$. Let $\P_A$ denote the elements in $\P$ that intersect with $A$, and define $\overline{\P}_A = \P \setminus \P_A$.
	Then, we have $|\P| = |\overline{\P}_A| + |\P_A| \leq |\overline{\P}_A| - |A| - 1$, as the elements in $\P_A$ are vertex-disjoint. To reach a contradiction that $\P$ is a maximum $(W+1)$-packing, we construct a larger $(W+1)$-packing from $\overline{\P}_A$ and $(A, B)$.
	Observe that for each $P \in \overline{\P}_A$, we have $P \cap B = \varnothing$ due to the properties of a reducible pair (cf.~\cref{def::RedPair}), whereby every connected subgraph of size $W+1$ containing a vertex of $B$ also contains a vertex of $A$. Thus, for each $P \in \overline{\P}_A$, we have $P \cap (A \cup B) = \varnothing$.
	Consequently, we can apply \cref{lemma::packingReduciblePair} to construct a $(W+1)$-packing $\P'$ of size $|A|$ within $G[A \cup B]$ that is disjoint from $\overline{\P}_A$. Therefore, $\P' \cup \overline{\P}_A$ forms a $(W+1)$-packing in $G$ of size $|\P'| + |\overline{\P}_A| = |A| + |\overline{\P}_A| > |A| - 1 + |\overline{\P}_A| \geq |\P|$.
\end{proof}

As a reminder, in Algorithm \2alg2kw, in each iteration, $\calR'$ represents the vertex sets in $\calR$ of the computed $W$-BCD $\chrf$ that contain a single $W$-separator and have a size larger than $2W$. The set $S_{\calR'}$ corresponds to the (unique) $W$-separators associated with these sets.

The essential condition we need to ensure in Algorithm \2alg2kw is that $S_{\calR'} \cap B = \varnothing$, as any overlap could prevent us from identifying a reducible pair $(A, B)$ if one exists in the graph. To demonstrate that this issue does not arise when we start the calculation of the $W$-BCD with a maximum $(W+1)$-packing, we make the following observation.

\begin{lemma}
	\label{lemma::bSepContains2A}
	Let $G$ be a graph, $W \in \N$ a parameter and $(A,B)$ a strictly reducible pair according to $W$.
	For any connected vertex set $R \subseteq V(G)$ with $|R| > 2W$ where a vertex $b \in B$ is a single $W$-separator in $G[R]$ we have $|A \cap B| > 1$.
\end{lemma}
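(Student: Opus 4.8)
The plan is to prove the lemma by analysing how the vertex $b \in B$ can act as the unique $W$-separator of the connected set $R$, and to conclude that $R$ must meet $A$ in at least two vertices. Throughout I would use the two defining features of a strictly reducible pair (cf.~\cref{def::RedPair}): the neighbourhood condition $N(B) \subseteq A$, and the fact that every component of $G[B]$ has size at most $W$. I would also invoke \cref{lemma::UniqueSep}: since $|R| > 2W$ and $b$ is a $W$-separator of $G[R]$, it is in fact the \emph{unique} $W$-separator of $G[R]$.

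First I would fix the partition $R = R_A \cup R_B \cup R_0$ with $R_A = R \cap A$, $R_B = R \cap B$ and $R_0 = R \setminus (A \cup B)$, and record the key separation fact: because $N(B) \subseteq A$, in $G[R]$ no vertex of $R_B$ has a neighbour in $R_0$, so $R_A$ separates $R_B$ from $R_0$; moreover every connected component of the induced graph $G[R_B]$ is contained in a single component of $G[B]$ and hence has size at most $W$. The proof then proceeds by contradiction, assuming $R$ meets $A$ in at most one vertex, i.e.\ $|R_A| \leq 1$.

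In the case $R_A = \varnothing$ there are no edges between $R_B$ and $R_0$, so connectivity of $G[R]$ forces $R = R_B$ or $R = R_0$; the latter is impossible since $b \in R_B$, and the former would place all of $R$ inside one component of $G[B]$, contradicting $|R| > 2W > W$. In the case $R_A = \{a\}$ I would argue as follows. As $a \in A$ and $b \in B$ are distinct, \cref{lemma::UniqueSep} tells us that $a$ is not a $W$-separator of $G[R]$, so $G[R] - a$ has a component of size larger than $W$. But removing $a$ disconnects $R_B$ from $R_0$, and every component of $G[R_B]$ has size at most $W$; hence the oversized component $C_0$ must lie entirely inside $R_0$. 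Finally, since $b \in R_B$ and $N(b) \subseteq A \cup B$, deleting $b$ does not touch $R_0$ at all, so $C_0$ survives as a connected subgraph of size larger than $W$ in $G[R] - b$ --- contradicting that $b$ is a $W$-separator. Both cases being impossible, $R$ contains at least two vertices of $A$, which is exactly the statement used in the proof of \cref{thm::2kVC}: combined with \cref{lemma::maxPackAllAIndividual}, which forbids any element of a maximum $(W+1)$-packing from meeting $A$ in two vertices, it yields the contradiction that rules out a vertex of $B$ appearing in $S_{\calR'}$.

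The step I expect to be the main obstacle is the case $R_A = \{a\}$: one has to simultaneously exploit that $a$ fails to be a separator (to produce a large component) and that this large component must sit on the $R_0$ side of the cut induced by $a$, so that it is left untouched by deleting $b$. Getting the separation bookkeeping right --- that $R_A$ cleanly separates $R_B$ from $R_0$, and that the components of $G[R_B]$ are all of size at most $W$ --- is where the condition $N(B) \subseteq A$ and the component-size bound of a reducible pair do the real work.
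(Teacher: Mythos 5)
Your proof is correct, and it takes a genuinely different route from the paper's. The paper argues directly, without contradiction: letting $\Q$ be the components of $G[R]-\{b\}$ and $B'$ the component of $G[B]$ containing $b$, it observes that $N(b) \subseteq B' \cup A$ and $N(B') \subseteq A$, so every $Q \in \Q$ either meets $A$ or lies wholly inside $B' \setminus \{b\}$; the latter components contain at most $|B'|-1 \leq W-1$ vertices in total, so more than $|R|-W > W$ vertices remain, spread over components of size at most $W$ each (as $b$ is a $W$-separator), which by pigeonhole forces at least \emph{two} components meeting $A$. You instead assume $|R \cap A| \leq 1$ and derive a contradiction: in the case $R_A = \{a\}$ you invoke \cref{lemma::UniqueSep} to conclude that $a$ is \emph{not} a $W$-separator of $G[R]$, locate the resulting oversized component of $G[R]-a$ inside $R_0$ (using that $N(B) \subseteq A$ kills all $R_B$--$R_0$ edges and that each component of $G[R_B]$ sits inside a component of $G[B]$, hence has size at most $W$), and note that this component survives the deletion of $b$, contradicting that $b$ is a $W$-separator. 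All steps check out, including the degenerate case $R_A = \varnothing$. The trade-offs: the paper's counting argument is self-contained (it never needs the uniqueness of the separator) and yields slightly more information, namely two distinct components of $G[R]-b$ each containing an $A$-vertex; your argument avoids the pigeonhole bookkeeping and instead reuses \cref{lemma::UniqueSep}, which the paper needs anyway for the well-definedness of $\argSepw(R)$ for $R \in \calR'$, so nothing extra is charged for it. Two minor remarks: you correctly read the lemma's printed conclusion ``$|A \cap B| > 1$'' as the intended ``$|A \cap R| > 1$'' --- $A$ and $B$ are disjoint by \cref{def::RedPair}, so the statement as printed is a typo, and what both proofs establish is that $R$ meets $A$ at least twice; and, like the paper's proof, yours never actually uses strictness of the reducible pair, only $N(B) \subseteq A$ and the size bound on components of $G[B]$.
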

\begin{proof}
	Let $\Q$ denote the components of $G[R] - {b}$. We will show that at least two elements of $\Q$ intersect with $A$, thereby proving the lemma.
	
	Define $B'$ as the component in $G[B]$ that contains $b$, and let $\Q'$ represent the components $Q \in \Q$ for which $B' \cap Q \neq \varnothing$ and $Q \cap A = \varnothing$. Since $A$ separates $B'$ from the rest of the graph, each component $Q \in \Q \setminus \Q'$ contains a vertex of $A$.
	
	Thus, if $|\Q \setminus \Q'| > 1$, the proof is complete. Note that $|V(\Q')| \leq |B' \setminus {b}| \leq W - 1$ and that $|Q| \leq W$ for each $Q \in \Q$. It follows that $|R \setminus (V(\Q') \cup {b})| > 2W - W = W$, which in turn implies that $|\Q \setminus \Q'| > 1$ since each element in $\Q$ has size at most $W$.
\end{proof}

By combining \cref{lemma::maxPackAllAIndividual} and \cref{lemma::bSepContains2A} and computing the $W$-BCD in step~\ref{step::noVertexOfB} using a maximum $(W+1)$-packing as a starting point, we establish \cref{thm::2kVC}. Since the case $W=1$ involves finding a maximum matching for this step, we thus prove \cref{thm::VC}.

\subsubsection*{Claw Free Graphs}

To demonstrate how \2alg2kw can be applied to claw-free graphs to obtain a $2kW$-vertex kernel in polynomial time, we first establish a more general theorem, as stated below. We assume that the input graph is connected; if not, each connected component is considered individually. Note that a strictly reducible pair always resides within a connected component of the input graph.

\begin{theorem}
	\label{thm::ClawFreeGeneral}
	In the Algorithm \2alg2kw, if $|\calR'| \leq 1$ and $H=\varnothing$ in step~\ref{step::noVertexOfB} in each iteration of the while-loop, then the algorithm works correctly.
\end{theorem}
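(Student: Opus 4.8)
The plan is to reduce \cref{thm::ClawFreeGeneral} entirely to verifying the single precondition that step~\ref{step::noVertexOfB} demands, namely that the $W$-BCD computed there satisfies $B \cap S_{\calR'} = \varnothing$ for a minimal strictly reducible pair $(A,B)$ of $G$ in every iteration of the while-loop. The general correctness of Algorithm~\2alg2kw is already packaged in \cref{lemma::locateStrictPairAlg2} (localization and progress), \cref{lemma::Alg2ConclusionCorrect} (correct no-instance verdicts), and \cref{lemma::saveRedInAlg1} (safe reduction), each proved under exactly this precondition. So once I show that the hypotheses $|\calR'|\le 1$ and $H=\varnothing$ force $B \cap S_{\calR'} = \varnothing$ at every iteration, the theorem follows by simply invoking these three lemmas.

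I would establish the precondition by induction on the iterations, carrying the invariant $S \cap B = \varnothing$, which holds initially since $S=\varnothing$. Because $H=\varnothing$, the only vertices added to $S$ in an iteration are those of $S_{\calR'}$, so preserving the invariant amounts to proving $S_{\calR'} \cap B = \varnothing$ in the current iteration. I split into two cases according to whether $A\subseteq S$. If $A\nsubseteq S$, the invariant $S\cap B=\varnothing$ together with the fact that $A$ is not yet fully contained in $S$ lets me apply \cref{lemma::aIsAvailable}, which (with $H=\varnothing$) yields $S_{\calR'}\cap A \ne \varnothing$. Since each $R\in\calR'$ contributes exactly its unique separator $\argSepw(R)$ to $S_{\calR'}$, we have $|S_{\calR'}| = |\calR'| \le 1$; hence $S_{\calR'}$ is a single vertex, and that vertex lies in $A$. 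Thus $S_{\calR'}\subseteq A$ and, as $A\cap B=\varnothing$, we get $S_{\calR'}\cap B=\varnothing$. This is precisely the point where $|\calR'|\le 1$ is indispensable: \cref{lemma::aIsAvailable} only certifies one element of $\calR'$ whose separator sits in $A$, and a second element of $\calR'$ could a priori have its separator fall into $B$.

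If instead $A \subseteq S$, I use $N(B)\subseteq A$ (\cref{def::RedPair}): after removing $S\supseteq A$, every vertex of $B$ lies in a connected component of $G-S$ contained in $B$, hence of size at most $W$, so $\gLarge$ deletes all of $B$ and $B\cap V(G')=\varnothing$. Since $S_{\calR'}\subseteq V(\calR)\subseteq V(G')$, this again gives $S_{\calR'}\cap B=\varnothing$. (Note $H=\varnothing$ also forces $C=\varnothing$, so $G'$ is fully partitioned by $\calR$; this can be remarked but is not strictly needed.) In both cases the invariant is preserved, so $B\cap S_{\calR'}=\varnothing$ holds in every iteration, and feeding this into \cref{lemma::locateStrictPairAlg2,lemma::Alg2ConclusionCorrect,lemma::saveRedInAlg1} completes the argument.

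I expect the main obstacle to be the first case: one must rule out that the unique $W$-separator of the single large part of $\calR'$ is a crown vertex. The clean route I see is the one above, combining the ``good'' separator produced by \cref{lemma::aIsAvailable} with $|\calR'|\le1$ to conclude that the only separator in $S_{\calR'}$ is that good one. An alternative contradiction route assumes $\argSepw(R)\in B$ and invokes \cref{lemma::bSepContains2A}, but this is unnecessary here, as \cref{lemma::bSepContains2A} is tailored to the maximum-packing argument of \cref{thm::2kVC}. I would also double-check the bookkeeping that step~\ref{step::noInstAlg2} does not terminate the loop prematurely: whenever $A\nsubseteq S$, \cref{lemma::aIsAvailable} already forces $\calR'\neq\varnothing$ (since $H=\varnothing$), so the loop keeps running until $A\subseteq S$ and the reducible pair is extracted in step~\ref{step::saveRedAlg2}.
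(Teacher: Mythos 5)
Your proposal is correct and follows essentially the same route as the paper's proof: reduce everything to showing $S_{\calR'} \cap B = \varnothing$ in each iteration, apply \cref{lemma::aIsAvailable} with $H=\varnothing$ to place a vertex of $S_{\calR'}$ in $A$, and conclude via the general correctness lemmas for \2alg2kw, with termination at step~\ref{step::saveRedAlg2} once $A \subseteq S$. You are in fact slightly more careful than the paper, which uses $|\calR'|\leq 1$ only implicitly (``the vertex in $S_{\calR'}$''), whereas you make explicit that $|S_{\calR'}|=|\calR'|\leq 1$ is exactly what upgrades $S_{\calR'}\cap A\neq\varnothing$ to $S_{\calR'}\subseteq A$; your extra case $A\subseteq S$ is harmless but redundant, since the algorithm terminates in the same iteration in which $A\subseteq S$ first holds.
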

\begin{proof}
	Let $(G,k,W)$ be an instance of COC and the input for Algorithm \2alg2kw, where $|V(G)| > 2kW$. If $G$ is a yes-instance, then, according to \cref{lemma::existenceReducible}, there must be a strictly reducible pair $(A,B)$ in $G$ with $|A| \leq k$. 
	
	The only requirement we need to establish is that $S_{\calR'} \cap B = \varnothing$ in every iteration of the while-loop. If $(G,k,W)$ is a yes-instance, we can apply \cref{lemma::aIsAvailable} with $H = \varnothing$, which ensures that $S_{\calR'}$ is non-empty, and the vertices in $S_{\calR'}$ must belong to $A$. This implies that $S_{\calR'} \cap B = \varnothing$ in every iteration of the while-loop until $A \subseteq S$.
	
	Once we have $A \subseteq S$, \cref{thm::dbExapnsion} in step~\ref{step::saveRedAlg2} identifies $(A,B)$ as a strictly reducible pair, leading to the termination of the algorithm.   
\end{proof}

We now show how claw-free graphs can satisfy the precondition of \cref{thm::ClawFreeGeneral}. For a vertex-weighted graph $G = (V, E, w \colon V \to \mathbb{N})$, we define $w_{\max} := \max_{v \in V} w(v)$.

\begin{lemma}
	\label{lemma::clawFreeBCD}
	Let $G=(V,E,w \colon V \to \N)$ be a vertex weighted and connected claw-free graph.
	Let $W \in \N$ be a parameter, where $|V(G)| > W \geq \wmax$.
	Then, from $G$ we can construct a $W$-BCD $\chrf$ in polynomial time, such that $C,H=\varnothing$ and $\calR$ contains at most one set $R \in \calR$ of size larger than $2W$.
\end{lemma}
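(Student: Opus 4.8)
The plan is to collapse the requested $W$-BCD to a purely \emph{balanced connected partition}: since the statement forces $C = H = \varnothing$, the only surviving requirement of \cref{def:lccd} is its fifth condition on $\calR$. Concretely, I would first show that it suffices to partition $V$ into connected parts each of weight in $(W, 3W]$ such that at most one part has weight exceeding $2W$. Given such a partition, I set $\calR$ to be its blocks, $C = H = \varnothing$, and $f$ the empty function; then conditions~1--4 of \cref{def:lccd} hold vacuously (there are no components of $G[C]$ and no vertices in $H$), while condition~5 is exactly the connectivity and weight bound I imposed. Finally, since vertex weights are positive integers (recall $\N$ denotes the positive integers here), any block of cardinality larger than $2W$ also has weight larger than $2W$; hence ``at most one block of weight $> 2W$'' upgrades to ``at most one block of size $> 2W$'', which is the claimed property (and the one that feeds $|\calR'| \le 1$ in \cref{thm::ClawFreeGeneral}).

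Next I would build the partition by induction on $w(G)$, peeling off one block at a time, the hypothesis covering every connected claw-free graph with $\wmax \le W$ and total weight more than $W$. If $w(G) \le 3W$, I keep all of $G$ as a single block; this is legal because $w(G) \ge |V(G)| > W \ge \wmax$, so the block already lies in $(W, 3W]$ and there is trivially at most one heavy block. If $w(G) > 3W$, I peel off a connected set $P$ with $w(P) \in (W, 2W]$ whose removal leaves $G - P$ connected, and recurse on $G - P$. The residual graph is again connected and claw-free with $\wmax$ unchanged, and has weight $w(G) - w(P) > 3W - 2W = W$, so the hypothesis applies; moreover $P$ contributes weight at most $2W$, so it never becomes a heavy block. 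Consequently only the final, base-case block can exceed weight $2W$, giving the single-large-block guarantee. Each peel strictly lowers the total weight by more than $W$, so the recursion terminates after polynomially many steps.

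The hard part is the peeling step itself: in a connected claw-free graph with $\wmax \le W$ and $w(G) > 3W$, extracting a connected $P$ of weight in $(W, 2W]$ with $G - P$ still connected. This is precisely where claw-freeness is indispensable, and it is supplied by the vertex-partitioning algorithm of Borndörfer et al.~\cite{DBLP:conf/approx/BorndorferCINSZ21}. The intuition I would follow is a spanning-tree argument: root a spanning tree $T$ of $G$ and take a deepest vertex $v$ with $w(T_v) > W$, so every child subtree of $v$ has weight at most $W$. If $w(T_v) \le 2W$, then $P = V(T_v)$ works immediately, since $T - T_v$, and hence $G - P$, stays connected. The obstructive case is $w(T_v) > 2W$, where simply deleting $v$ would scatter its child subtrees; here one exploits that $N(v)$ contains no independent set of size $3$ (no induced $K_{1,3}$) to regroup the child subtrees of $v$ into a connected block of weight in $(W, 2W]$ whose removal preserves connectivity. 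I expect this case to be the main technical obstacle, and I would invoke \cite{DBLP:conf/approx/BorndorferCINSZ21} to carry it out in polynomial time rather than re-deriving it.

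Assembling these pieces yields the $W$-BCD $\chrf$ with $C = H = \varnothing$ and at most one block of size larger than $2W$, computed in polynomial time, which is exactly the statement; combined with \cref{thm::ClawFreeGeneral} it then certifies that \2alg2kw runs correctly on claw-free graphs.
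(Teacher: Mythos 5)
Your proposal is correct and takes essentially the same route as the paper: both reduce the lemma to a balanced connected partition obtained by repeatedly peeling, via the claw-free DFS-tree argument, a connected set of weight in $(W,2W]$ whose removal keeps the graph connected, and both observe that with $C,H=\varnothing$ only condition~5 of \cref{def:lccd} survives; the only structural difference is cosmetic, as you stop peeling once the remainder weighs at most $3W$ and keep it as the single possibly-large block, whereas the paper peels down to a remainder of weight at most $W$ and merges it into an adjacent block to form that block. One caveat: you cannot quite invoke \cite{DBLP:conf/approx/BorndorferCINSZ21} as a black box for the peeling step, since (as the paper itself remarks) their statement works under $\wmax \leq W+1$ and only guarantees a peeled set of weight at most $2W+1$, which would permit several blocks of size larger than $2W$; the paper therefore re-proves the peeling step as \cref{lemma::findConSub} under the stronger hypothesis $\wmax \leq W$ to obtain the bound $2W$. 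Your own sketch of the tree argument (a deepest vertex $v$ with $w(T_v) > W$, at most two children by claw-freeness, and in the heavy case an edge from the parent of $v$ to one of its children forced by the absence of an induced $K_{1,3}$) does deliver exactly this $2W$ bound, so your proof goes through once that step is written out rather than cited.
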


To prove \cref{lemma::clawFreeBCD}, we use the connected vertex partition results from Borndörfer et al.~\cite{DBLP:conf/approx/BorndorferCINSZ21} for claw-free graphs. These results cannot be applied directly, as we work with a slightly weaker condition that yields a better outcome, which we explain shortly. Nevertheless, the subsequent proofs remain identical. The first property of claw-free graphs that we use is given in the following lemma.

\begin{lemma}
	\label{lemma:bounded_degree}
	Let $G$ be a claw-free graph and $T_r$ a DFS-tree rooted at $r \in V(G)$.
	Each vertex of $T_r$ has at most $2$ child vertices. 
\end{lemma}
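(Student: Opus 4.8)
The plan is to argue by contradiction, exploiting the single structural fact about depth-first search that makes claw-freeness bite: in an undirected DFS-tree there are no cross edges. Concretely, I would first recall (or briefly re-derive) the standard property that every edge of $G$ is either a tree edge of $T_r$ or a back edge joining a vertex to one of its ancestors in $T_r$; there is no edge between two vertices lying in distinct subtrees hanging off a common vertex. This is the only nontrivial ingredient, and everything else is a one-line consequence.

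Suppose toward a contradiction that some vertex $v$ of $T_r$ has at least three children $c_1, c_2, c_3$. The key claim is that $\{c_1,c_2,c_3\}$ is an independent set in $G$. Indeed, fix $i \neq j$ and consider whether $c_i c_j \in E$. Since $c_i$ and $c_j$ are both children of the same vertex $v$, neither is an ancestor of the other in $T_r$, so $c_i c_j$ can be neither a tree edge nor a back edge. By the no-cross-edge property it is therefore not an edge of $G$ at all. (Equivalently, in discovery-order terms: if $c_i$ is discovered before $c_j$ and $c_i c_j$ were an edge, then $c_j$ would have been discovered while exploring the subtree rooted at $c_i$, making it a descendant of $c_i$ rather than a child of $v$.) Hence $c_1, c_2, c_3$ are pairwise non-adjacent.

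Now I would simply assemble the forbidden induced subgraph. Each $c_i$ is a child of $v$, so $vc_i \in E$ for $i \in \{1,2,3\}$, while the $c_i$ are mutually non-adjacent by the claim. Thus the four vertices $v, c_1, c_2, c_3$ induce a copy of $K_{1,3}$ in $G$ with center $v$, contradicting the assumption that $G$ is claw-free. Therefore no vertex of $T_r$ can have three or more children, i.e.\ every vertex has at most two child vertices, as required.

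I do not anticipate a genuine obstacle here; the only point that merits care is the justification of the no-cross-edge property, which is where the use of a \emph{DFS}-tree (as opposed to an arbitrary spanning tree) is essential. I would state it cleanly as a cited or standard lemma about depth-first search on undirected graphs and keep the claw-freeness step as the short concluding contradiction.
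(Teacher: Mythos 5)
Your proof is correct and follows essentially the same route as the paper's: assume a vertex $v$ with three children, note that children of a common vertex in a DFS-tree are pairwise non-adjacent (the paper asserts this directly, you justify it via the standard no-cross-edge property of depth-first search on undirected graphs), and conclude that $G[\{v,c_1,c_2,c_3\}]$ is an induced $K_{1,3}$, contradicting claw-freeness. The only difference is that you make explicit the DFS fact the paper takes for granted, which is a sound expository choice rather than a different argument.
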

\begin{proof}
	Assume that $T_r$ has a vertex $v$ with $m > 2$ child vertices $v_1, \dots, v_m$. Since $T_r$ is a DFS-tree, we know that $v_i v_j \notin E(G)$ for $i \neq j$. Thus, without loss of generality, the induced subgraph $G[{v, v_1, v_2, v_3}]$ forms a claw, which contradicts the assumption that $G$ is claw-free.
\end{proof}

With \cref{lemma:bounded_degree} established, we can remove a connected vertex set $S$ with weight between $W+1$ and $2W$ without disconnecting the graph, ensuring that the remaining graph remains claw-free.
The difference in the precondition of the following lemma, compared to \cite{DBLP:conf/approx/BorndorferCINSZ21}, lies in our assumption that $\wmax < W+1$ rather than $\wmax \leq W+1$. 
This enables $w(S) \leq 2W$ instead of $w(S) \leq 2W+1$.

\begin{lemma}
	\label{lemma::findConSub}
	Let $G=(V,E,w \colon V \to \N)$ be a claw-free, vertex weighted and connected graph.
	Let $W \in \N$ be a parameter, where $w(V(G)) \geq W+1 > w_{\max}$.
	There is a polynomial time algorithm that finds a connected vertex set $S$, such that $W+1 \leq w(S) \leq 2W$ and
	$G-S$ is connected. 
\end{lemma}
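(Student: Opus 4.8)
The plan is to first dispose of the easy regime. If $w(V) \le 2W$ we simply output $S := V$: then $w(S) = w(V) \in [W+1, 2W]$ by the hypothesis $w(V) \ge W+1$, and $G - S$ is the (trivially connected) empty graph. So from now on I assume $w(V) > 2W$; this is exactly the case where the output $S$ must be a \emph{proper} subset, so the whole difficulty is to guarantee that $G-S$ stays connected and nonempty while hitting the weight window $[W+1,2W]$.

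First I would fix an arbitrary root $r$ and a DFS-tree $T_r$; by \cref{lemma:bounded_degree} every vertex has at most two children. Writing $T_v$ for the subtree rooted at $v$, I would choose $v$ to be a \emph{deepest} vertex with $w(T_v) \ge W+1$, which exists since $w(T_r) = w(V) \ge W+1$. By maximality of the depth, every child $c$ of $v$ has $w(T_c) \le W$, and since $w(v) \le \wmax \le W < W+1 \le w(T_v)$ the vertex $v$ has at least one child. I would then branch on $w(T_v)$. If $w(T_v) \le 2W$, then necessarily $v \ne r$ (otherwise $w(V) = w(T_r) \le 2W$, contradicting the standing assumption), so $S := T_v$ is connected, has weight in $[W+1, 2W]$, and $G - S$ is connected because removing a full subtree hanging at a non-root vertex leaves the remaining spanning subtree connected.

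The interesting case is $w(T_v) > 2W$, and I expect the connectivity of $G-S$ to be the main obstacle here. In this case $v$ must have exactly two children $c_1, c_2$ (a single child would give $w(T_v) = w(v) + w(T_{c_1}) \le 2W$), both light, i.e. $w(T_{c_i}) \le W$. Now removing all of $T_v$ is too heavy, removing one child subtree is too light ($\le W$), and removing $v$ together with one child subtree threatens to strand the other child subtree, which in the tree is attached to the rest of the graph only through $v$. The key idea to resolve this is to invoke claw-freeness precisely at $v$: by the DFS property used in \cref{lemma:bounded_degree} the children $c_1, c_2$ are non-adjacent, so if $v$ has a parent $p$ then $\{p, c_1, c_2\} \subseteq N(v)$ cannot be independent and hence $p$ is adjacent to one child, say (WLOG) $p \sim c_1$. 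I would then output $S := \{v\} \cup T_{c_2}$: it is connected, and $G - S$ is connected since $T_{c_1}$ remains attached to the survivors through the back edge $c_1 p$ while $V \setminus T_v$ (which contains $p$ and $r$) is connected; its weight is $w(v) + w(T_{c_2}) = w(T_v) - w(T_{c_1}) > 2W - W = W$, hence in $[W+1, 2W]$.

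Finally I would handle the remaining subcase $v = r$, where no parent $p$ is available. But then $r$ has exactly the two children $c_1, c_2$ with $w(r) + w(T_{c_1}) + w(T_{c_2}) = w(V) > 2W$, so $S := \{r\} \cup T_{c_1}$ is connected with $w(S) = w(V) - w(T_{c_2}) > 2W - W = W$, i.e. $w(S) \in [W+1, 2W]$, and $G - S = T_{c_2}$ is connected. This exhausts all cases. For the running time I would observe that building the DFS-tree, computing all subtree weights, locating $v$, and performing the single adjacency test are all polynomial. I would also point out that every upper bound above uses $w(\cdot) \le W$ through the strict hypothesis $\wmax < W+1$; this is exactly where the improvement from $2W+1$ to $2W$ over \cite{DBLP:conf/approx/BorndorferCINSZ21} comes from, since with the weaker $\wmax \le W+1$ the same estimates would only yield $2W+1$.
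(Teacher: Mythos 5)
Your proof is correct and follows essentially the same route as the paper's: a DFS tree with branching at most two by claw-freeness, selection of a deepest vertex $v$ with $w(T_v)\geq W+1$ (so all child subtrees weigh at most $W$), and, in the heavy case $w(T_v)>2W$, invoking claw-freeness at $v$ to obtain an edge from the parent to one child so that removing $v$ together with the other child's subtree leaves $G-S$ connected. The only differences are cosmetic: you dispose of the regime $w(V)\leq 2W$ by taking $S=V$ up front (the paper implicitly covers this via $v=r$ in its first case), and you correctly pinpoint that the strict hypothesis $\wmax < W+1$ is exactly what tightens the upper bound to $2W$ instead of $2W+1$.
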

\begin{proof}
	Let $T_r$ be a DFS-tree in $G$ rooted at $r \in V$. For a vertex $v \in V$, define $T_v$ as the subtree rooted at $v$. By \cref{lemma:bounded_degree}, every vertex in $T_r$ has at most two child vertices. Since $w(V(G)) \geq W+1$, there exists a vertex $v$ with at most two child vertices $v_1$ and $v_2$ in $T_r$ such that $w(T_v) \geq \lambda$ while $w(T_{v_i}) < \lambda$. Such a vertex can be easily found by a bottom-up traversal from the leaves in polynomial time. 
	
	If $W+1 \leq w(T_v) \leq 2W$, we set $S=V(T_v)$ and are done. Otherwise, if $w(T_v) > 2W$, a simple calculation shows that $v$ must have two child vertices, as $w(v) \leq W$ and $w(T_{v_i}) \leq W$. 
	
	If $v=r$, we choose $S = \{r\} \cup V(T_{v_1})$. The induced subgraph of $S$ is connected with 
	$w(S) = w(v) + w(T_{v_1}) < w_{\max} + W \leq 2W$ and
	$w(S) = w(T_v) - w(T_{v_2}) > 2W - W = W$.
	Moreover, $G - S$ is connected.
	
	Now, consider $v \neq r$, and let $u$ be the parent of $v$ in $T_r$. Then $T_r[\{u, v, v_1, v_2\}]$ forms a claw, i.e., a $K_{1,3}$. Since $G$ is claw-free, there exists an edge $uv_j \in E \setminus E(T_r)$ for at least one $j \in \{1, 2\}$. Without loss of generality, let $j = 2$. We set $S = V(T_v) \setminus V(T_{v_2})$ as a connected vertex set and obtain, analogous to the case $v = r$, the desired weight conditions for $S$. Finally, the existence of $uv_2 \in E$ ensures that $G - S$ is connected.
\end{proof}

Using \cref{lemma::findConSub}, we can prove \cref{lemma::clawFreeBCD}.
Together with this result and \cref{thm::ClawFreeGeneral}, we conclude the proof of \cref{thm::COC}.

\begin{proof}[Proof of \cref{lemma::clawFreeBCD}]
	First, observe that a claw-free graph remains claw-free after vertex deletion. Therefore, we can exhaustively apply \cref{lemma::findConSub} with $\lambda = W+1$ to obtain a connected vertex partition $\mathcal S = \{S_1, \dots, S_m\}$, where $W+1 \leq w(S_i) \leq 2W$ for $i \in [m-1]$, and either $W+1 \leq w(S_m) \leq 2W$ or $w(S_m) \leq W$. 
	
	In the former case, we can construct the desired $W$-BCD with $C, H = \varnothing$ and $\calR = \mathcal S$. In the latter case, $S_m$ must be connected to some $S_j$ for $j \in [m-1]$. We merge $S_m$ and $S_j$ into a single connected vertex set, denoted $S'$. Note that $w(S') = w(S_j) + w(S_m) \leq 3W$. Consequently, setting $C, H = \varnothing$ and $\calR = \left(\mathcal S \setminus \{S_j, S_m\}\right) \cup \{S'\}$ yields the desired $W$-BCD.
\end{proof}


\bibliography{literature}
\appendix
\section{Proof of Theorem~\ref{thm::dbExapnsion}}\label{app:a}

The proof of Theorem~\ref{thm::dbExapnsion} can be derived by slightly modifying the approach for computing a balanced expansion, defined as a \dbe $(A_1, A_2, y, f, q)$ for $q \in \N$, in the work of Casel et al.~\cite{DBLP:conf/esa/Casel0INZ21} (Theorem 2). The full proof is provided in the arXiv version~\cite{casel2020balanced}. The only variation here is the allowance for different demands on $A$. Specifically, we need to address the proof of Lemma 4 in \cite{casel2020balanced}, which represents the intermediate stage of a balanced expansion, termed fractional balanced expansion, where components in $B$ may be partially assigned to $A$. 

We first present a generalized definition for a fractional demanded balanced expansion, followed by the necessary adjustments to Lemma 4 in \cite{casel2020balanced}.

\begin{definition}[fractional demanded balanced expansion] 
\label{def::demBalDec2}
For a graph $G=(A \cup B,E,w)$,
a partition $A_1 \cup A_2$ of $H$,
a function $g \colon A \times \comp(G[B]) \to H$,
demands $D=\{d_a\}_{a \in A}$ with $d_a \in \N$ for each $a \in A$
and $y \in \N$
the tuple $(A_1,A_2,y,f,D)$ is a demanded balanced expansion if
\begin{enumerate}
\item $w\left(a\right) + \sum_{Q \in \comp(G[B])} g\left(a,Q\right)\, \begin{cases} > \ d_a, & a\in A_1\\ 
\leq \ d_a, & a\in A_2\end{cases}$  
\item $\forall Q \in \comp(G[B]) \colon \sum_{a \in A} g\left(a,Q\right) \leq w\left(b\right)$ 
\item if $g(a,Q) \ne 0$, then $a \in N(Q)$ for $Q \in \comp(G[B])$ and $a \in A$,
\item $N(Q) \subseteq A_1$ for $Q \in \comp(G[B])$ if either $\sum_{a \in A_1} g(a,Q) > 0$, or $\sum_{a \in A} g(a,Q) < w(Q)$
\end{enumerate}
\end{definition}

\paragraph*{Modifications on Lemma 4 in \cite{casel2020balanced}}
The first step in the proof of Lemma 4 in the paper \cite{casel2020balanced} is to construct a flow network $N = \left(A \cup \comp(G[B]) \cup \left\{s,t\right\}, \overrightarrow{E}, c\right)$.
We do the same with respect to the demands $\{d_a\}_{a \in A}$.
While reading, we recommend a comparison with the illustration in \cref{pic::MaxFlowNetwork} (right).

\begin{figure}[H]
	\begin{center}
		\begin{tikzpicture}[scale=0.8,y=0.80pt, x=0.80pt, inner sep=0pt, outer sep=0pt]

			\path[draw=black, line width=0.25mm, -latex] (-100,15)--(-2,60);
			\path[draw=black, line width=0.25mm, -latex] (-100,15)--(-2,30);
			\path[draw=black, line width=0.25mm, -latex] (-100,15)--(-2,0);
			\path[draw=black, line width=0.25mm, -latex] (-100,15)--(-2,-30);
			\path[draw=black, line width=0.25mm, -latex] (-100,15)--(-2,45);
			\path[draw=black, line width=0.25mm, -latex] (-100,15)--(-2,15);
			\path[draw=black, line width=0.25mm, -latex] (-100,15)--(-2,-15);

			\path[draw=black, line width=0.25mm, latex-] (198,21)--(102,60);
			\path[draw=black, line width=0.25mm, latex-] (197,18)--(102,30);
			\path[draw=black, line width=0.25mm, latex-] (197,15)--(102,0);
			\path[draw=black, line width=0.25mm, latex-] (198,12)--(102,-30);

			\path[draw=black, line width=0.25mm, -latex] (0,60)--(98,60);	
			\path[draw=black, line width=0.25mm, -latex] (0,45)--(97,60);			
			\path[draw=black, line width=0.25mm, -latex] (0,30)--(97,59);
			\path[draw=black, line width=0.25mm, -latex] (0,30)--(98,30);
			\path[draw=black, line width=0.25mm, -latex] (0,15)--(98,30);
			\path[draw=black, line width=0.25mm, -latex] (0,0)--(98,-28);
			\path[draw=black, line width=0.25mm, -latex] (0,-30)--(98,58);
			\path[draw=black, line width=0.25mm, -latex] (0,-30)--(97,0);
			\path[draw=black, line width=0.25mm, -latex] (0,-15)--(98,30);
			\path[draw=black, line width=0.25mm, -latex] (0,-15)--(97,-31);

			\path[fill=black!60!, line width=0.25mm] (0,60) circle (0.09cm);
			\path[fill=black!60!, line width=0.25mm] (0,45) circle (0.09cm);
			\path[fill=black!60!, line width=0.25mm] (0,30) circle (0.09cm);
			\path[fill=black!60!, line width=0.25mm] (0,15) circle (0.09cm);
			\path[fill=black!60!, line width=0.25mm] (0,0) circle (0.09cm);
			\path[fill=black!60!, line width=0.25mm] (0,-15) circle (0.09cm);			
			\path[fill=black!60!, line width=0.25mm] (0,-30) circle (0.09cm);
			\path[fill=black, line width=0.25mm] (100,60) circle (0.09cm);
			\path[fill=black, line width=0.25mm] (100,30) circle (0.09cm);
			\path[fill=black, line width=0.25mm] (100,0) circle (0.09cm);			
			\path[fill=black, line width=0.25mm] (100,-30) circle (0.09cm);
			\path[fill=black, line width=0.25mm] (200,16) circle (0.09cm);
			\path[fill=black, line width=0.25mm] (-100,15) circle (0.09cm);	
			
			\node at (-100,30) {$s$};
			\node at (200,30) {$t$};
			\node at (0,75) {$B$};
			\node at (100,75) {$A$};
			\node at (-60,55) {$w(Q)$};
			\node at (170,60) {$d_a-w(a)$};
			\node at (50,-45) {$w(Q)$};
		\end{tikzpicture} 
		
	\end{center}
	\caption{Embedding of the graph in the corresponding flow network with capacities, where $Q \in \comp(G[B])$. }
	\label{pic::MaxFlowNetwork}
\end{figure}

We add the vertices $s,t$, $s$ as source and $t$ as sink, and arcs $\overrightarrow{E}$ with a capacity function $c\colon\overrightarrow{E} \to \mathbb{N}$ defined as follows:
connect every $Q \in \comp(G[B])$ through an arc $\overrightarrow{sQ}$ with capacity $|w(Q)|$ and connect every $a \in A$ through an arc $\overrightarrow{at}$ with capacity $d_a$.
As a note: The difference is at this point, in the original proof the capacity is $q \in \N$ in $\overrightarrow{at}$ for every $a \in A$.
Moreover, for every $Q \in \comp(G[B])$ and $a \in A$ add an arc $\ora{Qa}$ to $\ora{E}$ if $a \in N(Q)$ with capacity $|w(Q)|$.
The rest of the proof works identically.
As a note: 
By calculating different maximum $s$-$t$-flows in $N$, we generate the function $g$ with respect to $A_1$ and $A_2$.
The loss of $y = \max_{Q \in \comp(G[B])} w(Q)$ of the assignments via $g$ compared to the function $f$ in the required demanded balanced expansion results from the transition from a fractional demanded balanced expansion to a non-fractional one.

The last part of the theorem, i.e.~if there is an $A' \subseteq A$ with $w(A') + w(V(\B_{A'}) \geq \sum_{a \in A'} d_a$, then $A_1 \neq \varnothing$, originally, if $w(A) + w(B) \geq q|A|$, then $A_1 \neq \varnothing$, also holds for subsets of $A$, i.e.~if $w(A') + w(V(\B_{A'}) \geq q|A'|$, then $A_1 \neq \varnothing$, which is easily derivable.

\end{document}